\chardef\@x10\chardef\@xv60
\def\tcitime{
\def\@time{%
  \@minute\time\@hour\@minute\divide\@hour\@xv
  \ifnum\@hour<\@x 0\fi\the\@hour:%
  \multiply\@hour\@xv\advance\@minute-\@hour
  \ifnum\@minute<\@x 0\fi\the\@minute
  }}%
\def\QCTOpt[#1]#2{%
  \def\QCTOptB{#1}
  \def\QCTOptA{#2}
}
\def\QCTNOpt#1{%
  \def\QCTOptA{#1}
  \let\QCTOptB\empty
}
\def\Qct{%
  \@ifnextchar[{%
    \QCTOpt}{\QCTNOpt}
}
\def\QCBOpt[#1]#2{%
  \def\QCBOptB{#1}
  \def\QCBOptA{#2}
}
\def\QCBNOpt#1{%
  \def\QCBOptA{#1}
  \let\QCBOptB\empty
}
\def\Qcb{%
  \@ifnextchar[{%
    \QCBOpt}{\QCBNOpt}
}
\def\PrepCapArgs{%
  \ifx\QCBOptA\empty
    \ifx\QCTOptA\empty
      {}%
    \else
      \ifx\QCTOptB\empty
        {\QCTOptA}%
      \else
        [\QCTOptB]{\QCTOptA}%
      \fi
    \fi
  \else
    \ifx\QCBOptA\empty
      {}%
    \else
      \ifx\QCBOptB\empty
        {\QCBOptA}%
      \else
        [\QCBOptB]{\QCBOptA}%
      \fi
    \fi
  \fi
}
\def\GRAPHICSPS#1{%
 \ifcase\GRAPHICSTYPE
   \special{ps: #1}%
 \or
   \special{language "PS", include "#1"}%
 \fi
}%
\def\graffile#1#2#3#4{%
    \leavevmode
    \raise -#4 \BOXTHEFRAME{%
        \hbox to #2{\raise #3\hbox to #2{\null #1\hfil}}}%
}%
\def\draftbox#1#2#3#4{%
 \leavevmode\raise -#4 \hbox{%
  \frame{\rlap{\protect\tiny #1}\hbox to #2%
   {\vrule height#3 width\z@ depth\z@\hfil}%
  }%
 }%
}%
\newif\ifwasdraft
\def\GRAPHIC#1#2#3#4#5{%
 \ifnum\draft=\@ne\draftbox{#2}{#3}{#4}{#5}%
  \else\graffile{#1}{#3}{#4}{#5}%
  \fi
 }%
\def\addtoLaTeXparams#1{%
    \edef\LaTeXparams{\LaTeXparams #1}}%
\newif\ifBoxFrame \BoxFramefalse
\newif\ifOverFrame \OverFramefalse
\newif\ifUnderFrame \UnderFramefalse
\def\BOXTHEFRAME#1{%
   \hbox{%
      \ifBoxFrame
         \frame{#1}%
      \else
         {#1}%
      \fi
   }%
}
\def\doFRAMEparams#1{\BoxFramefalse\OverFramefalse\UnderFramefalse\readFRAMEparams#1\end}%
\def\readFRAMEparams#1{%
 \ifx#1\end%
  \let\next=\relax
  \else
  \ifx#1i\dispkind=\z@\fi
  \ifx#1d\dispkind=\@ne\fi
  \ifx#1f\dispkind=\tw@\fi
  \ifx#1t\addtoLaTeXparams{t}\fi
  \ifx#1b\addtoLaTeXparams{b}\fi
  \ifx#1p\addtoLaTeXparams{p}\fi
  \ifx#1h\addtoLaTeXparams{h}\fi
  \ifx#1X\BoxFrametrue\fi
  \ifx#1O\OverFrametrue\fi
  \ifx#1U\UnderFrametrue\fi
  \ifx#1w
    \ifnum\draft=1\wasdrafttrue\else\wasdraftfalse\fi
    \draft=\@ne
  \fi
  \let\next=\readFRAMEparams
  \fi
 \next
 }%
\def\IFRAME#1#2#3#4#5#6{%
      \bgroup
      \let\QCTOptA\empty
      \let\QCTOptB\empty
      \let\QCBOptA\empty
      \let\QCBOptB\empty
      #6%
      \parindent=0pt%
      \leftskip=0pt
      \rightskip=0pt
      \setbox0 = \hbox{\QCBOptA}%
      \@tempdima = #1\relax
      \ifOverFrame
          \typeout{This is not implemented yet}%
          \show\HELP
      \else
         \ifdim\wd0>\@tempdima
            \advance\@tempdima by \@tempdima
            \ifdim\wd0 >\@tempdima
               \textwidth=\@tempdima
               \setbox1 =\vbox{%
                  \noindent\hbox to \@tempdima{\hfill\GRAPHIC{#5}{#4}{#1}{#2}{#3}\hfill}\\%
                  \noindent\hbox to \@tempdima{\parbox[b]{\@tempdima}{\QCBOptA}}%
               }%
               \wd1=\@tempdima
            \else
               \textwidth=\wd0
               \setbox1 =\vbox{%
                 \noindent\hbox to \wd0{\hfill\GRAPHIC{#5}{#4}{#1}{#2}{#3}\hfill}\\%
                 \noindent\hbox{\QCBOptA}%
               }%
               \wd1=\wd0
            \fi
         \else
            \ifdim\wd0>0pt
              \hsize=\@tempdima
              \setbox1 =\vbox{%
                \unskip\GRAPHIC{#5}{#4}{#1}{#2}{0pt}%
                \break
                \unskip\hbox to \@tempdima{\hfill \QCBOptA\hfill}%
              }%
              \wd1=\@tempdima
           \else
              \hsize=\@tempdima
              \setbox1 =\vbox{%
                \unskip\GRAPHIC{#5}{#4}{#1}{#2}{0pt}%
              }%
              \wd1=\@tempdima
           \fi
         \fi
         \@tempdimb=\ht1
         \advance\@tempdimb by \dp1
         \advance\@tempdimb by -#2%
         \advance\@tempdimb by #3%
         \leavevmode
         \raise -\@tempdimb \hbox{\box1}%
      \fi
      \egroup%
}%
\def\DFRAME#1#2#3#4#5{%
 \begin{center}
     \let\QCTOptA\empty
     \let\QCTOptB\empty
     \let\QCBOptA\empty
     \let\QCBOptB\empty
     \ifOverFrame 
        #5\QCTOptA\par
     \fi
     \GRAPHIC{#4}{#3}{#1}{#2}{\z@}
     \ifUnderFrame 
        \nobreak\par #5\QCBOptA
     \fi
 \end{center}%
 }%
\def\FFRAME#1#2#3#4#5#6#7{%
 \begin{figure}[#1]%
  \let\QCTOptA\empty
  \let\QCTOptB\empty
  \let\QCBOptA\empty
  \let\QCBOptB\empty
  \ifOverFrame
    #4
    \ifx\QCTOptA\empty
    \else
      \ifx\QCTOptB\empty
        \caption{\QCTOptA}%
      \else
        \caption[\QCTOptB]{\QCTOptA}%
      \fi
    \fi
    \ifUnderFrame\else
      \label{#5}%
    \fi
  \else
    \UnderFrametrue%
  \fi
  \begin{center}\GRAPHIC{#7}{#6}{#2}{#3}{\z@}\end{center}%
  \ifUnderFrame
    #4
    \ifx\QCBOptA\empty
      \caption{}%
    \else
      \ifx\QCBOptB\empty
        \caption{\QCBOptA}%
      \else
        \caption[\QCBOptB]{\QCBOptA}%
      \fi
    \fi
    \label{#5}%
  \fi
  \end{figure}%
 }%
\def\makeactives{
  \catcode`\"=\active
  \catcode`\;=\active
  \catcode`\:=\active
  \catcode`\'=\active
  \catcode`\~=\active
}
   \gdef\activesoff{%
      \def"{\string"}
      \def;{\string;}
      \def:{\string:}
      \def'{\string'}
      \def~{\string~}
    }
\def\FRAME#1#2#3#4#5#6#7#8{%
 \bgroup
 \@ifundefined{bbl@deactivate}{}{\activesoff}
 \ifnum\draft=\@ne
   \wasdrafttrue
 \else
   \wasdraftfalse%
 \fi
 \def\LaTeXparams{}%
 \dispkind=\z@
 \def\LaTeXparams{}%
 \doFRAMEparams{#1}%
 \ifnum\dispkind=\z@\IFRAME{#2}{#3}{#4}{#7}{#8}{#5}\else
  \ifnum\dispkind=\@ne\DFRAME{#2}{#3}{#7}{#8}{#5}\else
   \ifnum\dispkind=\tw@
    \edef\@tempa{\noexpand\FFRAME{\LaTeXparams}}%
    \@tempa{#2}{#3}{#5}{#6}{#7}{#8}%
    \fi
   \fi
  \fi
  \ifwasdraft\draft=1\else\draft=0\fi{}%
  \egroup
 }%
\def\TEXUX#1{"texux"}
\long\def\QQQ#1#2{%
     \long\expandafter\def\csname#1\endcsname{#2}}%
\long\def\QQA#1#2{}%
\def\QTR#1#2{{\csname#1\endcsname #2}}
\def\EXPAND#1[#2]#3{}%
\def\NOEXPAND#1[#2]#3{}%
\def\LaTeXparent#1{}%
\def\ChildStyles#1{}%
\def\ChildDefaults#1{}%
\def\QTagDef#1#2#3{}%
\def\QQfnmark#1{\footnotemark}
\def\makeatletter\input gnuindex.sty\makeatother\makeindex{\makeatletter\input gnuindex.sty\makeatother\makeindex}%
\def\initial#1{\bigbreak{\raggedright\large\bf #1}\kern 2\p@\penalty3000}}%
 \def\abstract{%
  \if@twocolumn
   \section*{Abstract (Not appropriate in this style!)}%
   \else \small 
   \begin{center}{\bf Abstract\vspace{-.5em}\vspace{\z@}}\end{center}%
   \quotation 
   \fi
  }%
   \def\registered{\relax\ifmmode{}\r@gistered
                    \else$\m@th\r@gistered$\fi}%
 \def\r@gistered{^{\ooalign
  {\hfil\raise.07ex\hbox{$\scriptstyle\rm\text{R}$}\hfil\crcr
  \mathhexbox20D}}}}{}%
\def\TEXTsymbol#1{\mbox{$#1$}}%
\newdimen\theight
\def\Column{%
 \vadjust{\setbox\z@=\hbox{\scriptsize\quad\quad tcol}%
  \theight=\ht\z@\advance\theight by \dp\z@\advance\theight by \lineskip
  \kern -\theight \vbox to \theight{%
   \rightline{\rlap{\box\z@}}%
   \vss
   }%
  }%
 }%
\def\qed{%
 \ifhmode\unskip\nobreak\fi\ifmmode\ifinner\else\hskip5\p@\fi\fi
 \hbox{\hskip5\p@\vrule width4\p@ height6\p@ depth1.5\p@\hskip\p@}%
 }%
\def\miss{\hbox{\vrule height2\p@ width 2\p@ depth\z@}}%
\def\tcol#1{{\baselineskip=6\p@ \vcenter{#1}} \Column}  %
\def\newfmtname{LaTeX2e}
\def\chkcompat{%
   \if@compatibility
   \else
     \usepackage{latexsym}
   \fi
}
  \DeclareOldFontCommand{\rm}{\normalfont\rmfamily}{\mathrm}
  \DeclareOldFontCommand{\sf}{\normalfont\sffamily}{\mathsf}
  \DeclareOldFontCommand{\tt}{\normalfont\ttfamily}{\mathtt}
  \DeclareOldFontCommand{\bf}{\normalfont\bfseries}{\mathbf}
  \DeclareOldFontCommand{\it}{\normalfont\itshape}{\mathit}
  \DeclareOldFontCommand{\sl}{\normalfont\slshape}{\@nomath\sl}
  \DeclareOldFontCommand{\sc}{\normalfont\scshape}{\@nomath\sc}
\def\alpha{{\Greekmath 010B}}%
\def\beta{{\Greekmath 010C}}%
\def\gamma{{\Greekmath 010D}}%
\def\delta{{\Greekmath 010E}}%
\def\epsilon{{\Greekmath 010F}}%
\def\zeta{{\Greekmath 0110}}%
\def\eta{{\Greekmath 0111}}%
\def\theta{{\Greekmath 0112}}%
\def\iota{{\Greekmath 0113}}%
\def\kappa{{\Greekmath 0114}}%
\def\lambda{{\Greekmath 0115}}%
\def\mu{{\Greekmath 0116}}%
\def\nu{{\Greekmath 0117}}%
\def\xi{{\Greekmath 0118}}%
\def\pi{{\Greekmath 0119}}%
\def\rho{{\Greekmath 011A}}%
\def\sigma{{\Greekmath 011B}}%
\def\tau{{\Greekmath 011C}}%
\def\upsilon{{\Greekmath 011D}}%
\def\phi{{\Greekmath 011E}}%
\def\chi{{\Greekmath 011F}}%
\def\psi{{\Greekmath 0120}}%
\def\omega{{\Greekmath 0121}}%
\def\varepsilon{{\Greekmath 0122}}%
\def\vartheta{{\Greekmath 0123}}%
\def\varpi{{\Greekmath 0124}}%
\def\varrho{{\Greekmath 0125}}%
\def\varsigma{{\Greekmath 0126}}%
\def\varphi{{\Greekmath 0127}}%
\def\nabla{{\Greekmath 0272}}
\def\FindBoldGroup{%
   {\setbox0=\hbox{$\mathbf{x\global\edef\theboldgroup{\the\mathgroup}}$}}%
}
\def\Greekmath#1#2#3#4{%
    \if@compatibility
        \ifnum\mathgroup=\symbold
           \mathchoice{\mbox{\boldmath$\displaystyle\mathchar"#1#2#3#4$}}%
                      {\mbox{\boldmath$\textstyle\mathchar"#1#2#3#4$}}%
                      {\mbox{\boldmath$\scriptstyle\mathchar"#1#2#3#4$}}%
                      {\mbox{\boldmath$\scriptscriptstyle\mathchar"#1#2#3#4$}}%
        \else
           \mathchar"#1#2#3#4%
        \fi 
    \else 
        \FindBoldGroup
        \ifnum\mathgroup=\theboldgroup 
           \mathchoice{\mbox{\boldmath$\displaystyle\mathchar"#1#2#3#4$}}%
                      {\mbox{\boldmath$\textstyle\mathchar"#1#2#3#4$}}%
                      {\mbox{\boldmath$\scriptstyle\mathchar"#1#2#3#4$}}%
                      {\mbox{\boldmath$\scriptscriptstyle\mathchar"#1#2#3#4$}}%
        \else
           \mathchar"#1#2#3#4%
        \fi     	    
	  \fi}
\newif\ifGreekBold  \GreekBoldfalse
\let\SAVEPBF=\pbf
\def\pbf{\GreekBoldtrue\SAVEPBF}%
  \newcounter{equationnumber}  
  \def\mathletters{%
     \addtocounter{equation}{1}
     \edef\@currentlabel{\theequation}%
     \setcounter{equationnumber}{\c@equation}
     \setcounter{equation}{0}%
     \edef\theequation{\@currentlabel\noexpand\alph{equation}}%
  }
    \def\BibTeX{{\rm B\kern-.05em{\sc i\kern-.025em b}\kern-.08em
                 T\kern-.1667em\lower.7ex\hbox{E}\kern-.125emX}}}{}%
\def\AmS{{\protect\usefont{OMS}{cmsy}{m}{n}%
                A\kern-.1667em\lower.5ex\hbox{M}\kern-.125emS}}}{}%
\let\DOTSI\relax
\def\RIfM@{\relax\ifmmode}%
\def\FN@{\futurelet\next}%
\def\iint{\DOTSI\intno@\tw@\FN@\ints@}%
\def\iiint{\DOTSI\intno@\thr@@\FN@\ints@}%
\def\iiiint{\DOTSI\intno@4 \FN@\ints@}%
\def\idotsint{\DOTSI\intno@\z@\FN@\ints@}%
\def\ints@{\findlimits@\ints@@}%
\newif\iflimtoken@
\newif\iflimits@
\def\findlimits@{\limtoken@true\ifx\next\limits\limits@true
 \else\ifx\next\nolimits\limits@false\else
 \limtoken@false\ifx\ilimits@\nolimits\limits@false\else
 \ifinner\limits@false\else\limits@true\fi\fi\fi\fi}%
\def\multint@{\int\ifnum\intno@=\z@\intdots@                          
 \else\intkern@\fi                                                    
 \ifnum\intno@>\tw@\int\intkern@\fi                                   
 \ifnum\intno@>\thr@@\int\intkern@\fi                                 
 \int}
\def\multintlimits@{\intop\ifnum\intno@=\z@\intdots@\else\intkern@\fi
 \ifnum\intno@>\tw@\intop\intkern@\fi
 \ifnum\intno@>\thr@@\intop\intkern@\fi\intop}%
\def\intic@{%
    \mathchoice{\hskip.5em}{\hskip.4em}{\hskip.4em}{\hskip.4em}}%
\def\negintic@{\mathchoice
 {\hskip-.5em}{\hskip-.4em}{\hskip-.4em}{\hskip-.4em}}%
\def\ints@@{\iflimtoken@                                              
 \def\ints@@@{\iflimits@\negintic@
   \mathop{\intic@\multintlimits@}\limits                             
  \else\multint@\nolimits\fi                                          
  \eat@}
 \else                                                                
 \def\ints@@@{\iflimits@\negintic@
  \mathop{\intic@\multintlimits@}\limits\else
  \multint@\nolimits\fi}\fi\ints@@@}%
\def\intkern@{\mathchoice{\!\!\!}{\!\!}{\!\!}{\!\!}}%
\def\plaincdots@{\mathinner{\cdotp\cdotp\cdotp}}%
\def\intdots@{\mathchoice{\plaincdots@}%
 {{\cdotp}\mkern1.5mu{\cdotp}\mkern1.5mu{\cdotp}}%
 {{\cdotp}\mkern1mu{\cdotp}\mkern1mu{\cdotp}}%
 {{\cdotp}\mkern1mu{\cdotp}\mkern1mu{\cdotp}}}%
\def\RIfM@{\relax\protect\ifmmode}
\def\text{\RIfM@\expandafter\text@\else\expandafter\mbox\fi}
\let\nfss@text\text
\def\text@#1{\mathchoice
   {\textdef@\displaystyle\f@size{#1}}%
   {\textdef@\textstyle\tf@size{\firstchoice@false #1}}%
   {\textdef@\textstyle\sf@size{\firstchoice@false #1}}%
   {\textdef@\textstyle \ssf@size{\firstchoice@false #1}}%
   \glb@settings}
\def\textdef@#1#2#3{\hbox{{%
                    \everymath{#1}%
                    \let\f@size#2\selectfont
                    #3}}}
\newif\iffirstchoice@
\def\Let@{\relax\iffalse{\fi\let\\=\cr\iffalse}\fi}%
\def\vspace@{\def\vspace##1{\crcr\noalign{\vskip##1\relax}}}%
\def\multilimits@{\bgroup\vspace@\Let@
 \baselineskip\fontdimen10 \scriptfont\tw@
 \advance\baselineskip\fontdimen12 \scriptfont\tw@
 \lineskip\thr@@\fontdimen8 \scriptfont\thr@@
 \lineskiplimit\lineskip
 \vbox\bgroup\ialign\bgroup\hfil$\m@th\scriptstyle{##}$\hfil\crcr}%
\def\Sb{_\multilimits@}%
\def\endSb{\crcr\egroup\egroup\egroup}%
\def\Sp{^\multilimits@}%
\newdimen\ex@
\def\rightarrowfill@#1{$#1\m@th\mathord-\mkern-6mu\cleaders
 \hbox{$#1\mkern-2mu\mathord-\mkern-2mu$}\hfill
 \mkern-6mu\mathord\rightarrow$}%
\def\leftarrowfill@#1{$#1\m@th\mathord\leftarrow\mkern-6mu\cleaders
 \hbox{$#1\mkern-2mu\mathord-\mkern-2mu$}\hfill\mkern-6mu\mathord-$}%
\def\leftrightarrowfill@#1{$#1\m@th\mathord\leftarrow
\mkern-6mu\cleaders
 \hbox{$#1\mkern-2mu\mathord-\mkern-2mu$}\hfill
 \mkern-6mu\mathord\rightarrow$}%
\def\overrightarrow{\mathpalette\overrightarrow@}%
\def\overrightarrow@#1#2{\vbox{\ialign{##\crcr\rightarrowfill@#1\crcr
 \noalign{\kern-\ex@\nointerlineskip}$\m@th\hfil#1#2\hfil$\crcr}}}%
\def\overleftarrow{\mathpalette\overleftarrow@}%
\def\overleftarrow@#1#2{\vbox{\ialign{##\crcr\leftarrowfill@#1\crcr
 \noalign{\kern-\ex@\nointerlineskip}$\m@th\hfil#1#2\hfil$\crcr}}}%
\def\overleftrightarrow{\mathpalette\overleftrightarrow@}%
\def\overleftrightarrow@#1#2{\vbox{\ialign{##\crcr
   \leftrightarrowfill@#1\crcr
 \noalign{\kern-\ex@\nointerlineskip}$\m@th\hfil#1#2\hfil$\crcr}}}%
\def\underrightarrow{\mathpalette\underrightarrow@}%
\def\underrightarrow@#1#2{\vtop{\ialign{##\crcr$\m@th\hfil#1#2\hfil
  $\crcr\noalign{\nointerlineskip}\rightarrowfill@#1\crcr}}}%
\def\underleftarrow{\mathpalette\underleftarrow@}%
\def\underleftarrow@#1#2{\vtop{\ialign{##\crcr$\m@th\hfil#1#2\hfil
  $\crcr\noalign{\nointerlineskip}\leftarrowfill@#1\crcr}}}%
\def\underleftrightarrow{\mathpalette\underleftrightarrow@}%
\def\underleftrightarrow@#1#2{\vtop{\ialign{##\crcr$\m@th
  \hfil#1#2\hfil$\crcr
 \noalign{\nointerlineskip}\leftrightarrowfill@#1\crcr}}}%
\def\qopnamewl@#1{\mathop{\operator@font#1}\nlimits@}
\let\nlimits@\displaylimits
\def\setboxz@h{\setbox\z@\hbox}
\def\varlim@#1#2{\mathop{\vtop{\ialign{##\crcr
 \hfil$#1\m@th\operator@font lim$\hfil\crcr
 \noalign{\nointerlineskip}#2#1\crcr
 \noalign{\nointerlineskip\kern-\ex@}\crcr}}}}
 \def\rightarrowfill@#1{\m@th\setboxz@h{$#1-$}\ht\z@\z@
  $#1\copy\z@\mkern-6mu\cleaders
  \hbox{$#1\mkern-2mu\box\z@\mkern-2mu$}\hfill
  \mkern-6mu\mathord\rightarrow$}
\def\leftarrowfill@#1{\m@th\setboxz@h{$#1-$}\ht\z@\z@
  $#1\mathord\leftarrow\mkern-6mu\cleaders
  \hbox{$#1\mkern-2mu\copy\z@\mkern-2mu$}\hfill
  \mkern-6mu\box\z@$}
\def\projlim{\qopnamewl@{proj\,lim}}
\def\injlim{\qopnamewl@{inj\,lim}}
\def\varinjlim{\mathpalette\varlim@\rightarrowfill@}
\def\varprojlim{\mathpalette\varlim@\leftarrowfill@}
\def\varliminf{\mathpalette\varliminf@{}}
\def\varliminf@#1{\mathop{\underline{\vrule\@depth.2\ex@\@width\z@
   \hbox{$#1\m@th\operator@font lim$}}}}
\def\varlimsup{\mathpalette\varlimsup@{}}
\def\varlimsup@#1{\mathop{\overline
  {\hbox{$#1\m@th\operator@font lim$}}}}
\def\align{\@verbatim \frenchspacing\@vobeyspaces \@alignverbatim
You are using the "align" environment in a style in which it is not defined.}
\let\csname endalign*\endcsname =\endtrivlist
\def\alignat{\@verbatim \frenchspacing\@vobeyspaces \@alignatverbatim
You are using the "alignat" environment in a style in which it is not defined.}
\let\csname endalignat*\endcsname =\endtrivlist
\def\xalignat{\@verbatim \frenchspacing\@vobeyspaces \@xalignatverbatim
You are using the "xalignat" environment in a style in which it is not defined.}
\let\csname endxalignat*\endcsname =\endtrivlist
\def\gather{\@verbatim \frenchspacing\@vobeyspaces \@gatherverbatim
You are using the "gather" environment in a style in which it is not defined.}
\let\csname endgather*\endcsname =\endtrivlist
\def\multiline{\@verbatim \frenchspacing\@vobeyspaces \@multilineverbatim
You are using the "multiline" environment in a style in which it is not defined.}
\let\csname endmultiline*\endcsname =\endtrivlist
\def\arrax{\@verbatim \frenchspacing\@vobeyspaces \@arraxverbatim
You are using a type of "array" construct that is only allowed in AmS-LaTeX.}
\def\tabulax{\@verbatim \frenchspacing\@vobeyspaces \@tabulaxverbatim
You are using a type of "tabular" construct that is only allowed in AmS-LaTeX.}
\let\csname endarrax*\endcsname =\endtrivlist
\let\csname endtabulax*\endcsname =\endtrivlist
\def\@@eqncr{\let\@tempa\relax
    \ifcase\@eqcnt \def\@tempa{& & &}\or \def\@tempa{& &}%
      \else \def\@tempa{&}\fi
     \@tempa
     \if@eqnsw
        \iftag@
           \@taggnum
        \else
           \@eqnnum\stepcounter{equation}%
        \fi
     \fi
     \global\tag@false
     \global\@eqnswtrue
     \global\@eqcnt\z@\cr}
 \def\endequation{%
     \ifmmode\ifinner 
      \iftag@
        \addtocounter{equation}{-1} 
        $\hfil
           \displaywidth\linewidth\@taggnum\egroup \endtrivlist
        \global\tag@false
        \global\@ignoretrue   
      \else
        $\hfil
           \displaywidth\linewidth\@eqnnum\egroup \endtrivlist
        \global\tag@false
        \global\@ignoretrue 
      \fi
     \else   
      \iftag@
        \addtocounter{equation}{-1} 
        \eqno \hbox{\@taggnum}
        \global\tag@false%
        $$\global\@ignoretrue
      \else
        \eqno \hbox{\@eqnnum}
        $$\global\@ignoretrue
      \fi
     \fi\fi
 } 
 \newif\iftag@ \tag@false
 \def\tag{\@ifnextchar*{\@tagstar}{\@tag}}
 \def\@tag#1{%
     \global\tag@true
     \global\def\@taggnum{(#1)}}
 \def\@tagstar*#1{%
     \global\tag@true
     \global\def\@taggnum{#1}%
}
\theoremstyle{definition}
\theoremstyle{remark}
\numberwithin{equation}{section}
\begin{document}
\title[Graph Fractaloids]{Classification of Graph Fractaloids}
\author{Ilwoo Cho and Palle E. T. Jorgensen}
\address{Saint Ambrose Univ., Dep. of Math., 421 Ambrose Hall, 518 W. Locust St.,
Davenport, Iowa, 52803, U. S. A.\\
Univ. of Iowa, Dep. of Math., 14 McLean Hall, Iowa City, Iowa, 52242, U. S.
A.}
\email{chowoo@sau.edu\\
jorgen@math.uiowa.edu}
\thanks{The second named author is supported by the U. S. National Science
Foundation.}
\date{Dec., 2008}
\subjclass{05C62, 05C90, 17A50, 18B40, 46K10, 47A99, 47B99}
\keywords{Graph Groupoids, Labeled Graph Groupoids, Graph Fractaloids, Radial
Operators, Right Graph von Neumann Algebras, Spectral Equivalence Relation,
Spectral Classes.}
\dedicatory{}
\thanks{}
\maketitle

\begin{abstract}
In this paper, we observe graph fractaloids, which are the graph groupoids
with fractal property. In particular, we classify them in terms of the
spectral data of certain Hilbert space operators, called the radial
operators. Based on these information, we can define the pair of two numbers 
$(N_{0},$ $N^{0}),$ for a given graph fractaloid $\Bbb{G},$ called the
fractal pair of $\Bbb{G}.$ The graph fractaloids are classified by such
pairs.
\end{abstract}

\strut \strut \strut \strut \strut

\section{Introduction}

The word ``fractaloids'' in the title may at first seem a bit puzzling; in
any case, calling for an explanation. We have chosen the terminology in
order to call attention to a certain feature in the study of analysis and
spectral theory on countable directed graphs. The idea is sketched briefly
below, and then taken up systematically again in Section 2, with precise
definitions.

While the area of spectral theory of countable directed graphs usually
refers to weighted graphs and then the spectrum of a suitable associated
graph Laplacian, our approach here builds instead on two different tools:
Starting with a given graph $G$ (vertices and edges) we build a groupoid
(morphisms from edges, etc. in the usual way), and we then construct an
associated von Neumann algebra $M_{G}.$ The von Neumann algebra construction
is reminiscent of von Neumann's original way of generating a ring operators
(alias von Neumann algebra) from a free group on a finite number of
generators. The feature the two constructions have in common is a set of
branching rules, and indeed these branching rules capture an essential
feature of fractals.

In fact, a given graph $G$ can be turned a ``symbol space'' for the kind of
fractals that are built from iterated function systems (IFSs). Here, we use
IFSs in the sense of Hutchinson (See [50]). i.e., spaces and measures\ built
from a repeated application of a given finite set of maps in an ambient
space, and a subsequent limit construction. (Two popular examples of IFSs in
very special cases are the familiar middle-third Cantor set, and the
Sierpinsky gasket.) Spectral analysis on $G$ then carries over to the IFS
fractal under consideration. Even for the familiar IFSs under current study,
spectral theory is not yet fully developed, and our use of the von Neumann
algebra $M_{G}$ adds some global invariants to the study of fractals.

The distinction between local and global is relevant when analysis or
spectral theory is considered for graphs, or more generally for infinite
systems, such as arise in statistical mechanics and thermodynamics. Erwin
Schrodinger, in his little book [51], illustrated this point with reference
to macroscopic laws vs. microscopic in the physics of diffusion. While bulk
diffusion as predicted by the heat equation is deterministic, it results by
contrast from taking limits of microscopic components (at the quantum
level), i.e., molecular movements. Hence, as Einstein noted (1905), in
explaining Brownian motions, the local theory may be modeled by (purely
statistical) random walk on discrete configurations, later to be widely
studied in the form of statistical graph models.

A second kind of ``fractal'' amenable to our von Neumann algebra approach
derives from a different family of iteration systems, again a von Neumann
construction, but now the objects are automata; i.e., the study of abstract
machines and problems they are able to solve. The study of automata is
related to formal language theory, understood as classes of formal languages
they can recognize. More specifically, an automaton is a mathematical model
for a finite state machine (FSM). Roughly, an input-output machine that,
given an input of symbols, then``jumps'' through a series of states
according to a transition function (expressed as a table).

Again, an automaton $\mathcal{A}$ arises as an iteration limit $L(\mathcal{A}%
),$ and we will study $L(\mathcal{A})$ with the use of our von Neumann
algebra $M_{G}.$

In both applications of $M_{G},$ we are taking advantage of a certain atomic
decomposition (developed in our paper) of $M_{G},$ and we explore its use in
the study of fractals in the two senses outlined above.

Our subject is at the cross roads of operator algebra and analysis of graphs
and fractals. As a result, in Section 2 below, we develop the basic tools we
will need from both subjects. This section includes careful definitions of
the concepts from both subjects. To make the paper more assessable, we take
the liberty of explaining and motivating the fundamental tools we need from
operator algebras so they make sense to researchers working on analysis of
countable directed graphs, and vice versa.

\strut The main purpose of this paper is to introduce a new algebraic
structures having certain fractal property, which is, sometimes, called 
\emph{fractality}. In particular, we are interested in the groupoidal
version of fractal groups. In [16], [19] and [20], we constructed (graph)
groupoids with fractal property, called \emph{fractaloid}s. And we
considered the spectral data of fractaloids in operator theoretical point of
view. In this paper, we observe the classification of graph fractaloids.

In [10] through [15], and [17] through [22], we introduced \emph{graph
groupoid}s induced by \emph{countable directed graphs}. A graph groupoid is
a categorial groupoid having its base, the set of all vertices. i.e., we can
regard all vertices as (multi-)units. Every groupoid having only one base
element is a group. So, if $G$ is a finite directed graph with its graph
groupoid $\Bbb{G},$ and if the vertex set $V(G)$ $\subset $ $\Bbb{G}$
consists of only one element, then the graph groupoid $\Bbb{G}$ is a group.
For example, if $G$ is the one-vertex-$n$-loop-edge graph, then the graph
groupoid $\Bbb{G}$ of $G$ is group-isomorphic to the free group $F_{n},$
with $n$-generators (See [10] and [11]). Notice that the free group $F_{n}$
is a fractal group (See [1]), for all $n$ $\in $ $\Bbb{N}.$ Remark that
every graph groupoid is a groupoid, but the converse does not hold in
general. So, our fractaloids may be partially understood in groupoid theory.
Therefore, to avoid the confusion, different from [19] and [20], we call
fractaloids (in the sense of [19] and [20]), \emph{graph fractaloids}, like
in [16]. In [19], we conjectured that the only ``connected,'' ``finite''
directed graphs, generating graph fractaloids, are graph-isomorphic to (i)
the one-vertex-multi-loop-edge graphs, or (ii) the one-flow circulant
graphs, or the certain connection of the previous kind of graphs. And, in
[16], this conjecture is solved. And the conclusion of the conjecture in
[16] shows that there are sufficiently many fractaloids, since we can find
sufficiently many ``finite'' directed graphs, generating graph fractaloids.
i.e., we have rich fractality on (graph) groupoids. We can have that the
connected finite directed graphs, generating graph fractaloids, are

(i)\ \ \ the one-vertex-multi-loop-edge graphs, or

(ii)\ \ the regularized graphs of the one-flow circulant graphs or the
shadowed graphs of them, or

(iii) the regularized graphs of the complete graphs or the shadowed graphs
of them, or

(iv)\ the regularized graphs of the vertex-fixed iterated glued graphs $G$ $%
\#^{v}$ $O_{n},$ where $G$ are one of the forms in (i) through (iv).

Again, the above conclusion shows that, even though we restrict our
interests to the case where we only consider graph fractaloids, generated by
a connected ``finite'' directed graphs, we have the rich fractaloidal
structures to handle. i.e., there are more connected finite directed graphs
what we expected in [19], which means good for the richness of fractaloids.

To detect the fractality of graph groupoids, we used automata theory in
[16], [19] and [20]: We found the ``automata-theoretical,'' and ``algebra''
characterization of graph fractaloids. In this paper, we avoid to use the
automata theory. However, our construction is completely based on automata
theory. Based on the characterizations of graph fractaloids in [19], we can
find the ``graph-theoretical'' characterization of graph fractaloids in
[16], and it leads us to define graph fractaloids without using automata
theory. Recall that, in [16], we show that: the graph groupoid $\Bbb{G}$ of
a connected locally finite directed graph $G$ is a graph fractaloid, if and
only if the out-degrees and the in-degrees of all vertices of $G$ are
identical from each other. So, by using this characterization, we can
re-define graph fractaloids as in Section 3, below.

As in [10] through [15], we construct a von Neumann algebra $\mathcal{M}%
_{G}, $ generated by the graph groupoid $\Bbb{G}$ of $G,$ as a groupoid $%
W^{*}$-algebra $vN(L(\Bbb{G}))$ generated by the graph groupoid $\Bbb{G}$ in 
$B(H_{G}),$ where $(H_{G},$ $L)$ is the canonical (left) representation of $%
\Bbb{G},$ consisting of a suitable Hilbert space $H_{G},$ and the groupoid
action $L$ of $\Bbb{G},$ acting on $H_{G}.$ We call $\mathcal{M}_{G},$ \emph{%
the }(\emph{left})\emph{\ graph von Neumann algebra of} $G.$ In [16], [19],
and [20], we use the \emph{right} graph von Neumann algebra $M_{G}$ $=$ $%
vN(R(\Bbb{G}))$ of $G$ in $B(H_{G}),$ where $(H_{G},$ $R)$ is the canonical
``right'' representation of $\Bbb{G},$ where $R$ is the right action of $%
\Bbb{G},$ acting on $H_{G}.$

The right graph von Neumann algebras $M_{G}$ are the opposite $W^{*}$%
-algebras $\mathcal{M}_{G}^{op}$ of the graph von Neumann algebras $\mathcal{%
M}_{G}$. Thus the right graph von Neumann algebras $M_{G}$ and the graph von
Neumann algebras $\mathcal{M}_{G}$ are anti-$*$-isomorphic from each other.
The only difference is the choice of actions of a graph groupoid. In this
paper, we will use right graph von Neumann algebras, as in [19] and [20].

Let $G$ be a given connected locally finite directed graph with its graph
groupoid $\Bbb{G},$ and let $M_{G}$ the right graph von Neumann algebras of $%
G$. Then the graph groupoid $\Bbb{G}$ induces a certain Hilbert space
operator $T_{G}$ in $M_{G},$ called the \emph{labeling operator of} $\Bbb{G}$
(See [19] and [20]). It is self-adjoint in $M_{G}.$ It is known that the
free distributional data, represented by the $D_{G}$-valued (amalgamated or
operator-valued) free moments $\{E(T_{G}^{n})\}_{n=1}^{\infty }$ of $T_{G},$
contain the spectral information of $T_{G}$, where $D_{G}$ is the \emph{%
diagonal subalgebra} of $M_{G}$.

In [20], we found the general computations of $D_{G}$-valued free moments of 
$T_{G}$, and in [19], the spectral information of $T_{G}$ of graph
fractaloids $\Bbb{G}$ is completely characterized by computing the $D_{G}$%
-valued free moments: The computations are based on the observation of the
cardinalities of lattice paths with axis property (See [40]).

In this paper, under our new setting, we re-define the same operator $T_{G},$
called the \emph{radial operators} of the graph groupoid $\Bbb{G},$ as an
element of the right graph von Neumann algebra $M_{G}$ (See Section 4). By
definition, we can realize that the labeing operators in the sense of [19]
and [20], and our radial operators are equivalent. i.e., if a graph $G$ is
fixed, then the labeling operator and the radial operator are identically
distributed over $D_{G}$ in $B(H_{G}).$

A \emph{graph} is a set of objects called \emph{vertices} (or points or
nodes) connected by links called \emph{edges} (or lines). In a \emph{%
directed graph}, the two directions are counted as being distinct directed
edges (or arcs). A graph is depicted in a diagrammatic form as a set of dots
(for vertices), jointed by curves (for edges). Similarly, a directed graph
is depicted in a diagrammatic form as a set of dots jointed by arrowed
curves, where the arrows point the direction of the directed edges.

In this paper, we consider direct graph $G$ as a combinatorial pair $(V(G),$ 
$E(G)),$ where $V(G)$ is the vertex set of $G$ and $E(G)$ is the edge set of 
$G.$ As we assumed at the beginning of the paper, throughout this paper,
every graph is a locally finite countably directed graph. Equivalently, the
degree of $v$ $\in $ $V(G)$ is finite, for all $v$ $\in $ $V(G).$ Notice
that, since $G$ is directed (or oriented on $E(G)$), each edge $e$ has its 
\emph{initial vertex} $v_{1}$ and its \emph{terminal vertex} $v_{2}.$ i.e., $%
e$ connects from $v_{1}$ to $v_{2}.$ Remark that the vertices $v_{1}$ and $%
v_{2}$ are not necessarily distinct, in general; for instance, if $e$ is a
loop edge, then $v_{1}$ $=$ $v_{2}.$

Recall that the \emph{degree} $\deg (v)$ of a vertex $v$ is defined to be
the sum of the \emph{out-degree} $\deg _{out}(v)$ and the\emph{\ in-degree} $%
\deg _{in}(v),$ dependent upon the direction on $G.$ i.e.,

\begin{center}
$\deg (v)$ $\overset{def}{=}$ $\deg _{out}(v)$ $+$ $\deg _{in}(v)$
\end{center}

where\strut

\begin{center}
$\deg _{out}(v)$ $\overset{def}{=}$ $\left| \{e\in E(G):e\text{ has its
initial vertex }v\}\right| $
\end{center}

and

\begin{center}
$\deg _{in}(v)$ $\overset{def}{=}$ $\left| \{e\in E(G):e\text{ has its
terminal vertex }v\}\right| .$
\end{center}

\strut

Define now the number $N$ by$\strut $

\begin{center}
$N$ $\overset{def}{=}$ $\max $ $\{\deg _{out}(v)$ $:$ $v$ $\in $ $V(\widehat{%
G})$ $=$ $V(G)\}.$\strut
\end{center}

Notice that, by the locally finiteness of $G,$ $N$ $<$ $\infty $ in $\Bbb{N}%
. $

The main purpose of this paper is to classify the graph fractaloids, in
temrs of their spectral information. In [19], we showed that the free
distribution of the labeling operators (and hence, that of the radial
operators) of graph fractaloids are scalar-valued:

\begin{center}
$E(T_{G}^{n})$ $=$ $\gamma _{n}\cdot 1_{D_{G}},$ for all $n$ $\in $ $\Bbb{N}$%
,
\end{center}

where $T_{G}$ is the labeling operator of a graph fractaloid $\Bbb{G}$ in
the right graph von Neumann algebra $M_{G}$, and where $\gamma _{n}$ is the
cardinality of a certain subset of the collection of all lattice paths
induced by $N$-lattices.

The above free-moment computations show that if two connected locally finite
directed graphs $G_{1}$ and $G_{2}$ have graph-isomorphic shadowed graphs,
then the corresponding graph groupoids $\Bbb{G}_{1}$ and $\Bbb{G}_{2}$ are
groupoid-isomorphic; and if $\Bbb{G}_{k}$ are graph fractaloids, for $k$ $=$ 
$1,$ $2,$ then the radial operators $T_{G_{1}}$ and $T_{G_{2}}$ are
identically distributed over $D_{G}$ in $M_{G}.$

Therefore we can determine the classification of graph fractaloids in terms
of their spectral property, with respect to the \emph{fractal pairs},
consisting of the certain numbers.

Let

\begin{center}
$n$ $=$ $\max \{\deg _{out}(v)$ $:$ $v$ $\in $ $V(G)\}$ $\in $ $\Bbb{N},$ in 
$G,$
\end{center}

and

\begin{center}
$m$ $=$ $\left| V(G)\right| $ $\in $ $\Bbb{N}_{\infty }$ $\overset{def}{=}$ $%
\Bbb{N}$ $\cup $ $\{\infty \}.$
\end{center}

Then the pair $(n,$ $m)$ is well-determined, whenever we have a connected
``locally finite'' directed graph $G.$ If $G$ generates a graph fractaloid $%
\Bbb{G},$ then this pair $(n,$ $m)$ is called the \emph{fractal pair of} $%
\Bbb{G}.$ Our main result of this paper is that the given two graph
fractaloids $\Bbb{G}_{1}$ and $\Bbb{G}_{2}$ have the same fractal pair $%
(N_{0},$ $N^{0}),$ then the radial operators $T_{G_{1}}$ and $T_{G_{2}}$ of
them are identically free distributed over $\Bbb{C}^{\oplus \,N^{0}}.$ In
particular,

\begin{center}
$E(T_{G_{k}}^{n})$ $=$ $\left| \mathcal{L}_{N_{0}}^{o}(n)\right| $ $\cdot $ $%
1_{\Bbb{C}^{\oplus \,N^{0}}},$ for all $n$ $\in $ $\Bbb{N}$,
\end{center}

where $\mathcal{L}_{N_{0}}^{o}(n)$ is the set consisting of all length-$%
N_{0} $ lattice paths in $\Bbb{R}^{2},$ starting at $(0,$ $0),$ and ending
on the hrozontal axis. These fractal pairs on the set $\mathcal{F}_{ractal}$
of all graph fractaloids make us classify the set $\mathcal{F}_{ractal},$ as
follows:

\begin{center}
$\mathcal{F}_{ractal}$ $=$ $\underset{(n,\,m)\in \Bbb{N}\times \Bbb{N}%
_{\infty }}{\sqcup }$ $\left( [(n,\text{ }m)]\right) ,$
\end{center}

where $[(n,$ $m)]$ is an equivalence class in $\mathcal{F}_{ractal}.$\strut

\section{Background and Definitions\strut \strut \strut \strut \strut \strut
\strut \strut \strut \strut \strut \strut \strut}

\strut Recently, countable directed graphs have been studied in Pure and
Applied Mathematics, because not only that they are involved by a certain
noncommutative structures but also that they visualize such structures.
Futhermore, the visualization has nice matricial expressions, (sometimes,
the operator-valued matricial expressions dependent on) adjacency matrices
or incidence matrices of the given graph. In particular, partial isometries
on a Hilbert space can be expressed and visualized by directed graphs: in
[10] through [15], [17], and [23], we have seen that each edge (resp. each
vertex) of a graph corresponds to a partial isometry (resp. a projection) on
a Hilbert space. In [18], [21], and [22], we showed that any finite partial
isometries (and the initial and final projections induced by these partial
isometries) on an arbitrary separable infinite Hilbert space induces a
(locally finite) directed graph. This shows that there are close relations
between Hilbert space operators and directed graphs.

Also, in this paper, we consider the property of fractaloids in terms of the
spectral property of certain operators on Hilbert spaces (Also, see
[19]).\strut To do that, in this section, we introduce the concepts we will
use in the rest of the context.\strut \strut

\subsection{Graph Groupoids and Representations\strut}

Let $G$ be a directed graph with its vertex set $V(G)$ and its edge set $%
E(G).$ Let $e$ $\in $ $E(G)$ be an edge connecting a vertex $v_{1}$ to a
vertex $v_{2}.$ Then we write $e$ $=$ $v_{1}$ $e$ $v_{2},$ for emphasizing
the initial vertex $v_{1}$ of $e$ and the terminal vertex $v_{2}$ of $e.$
For a graph $G,$ we can define the oppositely directed graph $G^{-1},$ with $%
V(G^{-1})$ $=$ $V(G)$ and $E(G^{-1})$ $=$ $\{e^{-1}$ $:$ $e$ $\in $ $E(G)\},$
where each element $e^{-1}$ satisfies that $e$ $=$ $v_{1}$ $e$ $v_{2}$ in $%
E(G)$, with $v_{1},$ $v_{2}$ $\in $ $V(G),$ if and only if $e^{-1}$ $=$ $%
v_{2}$ $e^{-1}$ $v_{1},$ in $E(G^{-1}).$ This opposite directed edge $e^{-1}$
$\in $ $E(G^{-1})$ of $e$ $\in $ $E(G)$ is called the \emph{shadow of} $e.$
Also, this new graph $G^{-1}$, induced by $G,$ is said to be the \emph{%
shadow of} $G.$ It is clear that $(G^{-1})^{-1}$ $=$ $G.$\strut

Define the \emph{shadowed graph} $\widehat{G}$ of $G$ by a directed graph
with its vertex set $V(\widehat{G})$ $=$ $V(G)$ $=$ $V(G^{-1})$ and its edge
set $E(\widehat{G})$ $=$ $E(G)$ $\cup $ $E(G^{-1})$, where $G^{-1}$ is the 
\emph{shadow} of $G$. We say that two edges $e_{1}$ $=$ $v_{1}$ $e_{1}$ $%
v_{1}^{\prime }$ and $e_{2}$ $=$ $v_{2}$ $e_{2}$ $v_{2}^{\prime }$ are \emph{%
admissible}, if $v_{1}^{\prime }$ $=$ $v_{2},$ equivalently, the finite path 
$e_{1}$ $e_{2}$ is well-defined on $\widehat{G}.$ Similarly, if $w_{1}$ and $%
w_{2}$ are finite paths on $G,$ then we say $w_{1}$ and $w_{2}$ are \emph{%
admissible}, if $w_{1}$ $w_{2}$ is a well-defined finite path on $G,$ too.
Similar to the edge case, if a finite path $w$ has its initial vertex $v$
and its terminal vertex $v^{\prime },$ then we write $w$ $=$ $v_{1}$ $w$ $%
v_{2}.$ Notice that every admissible finite path is a word in $E(\widehat{G}%
).$ Denote the set of all finite path by $FP(\widehat{G}).$ Then $FP(%
\widehat{G})$ is the subset of $E(\widehat{G})^{*},$ consisting of all words
in $E(\widehat{G}).$

We can construct the \emph{free semigroupoid} $\Bbb{F}^{+}(\widehat{G})$ of
the shadowed graph $\widehat{G},$ as the union of all vertices in $V(%
\widehat{G})$ $=$ $V(G)$ $=$ $V(G^{-1})$ and admissible words in $FP(%
\widehat{G}),$ with its binary operation, the \emph{admissibility}$.$
Naturally, we assume that $\Bbb{F}^{+}\Bbb{(}\widehat{G})$ contains the
empty word $\emptyset .$ Remark that some free semigroupoid $\Bbb{F}^{+}\Bbb{%
(}\widehat{G})$ of $\widehat{G}$ does not contain the empty word; for
instance, if a graph $G$ is a one-vertex-multi-edge graph, then the shadowed
graph $\widehat{G}$ of $G$ is also a one-vertex-multi-edge graph, and it
induces the free semigroupoid $\Bbb{F}^{+}(\widehat{G}),$ which does not
have the empty word. However, in general, if $\left| V(G)\right| $ $>$ $1,$
then $\Bbb{F}^{+}(\widehat{G})$ always contain the empty word. Thus, if
there is no confusion, we always assume the empty word $\emptyset $ is
contained in the free semigroupoid $\Bbb{F}^{+}(\widehat{G})$ of $\widehat{G}%
.$

By defining the \emph{reduction }(RR) on $\Bbb{F}^{+}(\widehat{G}),$ we can
construct the graph groupoid $\Bbb{G}.$ i.e., the \emph{graph groupoid }$%
\Bbb{G}$ is a set of all ``reduced'' words in $E(\widehat{G}),$ with the
inherited admissibility on $\Bbb{F}^{+}(\widehat{G}),$ where the \emph{%
reduction} (RR) on $\Bbb{G}$ is\strut

(RR)\qquad $\qquad \qquad w$ $w^{-1}$ $=$ $v$ and $w^{-1}w$ $=$ $v^{\prime
}, $\strut

for all $w$ $=$ $v$ $w$ $v^{\prime }$ $\in $ $\Bbb{G},$ with $v,$ $v^{\prime
}$ $\in $ $V(\widehat{G}).$ In fact, this graph groupoid $\Bbb{G}$ is indeed
a categorial groupoid with its base $V(\widehat{G})$ (See Section 2.2).

Construct the canonical representation of the given graph groupoid $\Bbb{G}.$
Let

\begin{center}
$H_{G}$ $\overset{def}{=}$ $\underset{w\in FP_{r}(\widehat{G})}{\oplus }$ $(%
\Bbb{C}$ $\cdot $ $\xi _{w})$\strut
\end{center}

be the Hilbert space with its Hilbert basis $\{\xi _{w}$ $:$ $w$ $\in $ $%
FP_{r}(\widehat{G})\},$ where\strut

\begin{center}
$FP_{r}(\widehat{G})$ $\overset{def}{=}$ $\Bbb{G}$ $\setminus $ $\left( V(%
\widehat{G})\text{ }\cup \text{ }\{\emptyset \}\right) .$\strut
\end{center}

We will call $H_{G},$ the \emph{graph Hilbert space} induced by the graph $%
G. $ Notice that the basis elements $\xi _{w}$'s satisfy the multiplication
rule;

\begin{center}
$\xi _{w_{1}}\xi _{w_{2}}$ $=$ $\xi _{w_{1}w_{2}},$ for all $w_{1},$ $w_{2}$ 
$\in $ $FP_{r}(\widehat{G}),$
\end{center}

with $\xi _{\emptyset }$ $\overset{def}{=}$ $0_{H_{G}}$ in $H_{G}.$ Also, we
have, for any $w$ $\in $ $FP_{r}(\widehat{G}),$

\begin{center}
$\xi _{w}\xi _{w^{-1}}$ $=$ $\xi _{ww^{-1}},$ and $\xi _{w^{-1}}\xi _{w}$ $=$
$\xi _{w^{-1}w}.$
\end{center}

By the reduction (RR), $ww^{-1}$ and $w^{-1}w$ are vertices in $V(\widehat{G}%
).$ This shows that naturally, we can determine the Hilbert space elements $%
\xi _{w},$ for all $w$ $\in $ $\Bbb{G}.$

Define now the groupoid action of $\Bbb{G},$ acting on $H_{G},$

\begin{center}
$R$ $:$ $\Bbb{G}$ $\rightarrow $ $B(H_{G})$
\end{center}

by

\begin{center}
$R(w)$ $\overset{def}{=}$ $R_{w},$ for all $w$ $\in $ $\Bbb{G},$
\end{center}

where

\begin{center}
$R_{w}$ $\xi _{w^{\prime }}$ $\overset{def}{=}$ $\xi _{w^{\prime }w},$ for
all $w,$ $w^{\prime }$ $\in $ $\Bbb{G}.$
\end{center}

i.e., the operator $R_{w}$ is the ``right'' multiplication operator with its
symbol $\xi _{w}$ on $H_{G},$ for all $w$ $\in $ $\Bbb{G}.$\strut

\begin{definition}
Let $H_{G}$ be the graph Hilbert space induced by a given graph $G,$ and let 
$R$ be the right action of the graph groupoid $\Bbb{G}$ of $G,$ acting on $%
H_{G},$ defined as above. Then the pair $(H_{G},$ $R)$ is called the
(canonical) right representation of $\Bbb{G}.$\strut
\end{definition}

\subsection{Categorial Groupoids and Groupoid Actions\strut \strut}

We say an algebraic structure $(\mathcal{X},$ $\mathcal{Y},$ $s,$ $r)$ is a 
\emph{(categorial) groupoid} if it satisfies that (i) $\mathcal{Y}$ $\subset 
$ $\mathcal{X},$ (ii) for all $x_{1},$ $x_{2}$ $\in $ $\mathcal{X},$ there
exists a partially-defined binary operation $(x_{1},$ $x_{2})$ $\mapsto $ $%
x_{1}$ $x_{2},$ for all $x_{1},$ $x_{2}$ $\in $ $\mathcal{X},$ depending on
the source map $s$ and the range map $r$ satisfying the followings;\strut

(ii-1) $x_{1}$ $x_{2}$ is well-determined, whenever $r(x_{1})$ $=$ $s(x_{2})$
and in this case, $s(x_{1}$ $x_{2})$ $=$ $s(x_{1})$ and $r(x_{1}$ $x_{2})$ $%
= $ $r(x_{2}),$ for $x_{1},$ $x_{2}$ $\in $ $\mathcal{X},$\strut

(ii-2) $(x_{1}$ $x_{2})$ $x_{3}$ $=$ $x_{1}$ $(x_{2}$ $x_{3})$, if they are
well-determined in the sense of (ii-1), for $x_{1},$ $x_{2},$ $x_{3}$ $\in $ 
$\mathcal{X},$\strut

(ii-3) if $x$ $\in $ $\mathcal{X},$ then there exist $y,$ $y^{\prime }$ $\in 
$ $\mathcal{Y}$ such that $s(x)$ $=$ $y$ and $r(x)$ $=$ $y^{\prime },$
satisfying $x$ $=$ $y$ $x$ $y^{\prime }$ (Here, the elements $y$ and $%
y^{\prime }$ are not necessarily distinct),\strut \strut

(ii-4) if $x$ $\in $ $\mathcal{X},$ then there exists a unique element $%
x^{-1}$ for $x$ satisfying $x$ $x^{-1}$ $=$ $s(x)$ and $x^{-1}$ $x$ $=$ $%
r(x).$\strut

Thus, every group is a groupoid $(\mathcal{X},$ $\mathcal{Y},$ $s,$ $r)$
with $\left| \mathcal{Y}\right| $ $=$ $1$ (and hence $s$ $=$ $r$ on $%
\mathcal{X}$). This subset $\mathcal{Y}$ of $\mathcal{X}$ is said to be the 
\emph{base of} $\mathcal{X}$. Remark that we can naturally assume that there
exists the \emph{empty element} $\emptyset $ in a groupoid $\mathcal{X}.$
The empty element $\emptyset $ means the products $x_{1}$ $x_{2}$ are not
well-defined, for some $x_{1},$ $x_{2}$ $\in $ $\mathcal{X}.$ Notice that if 
$\left| \mathcal{Y}\right| $ $=$ $1$ (equivalently, if $\mathcal{X}$ is a
group), then the empty word $\emptyset $ is not contained in the groupoid $%
\mathcal{X}.$ However, in general, whenever $\left| \mathcal{Y}\right| $ $%
\geq $ $2,$ a groupoid $\mathcal{X}$ always contain the empty word. So, if
there is no confusion, we will naturally assume that the empty element $%
\emptyset $ is contained in $\mathcal{X}.$\strut

It is easily checked that our graph groupoid $\Bbb{G}$ of a countable
directed graph $G$ is indeed a groupoid with its base $V(\widehat{G}).$
i.e., every graph groupoid $\Bbb{G}$ of a countable directed graph $G$ is a
groupoid $(\Bbb{G},$ $V(\widehat{G}),$ $s$, $r)$, where $s(w)$ $=$ $s(v$ $w)$
$=$ $v$ and $r(w)$ $=$ $r(w$ $v^{\prime })$ $=$ $v^{\prime },$ for all $w$ $%
= $ $v$ $w$ $v^{\prime }$ $\in $ $\Bbb{G}$ with $v,$ $v^{\prime }$ $\in $ $V(%
\widehat{G}).$ i.e., the vertex set $V(\widehat{G})$ $=$ $V(G)$ is a base of 
$\Bbb{G}.$\strut

Let $\mathcal{X}_{k}$ $=$ $(\mathcal{X}_{k},$ $\mathcal{Y}_{k},$ $s_{k},$ $%
r_{k})$ be groupoids, for $k$ $=$ $1,$ $2.$ We say that a map $f$ $:$ $%
\mathcal{X}_{1}$ $\rightarrow $ $\mathcal{X}_{2}$ is a \emph{groupoid
morphism} if (i) $f$ is a function, (ii) $f(\mathcal{Y}_{1})$ $\subseteq $ $%
\mathcal{Y}_{2},$ (iii) $s_{2}\left( f(x)\right) $ $=$ $f\left(
s_{1}(x)\right) $ in $\mathcal{X}_{2},$ for all $x$ $\in $ $\mathcal{X}_{1}$%
, and (iv) $r_{2}\left( f(x)\right) $ $=$ $f\left( r_{1}(x)\right) $ in $%
\mathcal{X}_{2},$ for all $x$ $\in $ $\mathcal{X}_{1}.$ If a groupoid
morphism $f$ is bijective, then we say that $f$ is a \emph{%
groupoid-isomorphism}, and the groupoids $\mathcal{X}_{1}$ and $\mathcal{X}%
_{2}$ are said to be \emph{groupoid-isomorphic}.\strut

Notice that, if two countable directed graphs $G_{1}$ and $G_{2}$ are \emph{%
graph-isomorphic}, via a graph-isomorphism $g$ $:$ $G_{1}$ $\rightarrow $ $%
G_{2},$ in the sense that (i) $g$ is bijective from $V(G_{1})$ onto $%
V(G_{2}),$ (ii) $g$ is bijective from $E(G_{1})$ onto $E(G_{2}),$ (iii) $%
g(e) $ $=$ $g(v_{1}$ $e$ $v_{2})$ $=$ $g(v_{1})$ $g(e)$ $g(v_{2})$ in $%
E(G_{2}),$ for all $e$ $=$ $v_{1}$ $e$ $v_{2}$ $\in $ $E(G_{1}),$ with $%
v_{1},$ $v_{2}$ $\in $ $V(G_{1}),$ then the graph groupoids $\Bbb{G}_{1}$
and $\Bbb{G}_{2}$ are groupoid-isomorphic. More generally, if two graphs $%
G_{1}$ and $G_{2}$ have graph-isomorphic shadowed graphs $\widehat{G_{1}}$
and $\widehat{G_{2}}, $ then $\Bbb{G}_{1}$ and $\Bbb{G}_{2}$ are
groupoid-isomorphic (See [10] and [11]).\strut \strut \strut

Let $\mathcal{X}$ $=$ $(\mathcal{X},$ $\mathcal{Y},$ $s,$ $r)$ be a
groupoid. We say that this groupoid $\mathcal{X}$ \emph{acts on a set }$Y$
if there exists a groupoid action $\pi $ of $\mathcal{X}$ such that $\pi (x)$
$:$ $Y$ $\rightarrow $ $Y$ is a well-determined function, for all $x$ $\in $ 
$\mathcal{X}.$ Sometimes, we call the set $Y,$ a $\mathcal{X}$\emph{-set}.
\strut

Let $\mathcal{X}_{1}$ $\subset $ $\mathcal{X}_{2}$ be a subset, where $%
\mathcal{X}_{2}$ $=$ $(\mathcal{X}_{2},$ $\mathcal{Y}_{2},$ $s,$ $r)$ is a
groupoid, and assume that $\mathcal{X}_{1}$ $=$ $(\mathcal{X}_{1},$ $%
\mathcal{Y}_{1},$ $s,$ $r),$ itself, is a groupoid, where $\mathcal{Y}_{1}$ $%
=$ $\mathcal{X}_{2}$ $\cap $ $\mathcal{Y}_{2}.$ Then we say that the
groupoid $\mathcal{X}_{1}$ is a \emph{subgroupoid} of $\mathcal{X}_{2}.$%
\strut \strut

Recall that we say that a countable directed graph $G_{1}$ is a \emph{%
full-subgraph} of a countable directed graph $G_{2},$ if\strut

\begin{center}
$E(G_{1})$ $\subseteq $ $E(G_{2})$
\end{center}

and

\begin{center}
$V(G_{1})$ $=$ $\{v$ $\in $ $V(G_{1})$ $:$ $e$ $=$ $v$ $e$ or $e$ $=$ $e$ $%
v, $ $\forall $ $e$ $\in $ $E(G_{1})\}.$\strut
\end{center}

Remark the difference between full-subgraphs and subgraphs: We say that $%
G_{1}^{\prime }$ is a \emph{subgraph} of $G_{2},$ if\strut

\begin{center}
$V(G_{1}^{\prime })$ $\subseteq $ $V(G_{2})$
\end{center}

and

\begin{center}
$E(G_{1}^{\prime })$ $=$ $\{e$ $\in $ $E(G_{2})$ $:$ $e$ $=$ $v_{1}$ $e$ $%
v_{2},$ for $v_{1},$ $v_{2}$ $\in $ $V(G_{1}^{\prime })\}.$\strut
\end{center}

We can see that the graph groupoid $\Bbb{G}_{1}$ of $G_{1}$ is a subgroupoid
of the graph groupoid $\Bbb{G}_{2}$ of $G_{2},$ whenever $G_{1}$ is a
full-subgraph of $G_{2}.$\strut

\subsection{Right Graph von Neumann Algebras}

In this section, we briefly introduce right graph von Neumann algebras of
the graphs. Frankly speaking, we will not consider such operator algebraic
structures in detail, here. However, to study the spectral property of our
fractaloids, we need the frameworks where the corresponding labeling
operators of fractaloids work. For more about groupoid topological algebras,
see [19], [20], [22], [24] and [26]. And, for more about free probability,
see [5], [10], [11], and [28].

\begin{definition}
Let $G$ be a graph with its graph groupoid $\Bbb{G},$ and let $(H_{G},$ $R)$
be the right representation of $\Bbb{G},$ in the sense of Section 2.1. Under
the representation $(H_{G},$ $R),$ define the groupoid $W^{*}$-algebra $%
M_{G} $ $=$ $\overline{\Bbb{C}[R(\Bbb{G})]}^{w}$ in $B(H_{G}),$ as a $W^{*}$%
-subalgebra. This groupoid $W^{*}$-algebra $M_{G}$ is called the right graph
von Neumann algebra of $G.$\strut \ Define a $W^{*}$-subalgebra $D_{G}$ of $%
M_{G}$ by

\begin{center}
$D_{G}\overset{def}{=}$ $\underset{v\in V(\widehat{G})}{\oplus }$ $\left( 
\Bbb{C}\cdot R_{v}\right) .$
\end{center}

It is called the diagonal subalgebra of $M_{G}.$
\end{definition}

\begin{remark}
In [10] through [14], we observed the (left) multiplication operators $L_{w}$%
's, for all $w$ $\in $ $\Bbb{G},$ instead of using right multiplication
operators $R_{w}$'s. Then we can define the (left) graph von Neumann algebra 
$M_{G}^{op}$ $=$ $\overline{\Bbb{C}[L(\Bbb{G})]}^{w}$ in $B(H_{G}),$ where $%
L $ $:$ $\Bbb{G}$ $\rightarrow $ $B(H_{G})$ is the left groupoid action of $%
\Bbb{G},$ acting on $H_{G}$, i.e., $L_{w}$ $\xi _{w^{\prime }}$ $\overset{def%
}{=}$ $\xi _{ww^{\prime }},$ for all $w,$ $w^{\prime }$ $\in $ $\Bbb{G}.$
Notice that $M_{G}^{op}$ and $M_{G}$ are anti-$*$-isomorphic. Thus they
share the fundamental properties (See [19]). Indeed, the von Neumann algebra 
$M_{G}^{op}$ is the opposite $*$-algebra of our right graph von Neumann
algebra $M_{G}$ of $G.$
\end{remark}

\strut Notice that, every element $x$ in the right graph von Neumann algebra 
$M_{G}$ of $G$ has its expression,

\begin{center}
\strut $x$ $=$ $\underset{w\in \Bbb{G}}{\sum }$ $t_{w}$ $R_{w},$ with $t_{w}$
$\in $ $\Bbb{C}.$
\end{center}

Let $D_{G}$ be the diagonal subalgebra of $M_{G}.$ Define the canonical
conditional expectation

\begin{center}
$E$ $:$ $M_{G}$ $\rightarrow $ $D_{G}$
\end{center}

by

\begin{center}
$E\left( \underset{w\in \Bbb{G}}{\sum }t_{w}R_{w}\right) $ $\overset{def}{=}$
$\underset{v\in V(\widehat{G})}{\sum }$ $t_{v}$ $R_{v},$
\end{center}

for all $\underset{w\in \Bbb{G}}{\sum }t_{w}$ $R_{w}$ $\in $ $M_{G}.$ Then
the pair $(M_{G},$ $E)$ is a $D_{G}$-valued $W^{*}$-probability space over $%
D_{G},$ in the sense of Voiculescu (See [5] and [28]).

\begin{definition}
The $D_{G}$-valued $W^{*}$-probability space $(M_{G},$ $E)$ is called the
graph $W^{*}$-probability space induced by the given graph $G.$
\end{definition}

By [10], [11], [19], and [20], we have the following two theorems.

\begin{theorem}
(See [10] and [11]) Let $M_{G}$ be the right graph von Neumann algebra of $%
G. $ Then it is $*$-isomorphic to the $D_{G}$-valued reduced free product
algebra $\underset{e\in E(G)}{*_{D_{G}}^{r}}$ $M_{e}$ of the $D_{G}$-free
blocks $M_{e},$ where $M_{e}$ $\overset{def}{=}$ $vN(\Bbb{G}_{e},$ $D_{G})$
in $B(H_{G}),$ where $\Bbb{G}_{e}$ are the subgroupoid of $\Bbb{G},$ induced
by $\{e,$ $e^{-1}\},$ for all $e$ $\in $ $E(G).$ $\square $
\end{theorem}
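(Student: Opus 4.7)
My overall plan is to verify the two hypotheses of the amalgamated reduced free product construction over $D_{G}$: first, that $M_{G}$ is generated (as a von Neumann algebra) by $D_{G}$ together with the edge-blocks $\{M_{e}:e\in E(G)\}$, each containing $D_{G}$ as a common unital $*$-subalgebra; and second, that these blocks are free with amalgamation over $D_{G}$ with respect to the canonical conditional expectation $E:M_{G}\to D_{G}$. The first part is essentially bookkeeping; the substantive content is the freeness statement, which I would reduce to a combinatorial fact about words in the graph groupoid $\Bbb{G}$ using the reduction rule (RR).

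First I would describe the edge-block $M_{e}$ explicitly. The subgroupoid $\Bbb{G}_{e}$ consists of $V(\widehat{G})$ together with the reduced powers of $e$ and $e^{-1}$ allowed by admissibility (for a non-loop edge $e=v_{1}ev_{2}$, these are just $e$ and $e^{-1}$; for a loop edge $v_{1}=v_{2}$, we get the positive and negative powers of $e$). Hence $M_{e}=\overline{D_{G}[R_{e},R_{e^{-1}}]}^{w}$, and every element $x\in M_{e}$ has a normal form as a (possibly infinite) $D_{G}$-linear combination of $R_{w}$ where $w$ ranges over reduced elements of $\Bbb{G}_{e}$. Taking the union of generating sets over $e\in E(G)$ recovers all of $\{R_{w}:w\in\Bbb{G}\}$, whose weak closure is $M_{G}$; this establishes that $D_{G}$ together with $\{M_{e}\}_{e\in E(G)}$ generates $M_{G}$.

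Next I would verify $D_{G}$-freeness. By standard reduction, it suffices to check that for any alternating sequence $e_{1},\dots,e_{n}\in E(G)$ with $e_{i}\neq e_{i+1}$ and any $x_{i}\in M_{e_{i}}$ with $E(x_{i})=0$, one has $E(x_{1}x_{2}\cdots x_{n})=0$. By $D_{G}$-linearity and weak continuity of $E$, it is enough to treat the case where each $x_{i}=R_{v_{i}}R_{w_{i}}R_{v_{i}'}$ with $w_{i}\in\Bbb{G}_{e_{i}}\setminus V(\widehat{G})$ a reduced non-vertex word (the centering $E(x_{i})=0$ forces $w_{i}$ not to collapse to a vertex, so $w_{i}$ is a nontrivial reduced word whose first and last letters lie in $\{e_{i},e_{i}^{-1}\}$). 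The key observation is then that, because $e_{i}\neq e_{i+1}$, the concatenation $w_{i}w_{i+1}$ admits no cancellation under (RR) at the joint, and hence the full product $w_{1}w_{2}\cdots w_{n}$ is either the empty element $\emptyset$ (if some intermediate admissibility fails) or a reduced word in $\Bbb{G}$ which is again not a vertex. In either case its $R$-image has $E$-value $0$, so $E(x_{1}\cdots x_{n})=0$, which is precisely $D_{G}$-freeness.

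The main obstacle I expect is the careful handling of the amalgamation, specifically controlling the intermediate vertex projections $R_{v_{i}'}R_{v_{i+1}}$ that arise when one passes between blocks: one must check that these $D_{G}$-sandwich factors either enforce the admissibility condition (matching the terminal vertex of $w_{i}$ to the initial vertex of $w_{i+1}$) or annihilate the product, so that no nontrivial element of $D_{G}$ can sneak in and cancel the ``first/last letter is $e_{i}^{\pm 1}$'' property. Once this verification is in place, both the universal property of $*_{D_{G}}^{r}$ and the freeness characterization yield the asserted $*$-isomorphism $M_{G}\cong *_{D_{G},\,e\in E(G)}^{r}M_{e}$.
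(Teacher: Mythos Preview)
The paper does not actually prove this theorem; it is quoted from the authors' earlier work [10] and [11] and marked with $\square$ immediately after the statement. So there is no ``paper's own proof'' to compare against here. That said, your outline is the standard and correct route to this result: verify generation of $M_{G}$ by $D_{G}$ and the blocks $M_{e}$, and then establish $D_{G}$-freeness by the alternating-word criterion, reducing via the reduction rule (RR) to the purely combinatorial fact that a concatenation of nontrivial reduced words from distinct edge-subgroupoids $\Bbb{G}_{e_{i}}$ (with $e_{i}\neq e_{i+1}$ in $E(G)$) cannot collapse to a vertex. Your identification of the only delicate point --- that the $D_{G}$-sandwich factors between blocks act as admissibility constraints (matching terminal and initial vertices) rather than introducing further cancellation --- is exactly right, and is what makes the amalgamated version go through. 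This is precisely the argument carried out in [10] and [11].
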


\begin{theorem}
\strut (See [11]) Let $M_{G}$ be the right graph von Neumann algebra of $G,$
and let $\underset{e\in E(G)}{*_{D_{G}}^{r}}$ $M_{e}$ be the $D_{G}$-valued
reduced free product algebra of $M_{e}$'s, which is $*$-isomorphic to $%
M_{G}, $ in $B(H_{G}).$

(1) If $e$ is a loop edge, then the corresponding $D_{G}$-free block $M_{e}$
is $*$-isomorphic to the group von Neumann algebra $L(\Bbb{Z})$, generated
by the infinite cyclic abelian group $\Bbb{Z},$ which is also $*$-isomorphic
to the $L^{\infty }$-algebra $L^{\infty }(\Bbb{T}),$ where $\Bbb{T}$ is the
unit circle in $\Bbb{C}.$

(2) If $e$ is a non-loop edge, then $M_{e}$ is $*$-isomorphic to the
matricial algebra $M_{2}(\Bbb{C}),$ consisting of all $(2$ $\times $ $2)$%
-matrices. $\square $
\end{theorem}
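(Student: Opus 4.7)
The plan is to describe the subgroupoid $\Bbb{G}_e\subset \Bbb{G}$ generated by $\{e,e^{-1}\}$ explicitly in each case, and then identify $M_e$ by reading off the $*$-algebra relations satisfied by the generating pair $R_e$, $R_e^{*}=R_{e^{-1}}$ inside $M_G$. Throughout, I use the right-action convention $R_{w_1}R_{w_2}=R_{w_2 w_1}$, with the zero convention in force whenever the product is not admissible or reduces to the empty word.

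For case (1), let $e=vev$ be a loop edge at the vertex $v$. The reduction rule (RR) forces $e\cdot e^{-1}=v=e^{-1}\cdot e$, while every power $e^{n}$ with $n\in \Bbb{Z}\setminus\{0\}$ is itself reduced and the $e^{n}$ are pairwise distinct. Hence $\Bbb{G}_e=\{v\}\cup\{e^{n}:n\in \Bbb{Z}\setminus\{0\}\}$ is the infinite cyclic group with identity $v$, isomorphic to $\Bbb{Z}$. Consequently $R_eR_e^{*}=R_e^{*}R_e=R_v$, so $R_e$ restricts to a unitary on the corner $R_vH_G$ and the family $\{R_{e^{n}}\}_{n\in \Bbb{Z}}$ implements a unitary representation of $\Bbb{Z}$ there. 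Because this representation decomposes into orbits each unitarily equivalent to the regular representation of $\Bbb{Z}$, the double commutant theorem identifies the resulting von Neumann algebra with the group von Neumann algebra $L(\Bbb{Z})$, which by Fourier--Plancherel (Pontryagin duality of the dual pair $\Bbb{Z},\Bbb{T}$) is $*$-isomorphic to $L^{\infty}(\Bbb{T})$.

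For case (2), let $e=v_1ev_2$ with $v_1\neq v_2$. The only admissible words in the alphabet $\{e,e^{-1}\}$ are $e$, $e^{-1}$, $e\cdot e^{-1}$, and $e^{-1}\cdot e$; the last two reduce by (RR) to $v_1$ and $v_2$ respectively, while $e\cdot e$ and $e^{-1}\cdot e^{-1}$ are inadmissible since $v_1\neq v_2$. Thus $\Bbb{G}_e=\{v_1,v_2,e,e^{-1}\}$ has exactly four elements. Setting
\[
E_{11}:=R_{v_1},\qquad E_{22}:=R_{v_2},\qquad E_{21}:=R_e,\qquad E_{12}:=R_e^{*},
\]
a routine verification using the admissibility table yields $E_{ij}E_{k\ell}=\delta_{jk}E_{i\ell}$ and $E_{ij}^{*}=E_{ji}$. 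These form a system of $(2\times 2)$ matrix units and therefore generate a copy of $M_2(\Bbb{C})$ inside $M_G$, acting on $(R_{v_1}+R_{v_2})H_G$.

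The main obstacle is to reconcile these ``local'' descriptions on an invariant corner with the definition of $M_e=vN(\Bbb{G}_e,D_G)$ as a $D_G$-free block inside $B(H_G)$. Specifically, one must verify that the corner projection ($R_v$ in case (1) or $R_{v_1}+R_{v_2}$ in case (2)) is central in the subalgebra generated by $R(\Bbb{G}_e)$, so that $M_e$ splits orthogonally into the claimed algebra on the corner and a complementary diagonal piece supported on the remaining vertices which is absorbed into the amalgamating $D_G$ in the free product decomposition of the preceding theorem. Granting this decomposition, the $*$-isomorphism classes asserted in (1) and (2) are exactly the $*$-isomorphism classes of the non-trivial corner.
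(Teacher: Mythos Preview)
The paper does not supply a proof of this theorem: it is quoted from [11], and the end-of-proof box $\square$ is placed immediately after the statement. There is therefore no argument in the present paper to compare yours against.

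On its own merits your proposal is correct and is the standard argument. In case (1) you correctly identify $\Bbb{G}_e$ with the infinite cyclic group and recover $L(\Bbb{Z})\cong L^\infty(\Bbb{T})$ on the corner $R_vH_G$; in case (2) your matrix-unit computation is right (and the right-action convention $R_{w_1}R_{w_2}=R_{w_2w_1}$ is handled consistently). Your closing paragraph puts its finger on the only real subtlety: since $M_e=vN(\Bbb{G}_e,D_G)$ contains all of $D_G$, as a concrete $W^*$-subalgebra of $B(H_G)$ it is literally $L(\Bbb{Z})\oplus D'$ in case (1) and $M_2(\Bbb{C})\oplus D''$ in case (2), where $D'$ and $D''$ are the diagonal pieces supported on the vertices not incident to $e$. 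The corner projection ($R_v$, respectively $R_{v_1}+R_{v_2}$) is indeed central in $M_e$ because every generator of $\Bbb{G}_e$ either lives entirely in that corner or is one of the remaining vertex projections, which are orthogonal to it. The phrase ``$D_G$-free block'' in the theorem is to be read modulo this absorbed diagonal summand, exactly as you surmise; that is the convention of [10] and [11], where the complementary diagonal piece is swallowed by the amalgam $D_G$ in the reduced free product.
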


Also, we can have the following classification theorem, in terms of graph
theory.

\begin{theorem}
(See [11]) Let $G_{1}$ and $G_{2}$ be directed graphs and assume that the
shadowed graphs $\widehat{G_{1}}$ and $\widehat{G_{2}}$ are
graph-isomorphic. Then the graph von Neumann algebras $M_{G_{1}}$ and $%
M_{G_{2}}$ are $*$-isomorphic. $\square $
\end{theorem}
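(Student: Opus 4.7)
The plan is to lift a graph-isomorphism of the shadowed graphs $\widehat{G_1}$ and $\widehat{G_2}$ to a spatial $*$-isomorphism between the right graph von Neumann algebras $M_{G_1}$ and $M_{G_2}$. The overall strategy has three stages: (i) pass from the graph-isomorphism to a groupoid-isomorphism of the underlying graph groupoids, (ii) transport this to a unitary between the graph Hilbert spaces, and (iii) show that conjugation by this unitary carries the generating right multiplication operators to each other, and then extend by weak closure.

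First, given a graph-isomorphism $g : \widehat{G_1} \to \widehat{G_2}$, the remark recalled in Section 2.2 gives a groupoid-isomorphism $\Phi : \mathbb{G}_1 \to \mathbb{G}_2$. Concretely, $\Phi$ sends vertices to vertices, edges to edges (preserving shadows, since $g$ preserves the involution $e \mapsto e^{-1}$ induced on $\widehat{G}$), and extends multiplicatively to admissible reduced words. The reduction rule (RR) is preserved because $g$ preserves initial/terminal vertices of each edge, so cancellations happen in exactly the same places. Thus $\Phi(w_1 w_2) = \Phi(w_1)\Phi(w_2)$ whenever either side is defined, and $\Phi(w^{-1}) = \Phi(w)^{-1}$.

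Next, define a linear map $U : H_{G_1} \to H_{G_2}$ on the Hilbert basis by
\begin{equation*}
U\,\xi_w \;=\; \xi_{\Phi(w)}, \qquad w \in FP_r(\widehat{G_1}).
\end{equation*}
Since $\Phi$ restricts to a bijection $FP_r(\widehat{G_1}) \to FP_r(\widehat{G_2})$, $U$ carries the orthonormal basis of $H_{G_1}$ bijectively onto that of $H_{G_2}$, so it extends to a unitary. The key computation is that for $w, w' \in \mathbb{G}_1$,
\begin{equation*}
U R_w^{(1)} U^{*}\,\xi_{\Phi(w')} \;=\; U\,\xi_{w' w} \;=\; \xi_{\Phi(w'w)} \;=\; \xi_{\Phi(w')\Phi(w)} \;=\; R_{\Phi(w)}^{(2)}\,\xi_{\Phi(w')},
\end{equation*}
using the groupoid morphism property of $\Phi$ and the multiplication rule $\xi_{a}\xi_{b} = \xi_{ab}$ (with $\xi_{\emptyset} = 0$, which is consistent on both sides because $\Phi$ respects admissibility). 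Hence $U R_w^{(1)} U^{*} = R_{\Phi(w)}^{(2)}$ for every $w \in \mathbb{G}_1$.

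Finally, define $\alpha : M_{G_1} \to M_{G_2}$ by $\alpha(T) = U T U^{*}$. This is a spatial $*$-homomorphism sending the $*$-algebraic generators $\{R_w^{(1)} : w \in \mathbb{G}_1\}$ of $M_{G_1}$ onto the $*$-algebraic generators $\{R_{w'}^{(2)} : w' \in \mathbb{G}_2\}$ of $M_{G_2}$. Since conjugation by a unitary is automatically WOT-continuous, $\alpha$ carries $\overline{\mathbb{C}[R(\mathbb{G}_1)]}^{w}$ onto $\overline{\mathbb{C}[R(\mathbb{G}_2)]}^{w}$, which by definition is precisely a $*$-isomorphism $M_{G_1} \cong M_{G_2}$. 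The only real obstacle worth flagging is the careful bookkeeping in step (i), namely checking that the groupoid-isomorphism $\Phi$ is compatible with the shadow involution and with the reduction (RR) so that it really is a bijection of the reduced groupoids (and not merely of the free semigroupoids); once this is in hand, the spatial argument is routine.
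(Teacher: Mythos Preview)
Your argument is correct and is the natural one: lift the graph-isomorphism of shadowed graphs to a groupoid-isomorphism $\Phi:\Bbb{G}_{1}\to\Bbb{G}_{2}$ (exactly the fact recorded in Section~2.2 of the paper), transport $\Phi$ to a unitary $U:H_{G_{1}}\to H_{G_{2}}$ by permuting basis vectors, and check that conjugation by $U$ intertwines the generating right multiplication operators; WOT-continuity then gives the $*$-isomorphism of the weak closures.

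Note that the paper itself does not give a proof here: the statement is followed immediately by $\square$ and is attributed to [11]. So there is no in-paper argument to compare against. Your spatial implementation is exactly the expected proof and is consistent with how the paper sets things up (the right representation $(H_{G},R)$, the basis $\{\xi_{w}:w\in FP_{r}(\widehat{G})\}$, and the remark that graph-isomorphic shadowed graphs yield groupoid-isomorphic graph groupoids). The only minor bookkeeping to be careful about---which you already flag---is that $\Phi$ respects the shadow involution and the reduction (RR), so that it bijects $FP_{r}(\widehat{G_{1}})$ onto $FP_{r}(\widehat{G_{2}})$ and the convention $\xi_{\emptyset}=0$ is preserved on both sides; with that in hand your intertwining identity $U R_{w}^{(1)}U^{*}=R_{\Phi(w)}^{(2)}$ is clean on basis vectors and the rest is routine.
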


\strut \strut Unfortunately, the converse of the previous theorem is unknown
(e.g., [10], [11], and [49]).\strut

\section{Graph Trees and Graph Fractaloids}

In this section, we define the fractality on graph groupoids. Our ``new''
definition of graph fractaloids is based on the original automata
theoretical definition of graph fractaloids in the sense of [19]. In Section
3.1, we briefly introduce the automata theoretical definition of graph
fractaloids. And Section 3.2, we re-define graph fractaloids.

\subsection{Fractality on Graph Groupoids}

\emph{Automata theory} is the study of abstract machines, and we are using
it in the formulation given by von Neumann. It is related to the theory of
formal languages. In fact, automata may be thought of as the class of formal
languages they are able to recognize. In von Neumann's version, an automaton
is a finite state machine (FSM). i.e., a machine with input of symbols,
transitions through a series of states according to a transition function
(often expressed as a table). The transition function tells the automata
which state to go to next, given a current state and a current symbol. The
input is read sequentially, symbol by symbol, for example as a tape with a
word written on it, registered by the head of the automaton; the head moves
forward over the tape one symbol at a time. Once the input is depleted, the
automaton stops. Depending on the state in which the automaton stops, it is
said that the automaton either accepts or rejects the input. The set of all
the words accepted by the automaton is called the language of the automaton.
For the benefit for the readers, we offer the following references for the
relevant part of Automata Theory: [1], [33], [34], [35], [48] and [49].\strut

Let the quadruple $\mathcal{A}$ $=$ $(D,$ $Q,$ $\varphi ,$ $\psi )$ be
given, where $D$ and $Q$ are sets and\strut

\begin{center}
$\varphi $ $:$ $D$ $\times $ $Q$ $\rightarrow $ $Q$ \ \ \ and \ \ $\psi $ $:$
$D$ $\times $ $Q$ $\rightarrow $ $D$\strut
\end{center}

are maps. We say that $D$ and $Q$ are the (finite) alphabet and the state
set of $\mathcal{A},$ respectively and we say that $\varphi $ and $\psi $
are the output function and the state transition function, respectively. In
this case, the quadruple $\mathcal{A}$ is called an automaton. If the map $%
\psi (\bullet ,$ $q)$ is bijective on $D,$ for any fixed $q$ $\in $ $Q,$
then we say that the automaton $A$ is \emph{invertible}. Similarly, if the
map $\varphi (x,$ $\bullet )$ is bijective on $Q,$ for any fixed $x$ $\in $ $%
D,$ then we say that the automaton $\mathcal{A}$ is \emph{reversible}. If
the automaton $\mathcal{A}$ is both invertible and reversible, then $%
\mathcal{A}$ is said to be \emph{bi-reversible}.\strut

To help visualize the use of automata, a few concrete examples may help.
With some oversimplification, they may be drawn from the analysis and
synthesis of input / output models in Engineering, often referred to as
black box diagram: excitation variables, response variables, and
intermediate variables (e.g., see [52] and [53]).\strut

Roughly speaking, a ``\emph{undirected}''\emph{\ tree} is a connected
simplicial graph without loop finite paths. Recall that a (undirected) graph
is \emph{simplicial}, if the graph has neither loop-edges nor multi-edges
connecting distinct two vertices. A \emph{directed tree} is a connected
simplicial graph without loop finite paths. In particular, we say that a
directed tree $\mathcal{T}_{n}$ is a $n$-\emph{regular tree}, if $\mathcal{T}%
_{n}$ is rooted, one-flowed, infinite directed tree, having the same
out-degrees $n$ for all vertices (See Section 3.2, for details). For
example, the $2$-regular tree $\mathcal{T}_{2}$ can be depicted by\strut

\begin{center}
$\mathcal{T}_{2}\quad =$\quad $
\begin{array}{lllllll}
&  &  &  &  & \nearrow & \cdots \\ 
&  &  &  & \bullet & \rightarrow & \cdots \\ 
&  &  & \nearrow &  &  &  \\ 
&  & \bullet & \rightarrow & \bullet & \rightarrow & \cdots \\ 
& \nearrow &  &  &  & \searrow & \cdots \\ 
\bullet &  &  &  &  &  &  \\ 
& \searrow &  &  &  & \nearrow & \cdots \\ 
&  & \bullet & \rightarrow & \bullet & \rightarrow & \cdots \\ 
&  &  & \searrow &  &  &  \\ 
&  &  &  & \bullet & \rightarrow & \cdots \\ 
&  &  &  &  & \searrow & \cdots
\end{array}
$\strut
\end{center}

Let $\mathcal{A}$ $=$ $(D,$ $Q,$ $\varphi ,$ $\psi )$ be an automaton with $%
\left| D\right| $ $=$ $n.$ Then, we can construct automata actions $\{%
\mathcal{A}_{q}$ $:$ $q$ $\in $ $Q\}$ of $\mathcal{A},$ acting on $\mathcal{T%
}_{n}.$ Let's fix $q$ $\in $ $Q.$ Then the action of $\mathcal{A}_{q}$ is
defined on the finite words $D_{*}$ of $D,$ by\strut

\begin{center}
$\mathcal{A}_{q}\left( x\right) $ $\overset{def}{=}$ $\varphi (x,$ $q),$ for
all $x$ $\in $ $D,$\strut
\end{center}

and recursively,$\strut $

\begin{center}
$\mathcal{A}_{q}\left( (x_{1},\text{ }x_{2},\text{ ..., }x_{m})\right) $ $=$ 
$\varphi \left( x_{1},\text{ }\mathcal{A}_{q}(x_{2},...,x_{m})\right) ,$%
\strut
\end{center}

for all $(x_{1},$ ..., $x_{m})$ $\in $ $D_{*},$ where\strut

\begin{center}
$D_{*}$ $\overset{def}{=}$ $\cup _{m=1}^{\infty }$ $\left( \left\{ (x_{1},%
\text{ ..., }x_{m})\in D^{m}\left| 
\begin{array}{c}
\text{ }x_{k}\in D,\text{ for all} \\ 
k=1,...,m
\end{array}
\right. \right\} \right) .$\strut
\end{center}

Then the automata actions $\mathcal{A}_{q}$'s are acting on the $n$-regular
tree $\mathcal{T}_{n}$. In other words, all images of automata actions are
regarded as an elements in the free semigroupoid $\Bbb{F}^{+}(\mathcal{T}%
_{n})$ of the $n$-regular tree. i.e.,\strut

\begin{center}
$V(\mathcal{T}_{n})$ $\supseteq $ $D_{*}$\strut
\end{center}

and its edge set\strut

\begin{center}
$\strut E(\mathcal{T}_{n})$ $\supseteq $ $\{\mathcal{A}_{q}(x)$ $:$ $x$ $\in 
$ $D,$ $q$ $\in $ $Q\}.$\strut
\end{center}

This makes us to illustrate how the automata actions work.\strut

Let $\mathcal{C}$ $=$ $\{\mathcal{A}_{q}$ $:$ $q$ $\in $ $Q\}$ be the
collection of automata actions of the given automaton $\mathcal{A}$ $=$ $<D,$
$Q,$ $\varphi ,$ $\psi >$. Then we can create a group $G(\mathcal{A})$
generated by the collection $\mathcal{C}.$ This group $G(\mathcal{A})$ is
called the automata group generated by $\mathcal{A}.$ The generator set $%
\mathcal{C}$ of $G(\mathcal{A})$ acts \emph{fully} on the $\left| D\right| $%
-regular tree $\mathcal{T}_{\left| D\right| },$ we say that this group $G(%
\mathcal{A})$ is a fractal group. There are many ways to define fractal
groups, but we define them in the sense of automata groups. (See [1] and
[35]. In fact, in [35], Batholdi, Grigorchuk and Nekrashevych did not define
the term ``fractal'', but they provide the fractal properties.)\strut

Now, we will define a fractal group more precisely (Also see [1]). Let $%
\mathcal{A}$ be an automaton and let $\Gamma $ $=$ $G(\mathcal{A})$ be the
automata group generated by the automata actions acting on the $n$-regular
tree $\mathcal{T}_{n},$ where $n$ is the cardinality of the alphabet of $%
\mathcal{A}.$ By $St_{\Gamma }(k),$ denote the subgroup of $\Gamma $ $=$ $G(%
\mathcal{A})$, consisting of those elements of $\Gamma ,$ acting trivially
on the $k$-th level of $\mathcal{T}_{n},$ for all $k$ $\in $ $\Bbb{N}$ $\cup 
$ $\{0\}.$\strut

\begin{center}
$
\begin{array}{ll}
\mathcal{T}_{2}\text{ }= & 
\begin{array}{lllllll}
&  &  &  &  & \nearrow & \cdots \\ 
&  &  &  & \bullet & \rightarrow & \cdots \\ 
&  &  & \nearrow &  &  &  \\ 
&  & \bullet & \rightarrow & \bullet & \rightarrow & \cdots \\ 
& \nearrow &  &  &  & \searrow & \cdots \\ 
\bullet &  &  &  &  &  &  \\ 
& \searrow &  &  &  & \nearrow & \cdots \\ 
&  & \bullet & \rightarrow & \bullet & \rightarrow & \cdots \\ 
&  &  & \searrow &  &  &  \\ 
&  &  &  & \bullet & \rightarrow & \cdots \\ 
&  &  &  &  & \searrow & \cdots
\end{array}
\\ 
\text{levels:} & \,\,\,\,0\qquad \quad 1\qquad \quad 2\qquad \cdots
\end{array}
$\strut
\end{center}

Analogously, for a vertex $u$ in $\mathcal{T}_{n},$ define $St_{\Gamma }(u)$
by the subgroup of $\Gamma ,$ consisting of those elements of $\Gamma ,$
acting trivially on $u.$ Then$\strut $

\begin{center}
$St_{\Gamma }(k)$ $=$ $\underset{u\,:\,\text{vertices of the }k\text{-th
level of }\mathcal{T}_{n}}{\cap }$ $\left( St_{\Gamma }(u)\right) .$\strut
\end{center}

For any vertex $u$ of $\mathcal{T}_{n},$ we can define the algebraic
projection $p_{u}$ $:$ $St_{\Gamma }(u)$ $\rightarrow $ $\Gamma .$\strut

\begin{definition}
Let $\Gamma $ $=$ $G(\mathcal{A})$ be the automata group given as above. We
say that this group $\Gamma $ is a fractal group if, for any vertex $u$ of $%
\mathcal{T}_{n},$ the image of the projection $p_{u}\left( St_{\Gamma
}(u)\right) $ is group-isomorphic to $\Gamma ,$ after the identification of
the tree $\mathcal{T}_{n}$ with its subtree $\mathcal{T}_{u}$ with the root $%
u.$\strut
\end{definition}

For instance, if $u$ is a vertex of the $2$-regular tree $\mathcal{T}_{2}$,
then we can construct a subtree $\mathcal{T}_{u},$ as follows:\strut

\begin{center}
$\mathcal{T}_{2}$ $=$ $
\begin{array}{lllllll}
&  &  &  &  & \nearrow & \cdots \\ 
&  &  &  & \bullet & \rightarrow & \cdots \\ 
&  &  & \nearrow &  &  &  \\ 
&  & \underset{u}{\bullet } & \rightarrow & \bullet & \rightarrow & \cdots
\\ 
& \nearrow &  &  &  & \searrow & \cdots \\ 
\bullet &  &  &  &  &  &  \\ 
& \searrow &  &  &  & \nearrow & \cdots \\ 
&  & \bullet & \rightarrow & \bullet & \rightarrow & \cdots \\ 
&  &  & \searrow &  &  &  \\ 
&  &  &  & \bullet & \rightarrow & \cdots \\ 
&  &  &  &  & \searrow & \cdots
\end{array}
$ $\longmapsto $ $\mathcal{T}_{u}$ $=$ $
\begin{array}{lllll}
&  &  & \nearrow & \cdots \\ 
&  & \bullet & \rightarrow & \cdots \\ 
& \nearrow &  &  &  \\ 
\underset{u}{\bullet } & \rightarrow & \bullet & \rightarrow & \cdots \\ 
&  &  & \searrow & \cdots
\end{array}
$\strut
\end{center}

As we can check, the graphs $\mathcal{T}_{2}$ and $\mathcal{T}_{u}$ are
graph-isomorphic. So, the above definition shows that if the automata
actions $\mathcal{A}_{q}$'s of $\mathcal{A}$ are acting \emph{fully} on $%
\mathcal{T}_{n},$ then the automata group $G(\mathcal{A})$ is a fractal
group.

The original definition of (graph) fractaloids in [19] are based on that of
fractal groups (Also, see [20] and [22]). To detect the fractality on a
connected locally finite directed graph $G$ (or the graph groupoid $\Bbb{G}$
of $G$), we define the corresponding automaton

\begin{center}
$\mathcal{A}_{G}$ $=$ $(\pm X,$ $E(\widehat{G}),$ $\varphi ,$ $\psi ),$
\end{center}

induced by $G.$ To do that we put the weight on $G$ (or the labeing on $G$)
by the labeling set $X.$ And then consider the automata actions $\{\mathcal{A%
}_{w}$ $:$ $w$ $\in $ $\Bbb{F}^{+}(\widehat{G})\}$: if the actions act fully
on the $2N$-regular tree $\mathcal{T}_{2N},$ then the groupoid $\Bbb{G}$ has
fractality, like fractal groups, where

\begin{center}
$N$ $=$ $\max \{\deg _{out}(v)$ $:$ $v$ $\in $ $V(G)\}$ $\in $ $\Bbb{N},$ in 
$G.$
\end{center}

Let $G$ be a given connected locally finite directed graph with the number $%
N,$ the maximum of the out-degrees of all vertices of $G.$ As usual, we
understand the real plane $\Bbb{R}^{2}$ as a 2-dimensional space generated
by the horizontal axis (or the $x$-axis) and the vertical axis (or the $y$%
-axis), which are homeomorphic to $\Bbb{R}.$ For the given number $N,$
define the lattices $l_{1},$ ..., $l_{N}$ in $\Bbb{R}^{2}$ by

\begin{center}
$l_{k}$ $\overset{def}{=}$ $\overrightarrow{(1,\text{ }e^{k})},$ for all $k$ 
$=$ $1,$ ..., $N,$
\end{center}

where $\overrightarrow{(t_{1},\text{ }t_{2})}$ means the vector connecting
the origin $(0,$ $0)$ to the point $(t_{1},$ $t_{2}),$ for $t_{1},$ $t_{2}$ $%
\in $ $\Bbb{R}.$ We call $l_{1},$ ..., $l_{N}$, the \emph{upward lattices for%
} $N.$ Define the set $X$ by the collection of all upward lattices for $N.$
i.e., $X$ $=$ $\{l_{1},$ ..., $l_{N}\}.$ With respect to the upward lattices 
$l_{1},$ ..., $l_{N},$ define the downward lattices $l_{-1},$ ..., $l_{-N}$
for $N,$ by

\begin{center}
$l_{-k}$ $\overset{def}{=}$ $\overrightarrow{(1,\text{ }-e^{k})},$ for all $%
k $ $=$ $1,$ ..., $N.$
\end{center}

Define the set $-X$ by the collection of all downward lattices for $N.$
i.e., $-X$ $=$ $\{l_{-1},$ ..., $l_{-N}\}.$ Define the set $\pm X$ by the
union of $X$ and $-X.$ i.e., $\pm X$ $=$ $X$ $\cup $ $-X.$ Then the set $\pm
X$ is called the \emph{labeling set of} $G$ (or $\Bbb{G}$).

For the given lattices in $\pm X,$ we can construct the \emph{lattice paths
in} $\Bbb{R}^{2}$ by the following ruls:

\begin{center}
$
\begin{array}{ll}
l_{i}l_{j}= & \text{the vector sum of }l_{i}\text{ and }l_{j} \\ 
& \text{by identifying the ending point }(1,\text{ }\varepsilon _{i}e^{i})
\\ 
& \text{of }l_{i}\text{ to the starting point }(0,\text{ }0)\text{ of }l_{j},
\end{array}
$
\end{center}

for all $i,$ $j$ $\in $ $\{\pm 1,$ ..., $\pm N\}$; inductively, we can
construct the lattice paths $l_{i_{1}}$ $l_{i_{2}}$ ... $l_{i_{n}},$ for all 
$n$ $\in $ $\Bbb{N},$ where $i_{1},$ ..., $i_{n}$ $\in $ $\{\pm 1,$ ..., $%
\pm N\}.$ Define the \emph{lattice path set} $\mathcal{L}_{N}$ \emph{%
generated by} $\pm X$ by the collectiong of all lattice paths defined as
above. Let $l$ $=$ $l_{i_{1}}$ ... $l_{i_{n}}$ $\in $ $\mathcal{L}_{N}.$
Then the length $\left| l\right| $ of $l$ is defined to be the number $n,$
the cardinality of the lattices generating the lattice path $l.$ So, the
lattice path set $\mathcal{L}_{N}$ is decomposed by

\begin{center}
$\mathcal{L}_{N}$ $=$ $\underset{k=1}{\overset{\infty }{\sqcup }}$ $\mathcal{%
L}_{N}(k),$
\end{center}

where

\begin{center}
$\mathcal{L}_{N}(k)$ $\overset{def}{=}$ $\{l$ $\in $ $\mathcal{L}_{N}$ $:$ $%
\left| l\right| $ $=$ $k\},$ for all $k$ $\in $ $\Bbb{N}.$
\end{center}

Clearly, $\pm X$ $=$ $\mathcal{L}_{N}(1),$ by definition.

Now, put the weights on edges of the shadowed graphs $\widehat{G}$ of $G$.
The \emph{weighting process on} $G$ is as follows:

(3.1.1) If $v$ $\in $ $V(G)$ and assume that $\deg _{out}(v)$ $=$ $k$ $\in $ 
$\Bbb{N}$, then

\begin{center}
$0$ $\leq $ $k$ $\leq $ $N,$ in $\Bbb{N}.$
\end{center}

Indeed, by the definition of $N,$ the out-degree $k$ $\leq $ $N$ in $\Bbb{N}%
. $ Now, let $e_{1},$ ..., $e_{k}$ be the edges in $E(G),$ having their
initial vertex $v,$ i.e., $e_{j}$ $=$ $v$ $e_{j},$ for all $j$ $=$ $1,$ ..., 
$k.$ Then, by the suitable re-arrange of these edges, we can give the \emph{%
lattice weights} $l_{j}$ $\in $ $X$ to the edges $e_{j}$, for all $j$ $=$ $%
1, $ ..., $k.$ Let's denote the weights of $e_{j}$'s by $\varpi (e_{j}),$
then, under our setting, $\varpi (e_{j})$ $=$ $l_{j},$ for all $j$ $=$ $1,$
..., $k.$ Do this process for all $v$ $\in $ $V(G).$ The graph $G$ with the
weighting process $\varpi $ is called the \emph{canonical labeled graph},
denoted by $(G,$ $\varpi ).$

(3.1.2) For the shadowed graph $G^{-1}$ of $G,$ we do the similar process
like (3.1.1). But, in this time, we use the set $-X,$ instead of $X.$ More
precisely, if $e$ $\in $ $E(G)$ with $\varpi (e)$ $=$ $l_{j}$ $\in $ $X,$
for $j$ $\in $ $\{1,$ ..., $k\},$ then take the weight $\varpi (e^{-1})$ of
the shadow $e^{-1}$ of $e$ by $l_{-j}$ $\in $ $-X.$ The pair $(G^{-1},$ $%
\varpi )$ is said to be the \emph{canonical labeled shadow of} $(G,$ $\varpi
).$

(3.1.3) The shadowed graph $\widehat{G}$ of $G$ can have the weighting
process based on (3.1.1) and (3.1.2). i.e., if $e$ $\in $ $E(\widehat{G}),$
then the weight $\varpi (e)$ of $e$ is determined by (3.1.1), whenever $e$ $%
\in $ $E(G),$ and it is determined by (3.1.2), whenever $e$ $\in $ $%
E(G^{-1}).$ Recall that $E(\widehat{G})$ $=$ $E(G)$ $\sqcup $ $E(G^{-1}).$
The pair $(\widehat{G},$ $\varpi )$ is called the\emph{\ canonical labeled
shadowed graph of} $(G,$ $\varpi ).$

In the rest of this section, all connected locally finite directed graphs
(resp., their shadowed graphs) are canonically labeled by the labeling set $%
X $ (resp. $\pm X$), as in (3.1.1), (resp., (3.1.2), and (3.1.3)).

Let's denote the empty lattice by $\emptyset _{X}$. i.e.,

\begin{center}
$\emptyset _{X}$ $=$ $\overrightarrow{(0,\text{ }0)}$ $\in $ $\Bbb{R}^{2}.$
\end{center}

Define the sets $\pm X_{0}$ and $\pm X_{0}^{*}$ by

\begin{center}
$\pm X_{0}$ $\overset{def}{=}$ $\{\emptyset \}$ $\cup $ $X$ $\cup $ $(-X),$
\end{center}

\strut and

\begin{center}
$\pm X_{0}^{*}$ $\overset{def}{=}$ $\{\emptyset _{X}\}$ $\cup $ $\mathcal{L}%
_{N}.$
\end{center}

\strut Define the subset $E(\widehat{G})_{0}$ of the free semigroupoid $\Bbb{%
F}^{+}(\widehat{G})$ of $\widehat{G}$ by

\begin{center}
$E(\widehat{G})_{0}$ $\overset{def}{=}$ $E(\widehat{G})$ $\cup $ $%
\{\emptyset \}.$
\end{center}

Now, for the given graph $G,$ define the corresponding automaton $\mathcal{A}%
_{G}$ by

\begin{center}
$\mathcal{A}_{G}$ $=$ $(\pm X_{0},$ $E(\widehat{G})_{0},$ $\varphi ,$ $\psi
),$
\end{center}

satisfying that

\begin{center}
$\varphi (l,$ $e)$ $\overset{def}{=}$ $\left\{ 
\begin{array}{ll}
l & \text{if }\exists e_{o}\in E(\widehat{G}),\text{ s.t., }\varpi (e_{o})=l
\\ 
\emptyset _{X} & \text{otherwise.}
\end{array}
\right. $
\end{center}

and

\begin{center}
$\psi (l,$ $e)$ $\overset{def}{=}$ $\left\{ 
\begin{array}{ll}
e_{o} & \text{if }\varphi (l,\text{ }e)=l \\ 
\emptyset & \text{otherwise,}
\end{array}
\right. $
\end{center}

for all $l$ $\in $ $\pm X_{0}$ and $e$ $\in $ $E(\widehat{G})_{0},$ with

\begin{center}
$\varphi (\emptyset _{X},$ $e)$ $=$ $\emptyset _{X},$ for all $e$ $\in $ $E(%
\widehat{G})_{0}$
\end{center}

and

\begin{center}
$\psi (l,$ $\emptyset )$ $=$ $\emptyset ,$ for all $l$ $\in $ $\pm X_{0}.$
\end{center}

Such an automaton $\mathcal{A}_{G}$ is called the \emph{graph-automaton
induced by }$G$ (or, in short, the $G$-\emph{automaton}). Then we can
construct the automata actions $\{\mathcal{A}_{w}$ $:$ $w$ $\in $ $\Bbb{F}%
^{+}(\widehat{G})\},$ acting on $\pm X_{0}^{*},$ and it is easy to check
that they act on the $2N$-regular tree $\mathcal{T}_{2N},$ because all
elements of $\pm X_{0}^{*}$ can be embedded in $\mathcal{T}_{2N},$ in the
natural manner. Assume that $\mathcal{T}^{G}$ is a full-subgraph of $%
\mathcal{T}_{2N},$ where the automata actions of $\mathcal{A}_{G}$ act
``fully'' on. Then this full-subgraph $\mathcal{T}^{G}$ is called the \emph{%
automata tree of }$\mathcal{A}_{G}$ (or $\mathcal{A}_{G}$-\emph{tree}).

For any nonempty $\varphi (l,$ $w)$, for $l$ $\in $ $\mathcal{L}_{N},$ and $%
w $ $\in $ $FP_{r}(\widehat{G}),$ we can define the tree $\mathcal{T}_{w},$
where the automata actions

\begin{center}
$\{\mathcal{A}_{w^{\prime }}$ $:$ $w^{\prime }$ $=$ $ww^{\prime \prime },$
for $w^{\prime \prime }$ $\in $ $\Bbb{F}^{+}(\widehat{G})\}$
\end{center}

are acting on. We call the trees $\mathcal{T}_{w}$ the $w$-\emph{parts of} $%
\mathcal{T}_{2N}$, for all $w$ $\in $ $FP_{r}(\widehat{G}).$ The reason why
we call $\mathcal{T}_{w}$'s the $w$-parts is that they are full-subgraph of
the automata tree $\mathcal{T}^{G}$ of $\mathcal{T}_{2N}.$ Similar to the
definition of fractality on groups, we can define the fractality on graph
groupoids as follows.

\begin{definition}
Let $G$ be a connected locally finite canonical labeled graph and let $%
\mathcal{A}_{G}$ be the $G$-automaton. Then the graph groupoid $\Bbb{G}$ of $%
G$ is said to be a graph fractaloid, if all $w$-parts $\mathcal{T}_{w}$ are
graph-isomorphic to the $\mathcal{A}_{G}$-tree $\mathcal{T}^{G},$ for all $w$
$\in $ $FP_{r}(\widehat{G}).$
\end{definition}

\strut The above definition is a natural extension of fractality on groups
to that on graph fractaloids. Actually the above definition can be extended
to define the fractality on groupois with fractality. So, in [19], instead
of using the term ``graph fractaloids,'' we simply use the term
``fractaloids.'' However, we prefer to use the term graph fractaloids,
because all graph groupoids are groupoids, but the converse does not hold.
In [19], we found the following two characterizations of graph fractaloids.

The following theorem is the automata-theoretical characterization of graph
fractaloids.

\begin{theorem}
(See [19]) Let $G$ be a canonical labeled graph with

\begin{center}
$N$ $=$ $\max \{\deg _{out}(v)$ $:$ $v$ $\in $ $V(G)\}$ $\in $ $\Bbb{N}.$
\end{center}

and let $\mathcal{A}_{G}$ be the $G$-automaton. Then the graph groupoid $%
\Bbb{G}$ of $G$ is a graph fractaloid, if and only if the automata actions $%
\{\mathcal{A}_{w}$ $:$ $w$ $\in $ $\Bbb{F}^{+}(\widehat{G})\}$ act fully on
the $2N$-regular tree $\mathcal{T}_{2N}.$ $\square $
\end{theorem}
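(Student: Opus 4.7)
The plan is to prove the two implications separately, leveraging the self-similarity of the rooted $2N$-regular tree $\mathcal{T}_{2N}$ and the way the canonical labeling $\varpi$ ties edges incident at a vertex $v$ of $G$ to the upward lattices in $X$ and their shadows in $-X$.

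For the reverse direction, assume the automata actions $\{\mathcal{A}_{w} : w \in \Bbb{F}^{+}(\widehat{G})\}$ act fully on $\mathcal{T}_{2N}$, so by definition the $\mathcal{A}_{G}$-tree $\mathcal{T}^{G}$ coincides with $\mathcal{T}_{2N}$. Fix any $w \in FP_{r}(\widehat{G})$. The $w$-part $\mathcal{T}_{w}$ is the rooted subtree of $\mathcal{T}_{2N}$ traced out by the continued actions $\{\mathcal{A}_{w w''} : w'' \in \Bbb{F}^{+}(\widehat{G})\}$. Because any rooted subtree of the rooted $2N$-regular tree at an interior vertex is again graph-isomorphic to $\mathcal{T}_{2N}$ itself, we obtain $\mathcal{T}_{w} \cong \mathcal{T}_{2N} = \mathcal{T}^{G}$ for every $w$. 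Hence $\Bbb{G}$ is a graph fractaloid.

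For the forward direction, I would argue by contrapositive. Assume the actions do not act fully on $\mathcal{T}_{2N}$, so $\mathcal{T}^{G}$ is a proper full-subgraph of $\mathcal{T}_{2N}$. Then there exists a vertex $u$ of $\mathcal{T}^{G}$ together with some label $l \in \pm X$ such that no edge of $\mathcal{T}^{G}$ with weight $l$ emanates from $u$. Pulling this back through $\mathcal{A}_{G}$ via the formulas for $\varphi$ and $\psi$, the vertex $u$ corresponds to a word $w \in \Bbb{F}^{+}(\widehat{G})$ whose reduction ends at some $v \in V(\widehat{G})$ at which no edge of $\widehat{G}$ carries the weight $l$; by the definition of $\varpi$, either $\deg_{out}(v) < N$ or $\deg_{in}(v) < N$ in $G$. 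Choose a reduced word $w_{0} \in FP_{r}(\widehat{G})$ terminating at $v$. Meanwhile, since $N$ is attained as a maximum, the root of $\mathcal{T}^{G}$ (or the root of some other $w$-part identified as the reference model) does realize the full complement of labels emanating from such a maximal vertex, while $\mathcal{T}_{w_{0}}$ is forced to be missing at least one label direction at its own root. Thus $\mathcal{T}_{w_{0}} \not\cong \mathcal{T}^{G}$, violating the fractaloid hypothesis.

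The main obstacle is the forward direction: the careful bookkeeping that matches a "missing" lattice direction at an abstract vertex of $\mathcal{T}^{G}$ with a concrete vertex $v$ of $G$ whose incidence structure in $\widehat{G}$ falls short of $2N$, and that locates a reduced word $w_{0} \in FP_{r}(\widehat{G})$ pointing exactly to $v$ so the deficiency propagates to the root of $\mathcal{T}_{w_{0}}$. The construction of the $G$-automaton makes this correspondence canonical, but the reduction (RR) on $\Bbb{G}$ must be handled delicately to ensure $w_{0}$ is genuinely reduced with terminal vertex $v$. Once that matching is in place, comparison of out-degrees at the roots of $\mathcal{T}_{w_{0}}$ and $\mathcal{T}^{G}$ immediately yields the required contradiction.
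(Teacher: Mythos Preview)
The paper does not actually prove this theorem; it is quoted from reference [19] and closed with a $\square$, so there is no in-paper argument to compare against. I can therefore only assess your sketch on its own.

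Your reverse direction is sound in spirit: once $\mathcal{T}^{G}=\mathcal{T}_{2N}$, self-similarity of the $2N$-regular tree forces every $w$-part to be a full subtree at an interior node, hence graph-isomorphic to $\mathcal{T}_{2N}=\mathcal{T}^{G}$. You should make explicit, however, that ``acts fully'' guarantees the $w$-part is the \emph{entire} subtree at that node, not merely some sub-tree of it; this follows because full action means every label is realized at every reachable state, so continuation from $w$ again exhausts all $2N$ branches at each level.

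The forward (contrapositive) direction has a genuine gap. You locate a vertex $v$ of $G$ at which some label $l\in\pm X$ is unavailable and a reduced word $w_{0}$ terminating at $v$, then claim $\mathcal{T}_{w_{0}}\not\cong\mathcal{T}^{G}$ because $\mathcal{T}_{w_{0}}$ is ``missing at least one label direction at its root'' while the root of $\mathcal{T}^{G}$ (or of some other $w$-part) realizes all labels. But the fractaloid condition asks only for \emph{graph}-isomorphism of the $w$-parts with $\mathcal{T}^{G}$, and graph-isomorphism of rooted trees does not see edge labels; it sees only degrees. So a ``missing label'' at the root of $\mathcal{T}_{w_{0}}$ is not by itself an obstruction unless it forces a smaller root out-degree than that of the comparison tree. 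You therefore need to argue at the level of out-degrees in $\widehat{G}$: if the fractaloid condition held, all $\mathcal{T}_{w}$ would be mutually isomorphic, hence all terminal vertices of reduced words would have the same $\widehat{G}$-out-degree; combined with the canonical labeling and the fact that $N$ is attained, this forces $\deg_{out}(v)=N=\deg_{in}(v)$ for every $v$, which is exactly the ``acts fully on $\mathcal{T}_{2N}$'' condition. As written, your sketch conflates the label-based and degree-based pictures, and the parenthetical ``or the root of some other $w$-part identified as the reference model'' is where the argument is doing real work that is not yet on the page.
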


\strut The following theorem is the algebraic characterization of graph
fractaloids.

\begin{theorem}
(See [19]) Let $G$ be given as in the previous theorem. Then the graph
groupoid $\Bbb{G}$ of $G$ is a graph fractaloid, if and only if the $%
\mathcal{A}_{G}$-tree $\mathcal{T}^{G}$ is graph-isomorphic to the $2N$%
-regular tree $\mathcal{T}_{2N}.$ $\square $
\end{theorem}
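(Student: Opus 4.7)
The plan is to reduce the statement to the automata-theoretical characterization of graph fractaloids proved in the previous theorem, via a short rigidity observation about rooted regular trees. First I would unpack the construction of $\mathcal{T}^{G}$: by definition, $\mathcal{T}^{G}$ is the full-subgraph of $\mathcal{T}_{2N}$ consisting precisely of those vertices of $\pm X_{0}^{*}$ (embedded naturally into $\mathcal{T}_{2N}$) that arise as values $\varphi (l,\,w)$ of the automata actions, together with the corresponding admissible edges. In particular, saying that the automata actions $\{\mathcal{A}_{w} : w \in \Bbb{F}^{+}(\widehat{G})\}$ act \emph{fully} on $\mathcal{T}_{2N}$ means, by the very definition of $\mathcal{T}^{G}$, that $\mathcal{T}^{G}$ equals $\mathcal{T}_{2N}$ as a subgraph, not merely up to isomorphism.

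The easy direction ($\Rightarrow$) is then immediate: if $\Bbb{G}$ is a graph fractaloid, the previous theorem gives that the automata actions act fully on $\mathcal{T}_{2N}$, and the preceding remark forces $\mathcal{T}^{G} = \mathcal{T}_{2N}$, which is certainly graph-isomorphic to $\mathcal{T}_{2N}$.

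For the converse ($\Leftarrow$), suppose $\mathcal{T}^{G} \cong \mathcal{T}_{2N}$ via a graph-isomorphism $g$. Both are rooted, one-flowed, infinite directed trees, and every vertex of $\mathcal{T}_{2N}$ has out-degree $2N$. Since $g$ preserves out-degrees, every vertex $v \in V(\mathcal{T}^{G})$ has out-degree $2N$ inside $\mathcal{T}^{G}$. On the other hand, because $\mathcal{T}^{G}$ is a full-subgraph of $\mathcal{T}_{2N}$, the out-degree of $v$ inside $\mathcal{T}^{G}$ is at most its out-degree inside $\mathcal{T}_{2N}$, which is also $2N$. Equality of these two numbers at every $v \in V(\mathcal{T}^{G})$ means that every outgoing edge of $v$ in $\mathcal{T}_{2N}$ already lies in $\mathcal{T}^{G}$. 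Starting from the common root $(0,0)$ (which belongs to $\mathcal{T}^{G}$ since $\emptyset_{X} \in \pm X_{0}^{*}$ is the base point of all automata actions) and inducting on the level in $\mathcal{T}_{2N}$, this forces $\mathcal{T}^{G}$ to contain every vertex and every edge of $\mathcal{T}_{2N}$, so $\mathcal{T}^{G} = \mathcal{T}_{2N}$. Consequently the automata actions act fully on $\mathcal{T}_{2N}$, and the previous theorem gives that $\Bbb{G}$ is a graph fractaloid.

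The main obstacle I expect is ensuring that the rigidity step is aligned with the conventions of the paper: specifically, that the natural embedding $\pm X_{0}^{*} \hookrightarrow V(\mathcal{T}_{2N})$ used to define $\mathcal{T}^{G}$ is compatible with the levels of $\mathcal{T}_{2N}$ and identifies the root $\emptyset_{X}$ with the root of $\mathcal{T}_{2N}$, so that the graph-isomorphism $g$ can be assumed to fix the root and thus induces identifications of levels. Once these identifications are recorded, the rest is a clean out-degree bookkeeping along the lines sketched above, and the entire proof reduces to an application of the previous theorem.
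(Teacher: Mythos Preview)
The paper does not supply a proof of this theorem; it is quoted from [19] and closed with $\square$ immediately after the statement. Your proposal, which reduces the claim to the preceding automata-theoretical characterization via the rigidity observation that a full-subgraph of $\mathcal{T}_{2N}$ abstractly graph-isomorphic to $\mathcal{T}_{2N}$ must equal $\mathcal{T}_{2N}$, is a correct and natural way to obtain this corollary.

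One small remark: the obstacle you flag at the end is not really needed. Your out-degree argument never uses that $g$ fixes the root or respects levels. All you use is that every vertex of $\mathcal{T}_{2N}$ has out-degree $2N$ (so the abstract isomorphism forces the same for every vertex of $\mathcal{T}^{G}$), that $E(\mathcal{T}^{G}) \subseteq E(\mathcal{T}_{2N})$ by the paper's definition of full-subgraph, and that the root of $\mathcal{T}_{2N}$ lies in $V(\mathcal{T}^{G})$. The level-by-level induction then runs entirely inside $\mathcal{T}_{2N}$ and makes no reference to $g$ beyond the out-degree count, so you may drop the root-preservation discussion.
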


\strut The above theorems in fact show the difference between fractaloids
(groupoids with fractality) and graph fractaloids. Motivated by the previous
theorems, without using the automata theory, we can re-define graph
fractaloids in Section 3.2. In the rest of this section, we introduce
several examples for graph fractaloids. For more interesting examples, see
[22].

\begin{example}
(1) Let $O_{N}$ be the one-vertex-$N$-loop-edge graph, for $N$ $\in $ $\Bbb{N%
}.$ Then the graph groupoid $\Bbb{O}_{N}$ is a graph fractaloid. Recall
that, in fact, $\Bbb{O}_{N}$ is a group, which is group-isomorphic to the
free group $F_{N}$ with $N$-generators. And the free groups are fractal
groups (See [1]).

(2) Let $K_{N}$ be the one-flow circulant graph with $N$-vertices with

\begin{center}
$V(K_{N})$ $=$ $\{v_{1},$ ..., $v_{N}\},$
\end{center}

and

\begin{center}
$E(K_{N})$ $=$ $\{e_{j}$ $=$ $v_{j}$ $e_{j}$ $v_{j+1}$ $:$ $j$ $=$ $1,$ ..., 
$N,$ with $v_{N+1}$ $\overset{def}{=}$ $v_{1}\}.$
\end{center}

Then the graph groupoid $\Bbb{K}_{N}$ of $K_{N}$ is a graph fractaloid.

(3) Let $L_{\infty }$ be the infinite linear graph, graph-isomorphic to

\begin{center}
$\cdot \cdot \cdot \rightarrow \bullet \rightarrow \bullet \rightarrow
\bullet \rightarrow \cdot \cdot \cdot $
\end{center}

Then the graph groupoid $\Bbb{L}_{\infty }$ is a graph fractaloid.\strut

(4) Let $C_{N}$ be the complete graph with $N$-vertices. Recall that we say
that a graph $G$ is complete, if, for any pair $(v_{1},$ $v_{2})$ of a
distinct vertices, there always exists an edge $e$ $\in $ $E(G),$ such that $%
e$ $=$ $v_{1}$ $e$ $v_{2}.$ Then the graph groupoid $\Bbb{G}(C_{N})$ of $%
C_{N}$ is a graph fractaloid.
\end{example}

In [16], we obtain the following graph-theoretical characterization of graph
fractaloids, induced by connected locally finite (finite or infinite)
directed graphs.

\begin{theorem}
(See [16]) Let $G$ be a connected locally finite directed graph with its
graph groupoid $\Bbb{G}.$ Then $\Bbb{G}$ is a graph fractaloid, if and only
if the out-degrees and the in-degrees of all vertices are identical in $G.$
i.e., a graph $G$ generates a graph fractaloid, if and only if

\begin{center}
$\deg _{out}(v)$ $=$ $N$ $=$ $\deg _{in}(v),$ in $G,$
\end{center}

for all $v$ $\in $ $V(G),$ where

\begin{center}
$N$ $=$ $\max \{\deg _{out}(v)$ $:$ $v$ $\in $ $V(G)\}.$
\end{center}

$\square $
\end{theorem}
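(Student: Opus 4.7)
The plan is to invoke the algebraic characterization of graph fractaloids proved just above, namely that $\Bbb{G}$ is a graph fractaloid if and only if the $\mathcal{A}_{G}$-tree $\mathcal{T}^{G}$ is graph-isomorphic to the $2N$-regular tree $\mathcal{T}_{2N}$, and then to reduce the desired equivalence to a purely local degree condition at each vertex of $G$. In other words, I want to translate an apparently global statement about infinite trees into a statement about the one-step branching of $\widehat{G}$ at a single vertex, which can be read off directly from the canonical labeling.

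First I would observe that every node of $\mathcal{T}^{G}$ corresponds to a reduced word $w \in FP_{r}(\widehat{G})$ with some terminal vertex $v = r(w)$, and the children of that node are indexed by the labels $l \in \pm X$ for which there exists an edge $e \in E(\widehat{G})$ with initial vertex $v$ and $\varpi(e) = l$. Since $\mathcal{T}_{2N}$ is $2N$-regular and $\pm X$ has exactly $2N$ elements, the isomorphism $\mathcal{T}^{G} \cong \mathcal{T}_{2N}$ forces, at every reachable vertex $v$, that all $2N$ labels in $\pm X$ actually appear on edges of $\widehat{G}$ leaving $v$. Next I would unpack the canonical labeling convention: at $v$ the outgoing labels in $\widehat{G}$ form the set $\{l_{1}, \ldots, l_{\deg_{out}(v)}\} \cup \{l_{-1}, \ldots, l_{-\deg_{in}(v)}\}$, which equals $\pm X = \{l_{\pm 1}, \ldots, l_{\pm N}\}$ if and only if $\deg_{out}(v) = N = \deg_{in}(v)$. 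Connectedness of $G$ then promotes ``reachable vertex'' to ``every vertex of $V(G)$'', proving the $(\Rightarrow)$ direction. For the $(\Leftarrow)$ direction, a short induction on path length shows that once the degree condition holds at every vertex, every lattice path in $(\pm X)^{*}$ can be realized as a reduced extension in $\widehat{G}$, so $\mathcal{T}^{G}$ is isomorphic to $\mathcal{T}_{2N}$ and the characterization theorem concludes the argument.

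The main obstacle I anticipate is bookkeeping around the reduction rule (RR) and its compatibility with the automaton $\mathcal{A}_{G}$: naively, reduced words forbid immediate backtracking by the shadow of the last edge, so the tree of reduced words looks $(2N-1)$-regular at interior nodes rather than $2N$-regular. The setup in Section~3.1 handles this by padding with the empty symbols $\emptyset_{X} \in \pm X_{0}$ and $\emptyset \in E(\widehat{G})_{0}$, so that a prohibited backtracking transition produces an empty output rather than a missing branch. I need to verify carefully that the ``full action'' condition is genuinely equivalent to the pointwise statement ``every label of $\pm X$ appears on an out-edge at every vertex of $\widehat{G}$,'' as opposed to some strictly stronger condition phrased at the level of reduced words. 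Once this dictionary between $\mathcal{A}_{G}$-trees and vertex-local label availability is pinned down, the rest of the argument is essentially combinatorial and local, and I expect no further substantive obstacles.
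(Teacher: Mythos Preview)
The paper does not actually prove this theorem: it is quoted from [16] and closed with a $\square$, and the text immediately following explicitly says ``the above theorem is proven in [16], thanks to the automata-theoretical and algebraic characterization of graph fractaloids obtained in [19].'' So there is no in-paper argument to compare against beyond that remark. Your strategy --- reduce to the characterization $\mathcal{T}^{G}\cong\mathcal{T}_{2N}$ and then read that off as a vertex-local label-availability condition --- is exactly the route the paper attributes to [16], so at the level of overall approach you are aligned with what the authors have in mind.

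That said, one step in your sketch is not correct as written and would need genuine work. You assert that at a vertex $v$ the outgoing labels in $\widehat{G}$ are $\{l_{1},\ldots,l_{\deg_{out}(v)}\}\cup\{l_{-1},\ldots,l_{-\deg_{in}(v)}\}$. The positive half is fine by (3.1.1), but the negative half is not: by (3.1.2) a shadow edge $e^{-1}$ leaving $v$ inherits the label $l_{-j}$ where $j$ is the position of $e$ among the out-edges at its \emph{source}, not at $v$. So the multiset of negative labels at $v$ can repeat indices and omit others, and ``$\deg_{in}(v)=N$'' is neither obviously necessary nor obviously sufficient for $\{l_{-1},\ldots,l_{-N}\}$ to all appear at $v$. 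To make your argument go through you need an additional combinatorial step (essentially a system-of-distinct-representatives / edge-coloring argument) showing that when the degree condition holds one can \emph{choose} the canonical labeling so that every negative label appears at every vertex, and conversely that full label availability forces $\deg_{in}(v)=N$ regardless of the labeling chosen. This is the substantive content hidden in ``suitable re-arrange'' in (3.1.1), and it is more serious than the (RR) bookkeeping you flagged.
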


\strut By the previous theorem, without using automata theory, we can define
the graph fractaloids in the following section. However, we want to
emphasize that the above theorem is proven in [16], thanks to the
automata-theoretical and algebraic characterization of graph fractaloids
obtained in [19], based on the automata-theoretical setting on graph
groupoids.

\subsection{Graph Fractaloids}

In this section, we construct the graph tree $\mathcal{T}_{G}$ induced by a
given connected locally finite directed graph $G.$ Throughout this section,
all graphs are automatically assumed to be connected, and locally finite.
Recall that a directed graph, having neither multi-edges nor loop finite
paths, is called a \emph{directed tree}. If a directed tree $G$ has at least
one vertex $v$, satisfying that $\deg _{in}(v)$ $=$ $0,$ is said to be a 
\emph{directed} \emph{tree} \emph{with root}(\emph{s}). The vertices with $0$
in-degree are called the roots of $G.$ Suppose we have a directer tree $G$
with roots, and assume that we fix one root $v_{0}.$ Then $G$ is called a 
\emph{rooted tree} with its root $v_{0}.$ Now, let $G$ be a rooted tree with
its root $v_{0},$ and assume that the direction of $G$ is one-flowed from
the root $v_{0}$ (equivalently, $v_{0}$ is the only root of $G$). Then $G$
is a \emph{one-flow rooted tree}. An one-flow rooted tree is infinite, then
it is said to be a \emph{growing rooted tree}. Assume that a growing rooted
tree $G$ satisfies that, for any $v$ $\in $ $V(G),$ the out-degree $\deg
_{out}(v)$ are all identical. Then $G$ is a\emph{\ regular tree}. In
particular, if $\deg _{out}(v)$ $=$ $N,$ for all $v$ $\in $ $V(G),$ then
this regular tree $G$ is called the $N$-\emph{regular tree}. To emphasize
the regularity of this tree $G,$ we denote this $N$-regular tree $G$ by $%
\mathcal{T}_{N}.$ For instance, the $2$-regular tree $\mathcal{T}_{2}$ is as
follows:

\begin{center}
$\mathcal{T}_{2}$ $=$ $
\begin{array}{lllllllll}
&  &  &  &  &  &  & \bullet & \cdots \\ 
&  &  &  &  &  & \nearrow &  &  \\ 
&  &  &  &  & \bullet & \rightarrow & \bullet & \cdots \\ 
&  &  &  & \nearrow &  &  &  &  \\ 
&  &  & \bullet & \rightarrow & \bullet & \underset{\searrow }{\rightarrow }
& \bullet & \cdots \\ 
&  & \nearrow &  &  &  &  & \bullet & \cdots \\ 
& \bullet &  &  &  &  &  &  &  \\ 
&  & \searrow &  &  &  &  & \bullet & \cdots \\ 
&  &  & \bullet & \rightarrow & \bullet & \overset{\nearrow }{\rightarrow }
& \bullet & \cdots \\ 
&  &  &  & \searrow &  &  &  &  \\ 
&  &  &  &  & \bullet & \rightarrow & \bullet & \cdots \\ 
&  &  &  &  &  & \searrow &  &  \\ 
&  &  &  &  &  &  & \bullet & \cdots
\end{array}
$
\end{center}

Let $G$ be a graph, and let

\begin{center}
$N$ $=$ $\max \{\deg _{out}(v)$ $:$ $v$ $\in $ $V(G)\}$ $<$ $\infty $ in $%
\Bbb{N}.$
\end{center}

Consider the shadowed graph $\widehat{G}$ of $G.$ Define the subsets $%
E_{v}^{v^{\prime }}$ of $E(\widehat{G})$ by

\begin{center}
$E_{v}^{v^{\prime }}$ $\overset{def}{=}$ $\{e$ $\in $ $E(\widehat{G})$ $:$ $%
e $ $=$ $v$ $e$ $v^{\prime }\},$
\end{center}

for all $(v,$ $v^{\prime })$ $\in $ $V(\widehat{G})^{2}.$ Remark that $v$
and $v^{\prime }$ are not necessarily distinct in $V(\widehat{G}).$ It is
possible that there exists a pair $(v_{1},$ $v_{2})$ of vertices such that $%
E_{v_{1}}^{v_{2}}$ is empty. By definition,

\begin{center}
$E(\widehat{G})$ $=$ $\underset{(v,v^{\prime })}{\cup }$ $E_{v}^{v^{\prime
}}.$
\end{center}

Then construct the graph tree $\mathcal{T}_{G}$ of $G,$ by re-arranging the
elements $V(\widehat{G})$ $\cup $ $E(\widehat{G}),$ up to the admissibility
on the free semigroupoid $\Bbb{F}^{+}(\widehat{G}),$ as follows. First fix
any arbitrary vertex $v_{0}$ $\in $ $V(\widehat{G})$ $=$ $V(G).$ Then
arrange $e$ $\in $ $\underset{v\in V(\widehat{G})}{\cup }$ $E_{v_{0}}^{v},$
by attaching them to $v_{0},$ preserving the direction on $G.$ i.e.,

\begin{center}
$
\begin{array}{lll}
&  & \bullet \\ 
& \nearrow & \,\vdots \\ 
_{v_{0}}\bullet & \rightarrow & \bullet _{v_{1}} \\ 
& \searrow & \,\vdots \\ 
&  & \bullet \\ 
&  &  \\ 
& \text{(*)} & \text{(**)}
\end{array}
.$
\end{center}

Then we can have the above finite rooted tree with its root $v_{0}.$ Of
course, if the set $\underset{v\in V(\widehat{G})}{\cup }$ $E_{v_{0}}^{v}$
is empty, then we only have the trivial tree $G_{v_{0}},$ with $V(G_{v_{0}})$
$=$ $\{v_{0}\},$ and $E(G_{v_{0}})$ $=$ $\varnothing .$ The edges in the
column (*) is induced by the re-arrangement of the elements in $\underset{%
v\in V(\widehat{G})}{\cup }$ $E_{v_{0}}^{v},$ and the vertices in the column
(**) means the re-arrangement of the ``terminal'' vertices of the edges in $%
\underset{v\in V(\widehat{G})}{\cup }$ $E_{v_{0}}^{v}.$

Now, let $v_{1}$ $\in $ $V(\widehat{G})$ be an arbitrary chosen vertex of
the shadowed graph $\widehat{G}$ of $G,$ re-arranged in (**). Then we can do
the same process for $v_{1}.$ i.e., arrange the edges in $\underset{v\in V(%
\widehat{G})}{\cup }$ $E_{v_{1}}^{v}$ (if it is not empty), by attaching
them to $v_{1},$ preserving the direction on $G.$ i.e., we can construct

\begin{center}
$
\begin{array}{lllll}
&  & \bullet &  & \bullet _{v_{0}} \\ 
& \nearrow & \,\vdots & \nearrow & \,\vdots \\ 
_{v_{0}}\bullet & \rightarrow & \underset{v_{1}}{\bullet } & \rightarrow & 
\bullet \\ 
& \searrow & \,\vdots & \searrow & \,\vdots \\ 
&  & \bullet &  & \bullet \\ 
&  &  &  &  \\ 
&  & \text{{\small (**)}} & \text{(\$)} & \text{(\$\$)}
\end{array}
.$
\end{center}

\strut Here, the column (\$) is induced by the re-arrangement of the edges
in $\underset{v\in V(\widehat{G})}{\cup }$ $E_{v_{1}}^{v},$ and the vertices
in the column (\$\$) means the re-arrangement of the terminal vertices of
the edges in $\underset{v\in V(\widehat{G})}{\cup }$ $E_{v_{1}}^{v}.$ We can
do the same processes for all vertices in (**). Now, notice that it is
possible that one of the vertices in the columns (**) or (\$\$) can be $%
v_{0}.$ For instance, if $E_{v_{0}}^{v_{0}}$ is not empty (equivalently, if $%
v_{0}$ has an incident loop-edge), then $v_{0}$ is located in (**).
Similarly, $v_{0}$ can be located in (\$\$). For instance, if $v_{0}$ has
its incident length-2 loop finite path in $\Bbb{F}^{+}(\widehat{G}),$ then $%
v_{0}$ is in (\$\$). We admit such cases. i.e., a same vertex of $V(\widehat{%
G})$ can appear several times in this rooted-tree-making process.

Do this process inductively. If $G$ is infinite, then do this process
infinitely. The one-flow rooted tree, induced by this process, with its root 
$v_{0}$ is denoted by $\mathcal{T}_{v_{0}}.$ Notice that, from this process,
we can embed all elements (possibly finitely or infinitely many repeated
times) in $V(\widehat{G})$ $\cup $ $E(\widehat{G})$ into $\mathcal{T}%
_{v_{0}},$ preserving their admissibility! So, all elements in the free
semigroupoid $\Bbb{F}^{+}(\widehat{G})$ of the shadowed graph $\widehat{G}$
are embedded in $\mathcal{T}_{v_{0}}.$

\begin{definition}
Let $G$ be a connected locally finite directed graph with its shadowed graph 
$\widehat{G}.$ And let $\mathcal{T}_{v_{0}}$ be a rooted tree with its root $%
v_{0},$ induced by $G$. We say that this process is the graph-tree making of 
$G$. And the tree $\mathcal{T}_{v_{0}}$ is called the $v_{0}$-tree (or a
vertex-fixed graph tree) of $G.$
\end{definition}

\strut By definition, every connected locally finite directed graph $G$ has $%
\left| V(\widehat{G})\right| $-many vertex-trees of $G.$ Notice that the
vertex-trees of $G$ are determined by the vertices and edges in the
``shadowed'' graph $\widehat{G}$ of $G.$ The following proposition is easily
proven by the definition of the vertex-trees of a given graph, and by the
connectedness of our graphs.

\begin{proposition}
\strut Let $G$ be a connected locally finite directed graph with its
shadowed graph $\widehat{G}.$ Let $\Bbb{F}^{+}(\widehat{G})$ be the free
semigroupoid of $\widehat{G}.$ Then all elements in $\Bbb{F}^{+}(\widehat{G})
$ are embedded in the $v$-tree $\mathcal{T}_{v}$ of $G,$ for all $v$ $\in $ $%
V(\widehat{G})$ $=$ $V(G).$ $\square $
\end{proposition}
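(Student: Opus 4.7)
The plan is to prove the proposition by induction on the length of elements of $\Bbb{F}^{+}(\widehat{G})$, exploiting two facts: first, the graph-tree construction of $\mathcal{T}_{v}$ attaches, at each occurrence of a vertex $u$, all edges of $\widehat{G}$ in $\cup _{v^{\prime }\in V(\widehat{G})}E_{u}^{v^{\prime }}$ (preserving their orientation and admissibility); second, the connectedness of $G$ (equivalently, of $\widehat{G}$, thanks to the shadow edges) guarantees that every vertex of $\widehat{G}$ can be reached from $v$ along some admissible word in $\Bbb{F}^{+}(\widehat{G})$.

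First I would reduce to the subclaim that every vertex $u\in V(\widehat{G})$ appears at least once in $\mathcal{T}_{v}$. Once this is in hand, an edge $e=u_{1}\,e\,u_{2}\in E(\widehat{G})$ is attached, at the occurrence of $u_{1}$, by the very rule of the graph-tree making; and a finite admissible word $w=e_{i_{1}}e_{i_{2}}\cdots e_{i_{n}}$ is produced by iterating: once $e_{i_{1}}$ is attached at the occurrence of $s(e_{i_{1}})$, its terminal vertex $r(e_{i_{1}})=s(e_{i_{2}})$ appears in $\mathcal{T}_{v}$, so by the same construction $e_{i_{2}}$ is attached beneath it, and so on, assembling the entire word along a single branch while respecting admissibility on $\Bbb{F}^{+}(\widehat{G})$.

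Next I would establish the vertex subclaim by induction on the combinatorial distance $d_{\widehat{G}}(v,u)$ from $v$ to $u$ in $\widehat{G}$. The base case $d_{\widehat{G}}(v,u)=0$ is immediate: $u=v$ is the root of $\mathcal{T}_{v}$. For the inductive step, given $u$ with $d_{\widehat{G}}(v,u)=n+1$, pick a vertex $u^{\prime }$ with $d_{\widehat{G}}(v,u^{\prime })=n$ and an edge $e_{o}=u^{\prime }\,e_{o}\,u\in E(\widehat{G})$. By the inductive hypothesis, $u^{\prime }$ occurs in $\mathcal{T}_{v}$; by the graph-tree rule, the edge $e_{o}$ is then attached to this occurrence, and its terminal vertex yields a new occurrence of $u$ one level deeper.

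The main obstacle, such as it is, is to be careful that we use connectedness of the \emph{shadowed} graph $\widehat{G}$ rather than of $G$ itself: the minimal walk realizing $d_{\widehat{G}}(v,u)$ may very well require shadow edges $e^{-1}$, and this is precisely why the statement is phrased for $\Bbb{F}^{+}(\widehat{G})$ and not for $\Bbb{F}^{+}(G)$. The remainder of the argument is bookkeeping, and it yields the slightly stronger conclusion (already hinted at in the construction) that each element of $\Bbb{F}^{+}(\widehat{G})$ may be embedded, generally with multiplicity, as a directed subpath emanating from some occurrence of its source vertex in $\mathcal{T}_{v}$.
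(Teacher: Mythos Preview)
Your proposal is correct and follows precisely the line the paper indicates: the paper does not give a detailed argument but simply states that the result ``is easily proven by the definition of the vertex-trees of a given graph, and by the connectedness of our graphs,'' and your proof is exactly a careful unpacking of those two ingredients. Your observation that one must use (strong) connectedness of the shadowed graph $\widehat{G}$, rather than of $G$, is the only point where any care is needed, and you handle it correctly.
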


\strut \strut Observe now several examples for the construction of
vertex-graphs of a given graph.

\begin{example}
Let $O_{1}$ be a one-vertex-$1$-loop-edge graph with

\begin{center}
$V(O_{1})$ $=$ $\{v\}$ and $E(O_{1})$ $=$ $\{e$ $=$ $v$ $e$ $v\}.$
\end{center}

Then the shadowed graph $\widehat{O_{1}}$ of $O_{1}$ has its vertex set $V(%
\widehat{O_{1}})$, identical to $V(O_{1}),$ and its edge set

\begin{center}
$E(\widehat{O_{1}})$ $=$ $\{v\},$ and $E(\widehat{O_{1}})$ $=$ $\{e,$ $%
e^{-1}\}.$
\end{center}

Then we can construct the $v$-graph of $O_{1}$ by

\begin{center}
$\mathcal{T}_{v}$ $=\quad 
\begin{array}{llllllll}
&  &  &  &  &  & \overset{v}{\bullet } & \cdots  \\ 
&  &  &  &  & \underset{e}{\nearrow } &  &  \\ 
&  &  &  & \overset{v}{\bullet } & \underset{e^{-1}}{\rightarrow } & 
\overset{v}{\bullet } & \cdots  \\ 
&  &  & \underset{e}{\nearrow } &  &  &  &  \\ 
&  & \overset{v}{\bullet } & \rightarrow  & \overset{v}{\bullet } & 
\underset{\overset{e^{-1}}{\searrow }}{\overset{e}{\rightarrow }} & \overset{%
v}{\bullet } & \cdots  \\ 
& \underset{e}{\nearrow } &  &  &  &  & \overset{v}{\bullet } & \cdots  \\ 
_{v}\bullet  &  &  &  &  &  &  &  \\ 
& \overset{e^{-1}}{\searrow } &  &  &  &  & \overset{v}{\bullet } & \cdots 
\\ 
&  & \overset{v}{\bullet } & \rightarrow  & \overset{v}{\bullet } & \overset{%
\underset{e}{\nearrow }}{\underset{e^{-1}}{\rightarrow }} & \overset{v}{%
\bullet } & \cdots  \\ 
&  &  & \overset{e^{-1}}{\searrow } &  &  &  &  \\ 
&  &  &  & \overset{v}{\bullet } & \overset{e}{\rightarrow } & \overset{v}{%
\bullet } & \cdots  \\ 
&  &  &  &  & \overset{e^{-1}}{\searrow } &  &  \\ 
&  &  &  &  &  & \overset{v}{\bullet } & \cdots 
\end{array}
.$
\end{center}

We can realize that the $v$-graph $\mathcal{T}_{v}$ is graph-isomorphic to
the $2$-regular graph $\mathcal{T}_{2}.$
\end{example}

\begin{example}
Let $G_{e}$ be the two-vertices-one-edge graph with

\begin{center}
$V(G_{e})$ $=$ $\{v_{1},$ $v_{2}\}$ and $E(G_{e})$ $=$ $\{e$ $=$ $v_{1}$ $e$ 
$v_{2}\}.$
\end{center}

Then the shadowed graph $\widehat{G_{e}}$ is a directed graph with

\begin{center}
$V(\widehat{G_{e}})$ $=$ $\{v_{1},$ $v_{2}\}$ and $E(\widehat{G_{e}})$ $=$ $%
\{e,$ $e^{-1}\}.$
\end{center}

So, we can have the $v_{1}$-tree $\mathcal{T}_{v_{1}}$ of $G,$

\begin{center}
$\mathcal{T}_{v_{1}}$ $=$ \quad $
\begin{array}{lllllllll}
_{v_{1}}\bullet  & \overset{e}{\rightarrow } & \underset{v_{2}}{\bullet } & 
\overset{e^{-1}}{\rightarrow } & \underset{v_{1}}{\bullet } & \overset{e}{%
\rightarrow } & \underset{v_{2}}{\bullet } & \overset{e^{-1}}{\rightarrow }
& \cdots 
\end{array}
,$
\end{center}

and the $v_{2}$-tree $\mathcal{T}_{v_{2}}$ of $G$,

\begin{center}
$\mathcal{T}_{v_{2}}$ $=$ \quad $
\begin{array}{lllllllll}
_{v_{2}}\bullet  & \overset{e^{-1}}{\rightarrow } & \underset{v_{1}}{\bullet 
} & \overset{e}{\rightarrow } & \underset{v_{2}}{\bullet } & \overset{e^{-1}%
}{\rightarrow } & \underset{v_{1}}{\bullet } & \overset{e}{\rightarrow } & 
\cdots 
\end{array}
.$
\end{center}

Therefore, both $\mathcal{T}_{v_{1}}$ and $\mathcal{T}_{v_{2}}$ are
graph-isomorphic to the $1$-regular tree $\mathcal{T}_{1}.$
\end{example}

\begin{example}
\strut Let $T_{2,1}$ be the finite tree with

\begin{center}
$V(T_{2,1})$ $=$ $\{v_{1},$ $v_{2},$ $v_{3}\}$
\end{center}

and

\begin{center}
$E(T_{2,1})$ $=$ $\{e_{1}$ $=$ $v_{1}$ $e_{1}$ $v_{2},$ $e_{2}$ $=$ $v_{1}$ $%
e_{2}$ $v_{3}\}.$
\end{center}

i.e.,

\begin{center}
$T_{2,1}$ $=$ $
\begin{array}{lll}
&  & \bullet _{v_{2}} \\ 
& \nearrow  &  \\ 
_{v_{1}}\bullet  &  &  \\ 
& \searrow  &  \\ 
&  & \bullet _{v_{3}}
\end{array}
.$
\end{center}

Then, after finding, the shadowed graph $\widehat{T_{2,1}}$ of $T_{2,1},$ we
can have the $v_{1}$-tree $\mathcal{T}_{v_{1}}$ of $T_{2,1},$

\begin{center}
$\mathcal{T}_{v_{1}}$ $=$ \quad $
\begin{array}{llllllll}
&  &  &  &  &  & \overset{v_{2}}{\bullet } & \cdots  \\ 
&  &  &  &  & \underset{e_{1}}{\nearrow } &  &  \\ 
&  & \overset{v_{2}}{\bullet } & \underset{e_{1}^{-1}}{\rightarrow } & 
\overset{v_{1}}{\bullet } & \underset{e_{2}}{\rightarrow } & \underset{v_{3}%
}{\bullet } & \cdots  \\ 
& \underset{e_{1}}{\nearrow } &  &  &  &  &  &  \\ 
_{v_{1}}\bullet  &  &  &  &  &  &  &  \\ 
& \overset{e_{2}}{\searrow } &  &  &  &  &  &  \\ 
&  & \underset{v_{3}}{\bullet } & \overset{e_{2}^{-1}}{\rightarrow } & 
\underset{v_{1}}{\bullet } & \overset{e_{1}}{\rightarrow } & \overset{v_{2}}{%
\bullet } & \cdots  \\ 
&  &  &  &  & \overset{e_{2}}{\searrow } &  &  \\ 
&  &  &  &  &  & \underset{v_{3}}{\bullet } & \cdots 
\end{array}
$
\end{center}

and the $v_{2}$-tree $\mathcal{T}_{v_{2}}$ of $T_{2,1},$

\begin{center}
$\mathcal{T}_{v_{2}}$ $=$ \quad $
\begin{array}{llllllllll}
&  &  &  &  &  &  &  & \overset{v_{2}}{\bullet } & \cdots  \\ 
&  &  &  &  &  &  & \overset{e_{1}}{\nearrow } &  &  \\ 
&  &  &  & \overset{v_{2}}{\bullet } & \overset{e_{1}^{-1}}{\rightarrow } & 
\overset{v_{1}}{\bullet } & \overset{e_{2}}{\rightarrow } & \underset{v_{3}}{%
\bullet } & \cdots  \\ 
&  &  & \overset{e_{1}}{\nearrow } &  &  &  &  &  &  \\ 
_{v_{2}}\bullet  & \overset{e_{1}^{-1}}{\rightarrow } & \underset{v_{1}}{%
\bullet } &  &  &  &  &  &  &  \\ 
&  &  & \overset{e_{2}}{\searrow } &  &  &  &  &  &  \\ 
&  &  &  & \underset{v_{3}}{\bullet } & \overset{e_{2}^{-1}}{\rightarrow } & 
\overset{v_{1}}{\bullet } & \overset{e_{1}}{\rightarrow } & \overset{v_{2}}{%
\bullet } & \cdots  \\ 
&  &  &  &  &  &  & \overset{e_{2}}{\searrow } &  &  \\ 
&  &  &  &  &  &  &  & \underset{v_{3}}{\bullet } & \cdots 
\end{array}
,$
\end{center}

and the $v_{3}$-graph of $T_{2,1},$

\begin{center}
$\mathcal{T}_{v_{3}}$ $=$ \quad $
\begin{array}{llllllllll}
&  &  &  &  &  &  &  & \overset{v_{2}}{\bullet } & \cdots  \\ 
&  &  &  &  &  &  & \overset{e_{1}}{\nearrow } &  &  \\ 
&  &  &  & \overset{v_{2}}{\bullet } & \overset{e_{1}^{-1}}{\rightarrow } & 
\overset{v_{1}}{\bullet } & \overset{e_{2}}{\rightarrow } & \underset{v_{3}}{%
\bullet } & \cdots  \\ 
&  &  & \overset{e_{1}}{\nearrow } &  &  &  &  &  &  \\ 
_{v_{3}}\bullet  & \overset{e_{2}^{-1}}{\rightarrow } & \underset{v_{1}}{%
\bullet } &  &  &  &  &  &  &  \\ 
&  &  & \overset{e_{2}}{\searrow } &  &  &  &  &  &  \\ 
&  &  &  & \underset{v_{3}}{\bullet } & \overset{e_{2}^{-1}}{\rightarrow } & 
\overset{v_{1}}{\bullet } & \overset{e_{1}}{\rightarrow } & \overset{v_{2}}{%
\bullet } & \cdots  \\ 
&  &  &  &  &  &  & \overset{e_{2}}{\searrow } &  &  \\ 
&  &  &  &  &  &  &  & \underset{v_{3}}{\bullet } & \cdots 
\end{array}
.$
\end{center}

We can check that $\mathcal{T}_{v_{2}}$ and $\mathcal{T}_{v_{3}}$ are
graph-isomorphic, but neither of them is graph-isomorphic to $\mathcal{T}%
_{v_{1}}.$
\end{example}

\begin{example}
\strut Let $K_{2}$ be the one-flow circulant graph with

\begin{center}
$V(K_{2})$ $=$ $\{v_{1},$ $v_{2}\},$
\end{center}

and

\begin{center}
$E(K_{2})$ $=$ $\{e_{1}$ $=$ $v_{1}$ $e_{1}$ $v_{2},\;$ $e_{2}$ $=$ $v_{2}$ $%
e_{2}$ $v_{1}\}.$
\end{center}

Then the shadowed graph $\widehat{K_{2}}$ of $K_{2}$ has

\begin{center}
$V(\widehat{K_{2}})$ $=$ $\{v_{1},$ $v_{2}\},$ and $E(\widehat{K_{2}})$ $=$ $%
\{e_{1}^{\pm 1},$ $e_{2}^{\pm 1}\}.$
\end{center}

By using the tree-making process, we obtain that the $v_{1}$-tree $\mathcal{T%
}_{v_{1}}$ and the $v_{2}$-tree $\mathcal{T}_{v_{2}}$ are graph-isomorphic
to the $2$-regular tree $\mathcal{T}_{2}.$ In general, every one-flow
circulant graph $K_{n}$ has its vertex-trees graph-isomorphic to the $2$%
-regular tree $\mathcal{T}_{2}.$
\end{example}

\strut As we have seen in the previous examples, sometimes, the vertex-trees
of a given graph are graph-isomorphic from each other, or not. In general,
the vertex-trees of a graph $G$ are not graph-isomorphic from each other.

\begin{definition}
Let $G$ be a connected locally finite directed graph and $\{\mathcal{T}_{v}$ 
$:$ $v$ $\in $ $V(\widehat{G})\}$, the collection of all vertex-trees of $G.$
Also, let

\begin{center}
$N$ $=$ $\max \{\deg _{out}(v)$ $:$ $v$ $\in $ $V(G)\}$ in $G$
\end{center}

(``not'' in $\widehat{G}$). If every $v$-tree $\mathcal{T}_{v}$ of $G$ is
graph-isomorphic to the $2N$-regular tree $\mathcal{T}_{2N},$ for all $v$ $%
\in $ $V(\widehat{G}),$ then the graph groupoid $\Bbb{G}$ of $G$ is called
the graph fractaloid induced by $G.$ And the graph $G$ is said to be a
fractal graph.
\end{definition}

\strut Under the above (new) definition of the fractal graphs and graph
fractaloids, we can re-obtain the graph-theoretical characterization of
graph fractaloids of [16]. In fact, the above new definition for graph
fractaloids (and fractal graphs) is based on the re-expression of automata
trees in the sense of [19]. The vertex-trees $\mathcal{T}_{v}$'s of a given
graph $G$ can be understood as the re-expression of the automata-trees
without using the automata-theoretical labeling process on $G$ (or on $\Bbb{G%
}$). Depeding on the new definition for graph fractaloids, we can get the
graph-theoretical characterization of graph fractaloids as follows:

\begin{theorem}
(See [16]) Let $G$ be a connected locally finite directed graph. The graph $G
$ is a fractal graph, if and only if

\begin{center}
$\deg _{out}(v)$ $=$ $\deg _{in}(v),$ in $G,$
\end{center}

for all $v$ $\in $ $V(G).$ $\square $
\end{theorem}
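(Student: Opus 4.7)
The approach is to unfold the vertex-tree $\mathcal{T}_v$ one level at a time and read its branching degree directly off the degrees of $G$. The core observation is that, by the tree-making process of Section 3.2, whenever the construction reaches a node whose underlying vertex in $\widehat{G}$ is $u \in V(G)$, it attaches one child for every edge of $\widehat{G}$ starting at $u$. Since $E(\widehat{G}) = E(G) \sqcup E(G^{-1})$, the number of such edges is $\deg_{out}^{G}(u) + \deg_{in}^{G}(u)$. Hence every node of $\mathcal{T}_v$ over $u$ has exactly $\deg_{out}^{G}(u) + \deg_{in}^{G}(u)$ children. The plan is to use this local count to convert the tree-isomorphism condition into a purely numerical condition on degrees.

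For the direction $(\Leftarrow)$, suppose $\deg_{out}^{G}(v) = N = \deg_{in}^{G}(v)$ for every $v \in V(G)$. Then by the local count every node of $\mathcal{T}_v$ has exactly $2N$ children, so $\mathcal{T}_v$ is a rooted tree of constant out-degree $2N$. Any two such rooted trees are graph-isomorphic, so $\mathcal{T}_v \cong \mathcal{T}_{2N}$ for every $v$, and $G$ is a fractal graph.

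For the direction $(\Rightarrow)$, assume $G$ is a fractal graph, i.e.\ $\mathcal{T}_v \cong \mathcal{T}_{2N}$ for every $v \in V(\widehat{G})$. Looking only at the root, the tree isomorphism forces $\deg_{out}^{G}(v) + \deg_{in}^{G}(v) = 2N$ for every $v \in V(G)$. To upgrade this single equation to the pair $\deg_{out}^{G}(v) = N = \deg_{in}^{G}(v)$, I combine it with the definition $N = \max_{v} \deg_{out}^{G}(v)$: pick $v_0$ with $\deg_{out}^{G}(v_0) = N$, so that $\deg_{in}^{G}(v_0) = N$; then, in the finite case, the standard edge balance $\sum_v \deg_{out}^{G}(v) = |E(G)| = \sum_v \deg_{in}^{G}(v)$ together with $\deg_{out}^{G}(v) + \deg_{in}^{G}(v) = 2N$ gives $\sum_v \deg_{out}^{G}(v) = N\,|V(G)|$, forcing the average to meet the maximum and hence $\deg_{out}^{G}(v) = N$ for every $v$. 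In the infinite case, I would instead propagate the equality along the BFS layers of $\mathcal{T}_{v_0}$, using the tree isomorphism at every vertex encountered and the connectedness of $G$ to reach every $u \in V(G)$.

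The main obstacle is precisely this last step: the tree-regularity of $\mathcal{T}_v$ only constrains the \emph{sum} $\deg_{out}^{G}(v) + \deg_{in}^{G}(v)$, not the two summands individually, so the split into equal halves has to come from a separate global input (edge counting for finite $G$, or a propagation argument along a spanning structure for infinite locally finite $G$). Everything else in the proof is bookkeeping around the tree-making construction and the decomposition $E(\widehat{G}) = E(G) \sqcup E(G^{-1})$.
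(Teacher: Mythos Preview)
The paper does not actually prove this theorem here: it is stated with a $\square$ and attributed to [16], where the argument is carried out for the \emph{automata-theoretic} definition of graph fractaloid (Section~3.1), not for the vertex-tree definition introduced in Section~3.2. So there is no in-paper proof to compare against; what matters is whether your argument goes through for the vertex-tree definition the paper adopts.

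Your reading of the tree-making process is correct, and both the $(\Leftarrow)$ direction and the finite $(\Rightarrow)$ direction are fine: the branching number at a node over $u$ is exactly $\deg_{out}^{G}(u)+\deg_{in}^{G}(u)$, and your edge-count/average-equals-maximum argument settles the finite case cleanly.

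The genuine gap is the infinite $(\Rightarrow)$ case, and it is worse than you flag. Your ``propagation along BFS layers'' cannot be completed, because under the paper's vertex-tree definition the implication is actually \emph{false} for infinite graphs. Take $V(G)=\Bbb{Z}$ with edges $v_{j}\to v_{j-2}$ for all $j\geq 3$ odd, $v_{1}\to v_{0}$, $v_{2}\to v_{0}$, and $v_{j}\to v_{j-2}$ for all $j\geq 4$ even (two one-sided rays flowing into a common sink $v_{0}$). Then $\deg_{out}^{G}(v)+\deg_{in}^{G}(v)=2$ for every $v$, so every vertex-tree is $\mathcal{T}_{2}=\mathcal{T}_{2N}$ with $N=\max_{v}\deg_{out}^{G}(v)=1$; by the definition in Section~3.2 this $G$ is a fractal graph. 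Yet $\deg_{out}^{G}(v_{0})=0\neq 2=\deg_{in}^{G}(v_{0})$. So no amount of propagation will recover the split $\deg_{out}=\deg_{in}$ from the vertex-tree condition alone.

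What is going on is that the vertex-tree of Section~3.2 forgets which children come from $E(G)$ and which from $E(G^{-1})$, whereas the labeled $\mathcal{A}_{G}$-tree of Section~3.1 records this via the labels $l_{j}$ versus $l_{-j}$. The theorem in [16] is proved for the labeled (automata) notion, where ``acts fully on $\mathcal{T}_{2N}$'' forces, at every vertex, all $2N$ labels $l_{\pm 1},\ldots,l_{\pm N}$ to be realized---hence $\deg_{out}^{G}(v)=N=\deg_{in}^{G}(v)$ directly. Your obstacle is therefore not a missing trick but a mismatch between the paper's Section~3.2 definition and the [16]/[19] definition it is meant to re-express; to prove the theorem as intended you must work with the labeled automata tree, not the bare vertex-tree.
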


\strut \strut Thus, without loss of generality, we can re-define the fractal
graphs and graph fractaloids as follows: A connected locally finite directed
graph $G$ is a fractal graph, if the out-degrees and the in-degrees of all
vertices of $G$, in $G,$ are identical from each other. And, if a graph $G$
is a fractal graph, then the graph groupoid $\Bbb{G}$ of $G$ is said to be a
graph fractaloid.\strut 

\begin{example}
(1) The one-vertex-$n$-loop-edge graph $O_{n}$ is a fractal graph, for all $n
$ $\in $ $\Bbb{N}$, since 

\begin{center}
$\deg _{out}(v)$ $=$ $n$ $=$ $\deg _{in}(v),$ in $O_{n}$
\end{center}

where $v$ is the only vertex of $O_{n},$ for all $n$ $\in $ $\Bbb{N}.$

(2) The one-flow circulant graph $K_{n}$ is a fractal graph, for all $n$ $%
\in $ $\Bbb{N}$ $\setminus $ $\{1\},$ since

\begin{center}
$\deg _{out}(v)$ $=$ $1$ $=$ $\deg _{in}(v),$ in $K_{n},$
\end{center}

for all $v$ $\in $ $V(K_{n}).$

(3) Let $C_{n}$ be the complete graph with $n$-vertices, for $n$ $\in $ $%
\Bbb{N}$ $\setminus $ $\{1\}.$ i.e., it is a graph with

\begin{center}
$V(C_{n})$ $=$ $\{v_{1},$ ..., $v_{n}\},$
\end{center}

and

\begin{center}
$E(C_{n})$ $=$ $\left\{ e_{ij}\left| i\neq j\in \{1,...,n\}\right. \right\} ,
$
\end{center}

where $e_{ij}$ means the edge connecting the vertex $v_{i}$ to the vertex $%
v_{j}.$ Then the graph $C_{n}$ is a fractal graph, for all $n$ $\in $ $\Bbb{N%
}$ $\setminus $ $\{1\},$ since

$\deg _{out}(v_{j})$ $=$ $n$ $-$ $1$ $=$ $\deg _{in}(v_{j}),$ in $C_{n},$

for all $j$ $=$ $1,$ ..., $n.$

(4) Let $L$ be the infinite linear graph, graph-isomorphic to

\begin{center}
$\cdot \cdot \cdot \longrightarrow \bullet \longrightarrow \bullet
\longrightarrow \bullet \longrightarrow \cdot \cdot \cdot .$
\end{center}

Then it is a fractal graph, since

\begin{center}
$\deg _{out}(v)$ $=$ $1$ $=$ $\deg _{in}(v),$ in $L,$
\end{center}

for all $v$ $\in $ $V(L).$

(5) Let $\mathcal{T}_{k}$ be the $k$-regular graph, for $k$ $\in $ $\Bbb{N}.$
Then it is not a fractal graph. Assume that $v_{0}$ is a root of $\mathcal{T}%
_{k}.$ Then

\begin{center}
$\deg _{out}(v_{0})$ $=$ $k$ $\neq $ $0$ $=$ $\deg _{in}(v),$ in $\mathcal{T}%
_{k}.$
\end{center}

Therefore, the regular trees are not fractal.
\end{example}

More generally, we define graph fractaloids, without connectedness condition.

\begin{definition}
Let $G$ be a locally finite directed graph with its connected components $%
G_{1},$ ..., $G_{t},$ for $t$ $\in $ $\Bbb{N}$. Let $\Bbb{G}$ be the graph
groupoid of $G.$ We say that $\Bbb{G}$ is a graph fractaloid, if each $G_{j}$
generates a graph fractaloid $\Bbb{G}_{j},$ in the above sense, for all $j$ $%
=$ $1,$ ..., $t.$ In this case, the graph $G$ is called the ``disconnected''
fractal graph.
\end{definition}

However, in the rest of this paper, all our graphs are connected.

Let $G$ be a connected locally finite graph with its graph groupoid $\Bbb{G}.
$ Let $(v_{1},$ $v_{2})$ be the pair of vertices of $G$ (Remark that $v_{1}$
and $v_{2}$ are not necessarily distinct), and assume that there exists an
edge $e$ $=$ $v_{1}$ $e$ $v_{2}.$ Let's replace this edge $e$ to the $k$%
-multi-edges $e_{1},$ ..., $e_{k},$ satisfying $e_{j}$ $=$ $v_{1}$ $e_{j}$ $%
v_{2}.$ Do this process for all pair $(v,$ $v^{\prime })$ of the vertices of 
$G$, whenever there exists at least one edge connecting $v$ to $v^{\prime }.$
Clearly, if there is no edge connecting $v$ to $v^{\prime },$ then we do not
need to do this process. Then we can create a new connected locally finite
graph $G^{\prime },$ satisfying that

\begin{center}
$V(G^{\prime })$ $=$ $V(G).$
\end{center}

\begin{definition}
The new connected locally finite graph $G^{\prime }$ induced by a given
connected locally finite graph $G,$ in the previous paragraph, is called the
regularized graph of $G,$ denoted by $R_{k}(G),$ where $k$ is the
cardinality of the multi-edges in $G^{\prime }$ replaced by the edges in $G,$
for all $k$ $\in $ $\Bbb{N}.$ 
\end{definition}

\strut In [16], we showed that:

\begin{theorem}
(Also, see [16]) Let $G$ be a fractal graph. Then the $k$-regularized graph $%
R_{k}(G)$ is a fractal graph, too.
\end{theorem}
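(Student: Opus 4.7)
The plan is to reduce the statement to the graph-theoretical characterization of fractal graphs recorded just before, namely that a connected locally finite directed graph $G$ is a fractal graph if and only if $\deg_{out}(v) = \deg_{in}(v)$ for every $v \in V(G)$, with this common value being the constant $N = \max\{\deg_{out}(v) : v \in V(G)\}$. Since this characterization is already available from [16], the proof reduces to a degree computation on $R_{k}(G)$.

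First, I would unpack the definition of $R_{k}(G)$. By construction, $V(R_{k}(G)) = V(G)$, and each edge $e = v_{1}\,e\,v_{2}$ of $G$ is replaced by $k$ parallel edges $e_{1},\dots,e_{k}$ of the form $v_{1}\,e_{j}\,v_{2}$. Hence for each vertex $v \in V(R_{k}(G)) = V(G)$ one has
\begin{equation*}
\deg_{out}^{R_{k}(G)}(v) \;=\; k\cdot \deg_{out}^{G}(v), \qquad \deg_{in}^{R_{k}(G)}(v) \;=\; k\cdot \deg_{in}^{G}(v).
\end{equation*}
Note also that $R_{k}(G)$ remains connected and locally finite, since each edge of $G$ is replaced by a nonempty finite bundle of edges with the same endpoints.

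Second, I would apply the characterization theorem to $G$. Because $G$ is a fractal graph, there exists a common value $N \in \Bbb{N}$ such that $\deg_{out}^{G}(v) = \deg_{in}^{G}(v) = N$ for every $v \in V(G)$. Multiplying both sides by $k$ and using the identities above gives
\begin{equation*}
\deg_{out}^{R_{k}(G)}(v) \;=\; kN \;=\; \deg_{in}^{R_{k}(G)}(v) \qquad \text{for all } v \in V(R_{k}(G)).
\end{equation*}
Setting $N' = kN = \max\{\deg_{out}^{R_{k}(G)}(v) : v \in V(R_{k}(G))\}$, we see that $R_{k}(G)$ satisfies the hypothesis of the graph-theoretical characterization, so $R_{k}(G)$ is a fractal graph.

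There is no serious obstacle here beyond bookkeeping: the substantive work was carried out in the characterization theorem from [16], and regularization is the simplest conceivable operation to verify against that criterion, since it scales every out-degree and every in-degree by the same factor $k$. The only minor points to check are connectedness and local finiteness of $R_{k}(G)$, both of which are immediate. This also shows, as a byproduct, that if $\Bbb{G}$ is the graph fractaloid with fractal parameter $N$, then $R_{k}(G)$ generates a graph fractaloid with fractal parameter $kN$.
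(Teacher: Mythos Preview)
Your proof is correct and follows essentially the same approach as the paper: both invoke the graph-theoretical characterization from [16], observe that $V(R_{k}(G))=V(G)$, and compute that the out- and in-degrees each get multiplied by $k$, giving $\deg_{out}(v)=kN=\deg_{in}(v)$ in $R_{k}(G)$. Your version is slightly more detailed in explicitly checking connectedness and local finiteness, but the argument is otherwise identical.
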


\begin{proof}
\strut Indeed, assume that $G$ is a fractal graph, satisfying that

\begin{center}
$\deg _{out}(v)$ $=$ $N$ $=$ $\deg _{in}(v),$ in $G,$
\end{center}

for all $v$ $\in $ $V(G).$ Then the $k$-regularized graph $R_{k}(G)$
satisfies that

\begin{center}
$\deg _{out}(x)$ $=$ $kN$ $=$ $\deg _{in}(x),$ in $R_{k}(G),$
\end{center}

for all $x$ $\in $ $V(R_{k}(G))$ $=$ $V(G),$ for all $k$ $\in $ $\Bbb{N}.$
Therefore, by the graph-theoretical characterization, the graph $R_{k}(G)$
is again a fractal graph. 
\end{proof}

\strut In the previous example, we showed that the graphs $O_{N},$ $K_{n},$ $%
C_{n}$ and $L$ are fractal graphs, for $N$ $\in $ $\Bbb{N},$ $n$ $\in $ $%
\Bbb{N}$ $\setminus $ $\{1\}.$ By the previous theorem, we can conclude that
the $k$-regularized graphs $R_{k}(O_{N}),$ $R_{k}(K_{n}),$ $R_{k}(C_{n}),$
and $R_{k}(L)$ are fractal graphs, too, for all $k$ $\in $ $\Bbb{N}.$ 

Let $G_{1}$ and $G_{2}$ be connected locally finite graphs. Define the \emph{%
unioned graph} $G$ $=$ $G_{1}$ $\cup $ $G_{2}$ of $G_{1}$ and $G_{2}$ by a
new directed graph with

\begin{center}
$V(G)$ $=$ $V(G_{1})$ $\cup $ $V(G_{2}),$
\end{center}

and

\begin{center}
$E(G)$ $=$ $E(G_{1})$ $\cup $ $E(G_{2}).$
\end{center}

So, every ``disjoint'' unioned graph $G,$ satisfying

\begin{center}
$V(G)$ $=$ $V(G_{1})$ $\sqcup $ $V(G_{2}),$
\end{center}

and

\begin{center}
$E(G)$ $=$ $E(G_{1})$ $\sqcup $ $E(G_{2})$,
\end{center}

is a unioned graph. But, notice that not all unioned graphs are disjoint
unioned graphs! For instance, if $G_{1}$ and $G_{2}$ are full-subgraphs of a
connected locally finite graph $K,$ then it is possible that

\begin{center}
$V(G_{1})$ $\cap $ $V(G_{2})$ $\neq $ $\varnothing ,$
\end{center}

or

\begin{center}
$E(G_{1})$ $\cap $ $E(G_{2})$ $\neq $ $\varnothing .$
\end{center}

Also, our shadowed graphs are unioned graphs which are not disjoint unioned
graphs. i.e., $\widehat{G}$ $=$ $G$ $\cup $ $G^{-1},$ where $G^{-1}$ is the
shadow of $G.$ Futhermore, we are not interested in the disjoint unioned
graphs, because the disjoint union of graphs generates the ``disconnected''
graphs. 

Now, let $G_{k}$ be connected locally finite graphs, and let $v_{k}$ $\in $ $%
V(G_{k})$ be the fixed vertices, for $k$ $=$ $1,$ $2.$ Then, by identifying
the chosen vertices $v_{1}$ and $v_{2},$ we can create a new graph $G$,
denoted by 

\begin{center}
$G_{1}$ $^{v_{1}}\#^{v_{2}}$ $G_{2}.$ 
\end{center}

The identified vertex of $v_{1}$ and $v_{2}$ is called the \emph{glued
vertex of }$v_{1}$\emph{\ }$\in $\emph{\ }$V(G_{1})$\emph{\ and} $v_{2}$ $%
\in $ $V(G_{2}).$ Denote it by $v_{\times }.$ Then the graph $G$ $=$ $G_{1}$ 
$^{v_{1}}\#^{v_{2}}$ $G_{2}$ is the graph with

\begin{center}
$
\begin{array}{ll}
V(G)= & \{v_{\times }\}\cup \left( V(G_{1})\setminus \{v_{1}\}\right)  \\ 
& \,\,\,\,\,\,\,\,\,\,\,\,\,\,\,\cup \left( V(G_{2})\setminus
\{v_{2}\}\right) ,
\end{array}
$
\end{center}

and

\begin{center}
$E(G)$ $=$ $E(G_{1})$ $\cup $ $E(G_{2}),$
\end{center}

under the identification rule; if $e$ $\in $ $E(G_{k})$ satisfies either $e$ 
$=$ $v_{k}$ $e$ or $e$ $=$ $v_{k}$ $e,$ then this edge $e$ is identified
with the edge, also denoted by $e,$ satisfying that $e$ $=$ $v_{\times }$ $e,
$ respectively, $e$ $=$ $e$ $v_{\times }.$ 

This new graph $G$ $=$ $G_{1}$ $^{v_{1}}\#^{v_{2}}$ $G_{2},$ by identifying
the vertices $v_{1}$ $\in $ $V(G_{1})$ and $v_{2}$ $\in $ $V(G_{2}),$ is
called the \emph{glued graph of }$G_{1}$\emph{\ and }$G_{2},$\emph{\ with
the glued vertex of }$v_{1}$\emph{\ and} $v_{2}.$ Again, let $v_{k}$ $\in $ $%
V(G_{k})$ be the fixed vertices, for $k$ $=$ $1,$ $2.$ Define the new
connected locally finite graphs 

\begin{center}
$G_{1}$ $\#^{v_{2}}$ $G_{2}$ $\overset{def}{=}$ $\underset{v\in V(G_{1})}{%
\cup }\left( G_{1}\text{ }^{v}\#^{v_{2}}\text{ }G_{2}\right) $
\end{center}

and

\begin{center}
$G_{1}$ $^{v_{1}}\#$ $G_{2}$ $\overset{def}{=}$ $\underset{v\in V(G_{2})}{%
\cup }\left( G_{1}\text{ }^{v_{1}}\#^{v}\text{ }G_{2}\right) .$
\end{center}

Then the graph $G_{1}$ $\#^{v_{2}}$ $G_{2}$ (resp., $G_{1}$ $^{v_{1}}\#$ $%
G_{2}$) is called the \emph{iterated glued graph with the fixed vertex} $%
v_{2}$ $\in $ $V(G_{2})$ (resp., $v_{1}$ $\in $ $V(G_{1})$). In [16], we
showed that:

\begin{theorem}
(Also, see [16]) Let $G$ be a fractal graph and let $O_{n}$ be the
one-vertex-$n$-loop-edge graph, for $n$ $\in $ $\Bbb{N}.$ If $v$ is the
unique vertex of $O_{n},$ then the iterated glued graph $G$ $\#^{v}$ $O_{n}$
is a fractal graph, too, for all $n$ $\in $ $\Bbb{N}.$
\end{theorem}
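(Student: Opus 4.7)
The plan is to apply the graph-theoretical characterization of fractality (stated in the preceding theorem): $G \#^v O_n$ is a fractal graph if and only if $\deg_{out}(x) = \deg_{in}(x)$ in $G \#^v O_n$ for every $x$ in its vertex set. Since $G$ is assumed fractal, this equality already holds in $G$, so it suffices to track how the vertex set and degrees change under the iterated gluing construction.

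First, I would unpack the definition $G \#^v O_n = \bigcup_{w \in V(G)} (G\, {}^w\#^v O_n)$ carefully. Because $O_n$ has a single vertex $v$, each individual glued graph $G\, {}^w\#^v O_n$ is obtained from $G$ by identifying $v$ with the vertex $w \in V(G)$, producing a glued vertex $v_\times = w$. In particular, no new vertex is introduced; only the $n$ loop edges of $O_n$ are added, and each of them becomes a loop at $w$. Taking the union over all $w \in V(G)$ then yields a graph whose vertex set is exactly $V(G)$ and whose edge set is $E(G) \sqcup \bigsqcup_{w \in V(G)} \{n \text{ loop edges at } w\}$.

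Next I would compute the degrees in $G \#^v O_n$. For any $w \in V(G)$, each of the $n$ loop edges attached at $w$ contributes $1$ to $\deg_{out}(w)$ and $1$ to $\deg_{in}(w)$ in the new graph. Thus
\begin{equation*}
\deg_{out}^{G \#^v O_n}(w) = \deg_{out}^{G}(w) + n, \qquad \deg_{in}^{G \#^v O_n}(w) = \deg_{in}^{G}(w) + n.
\end{equation*}
Since $G$ is fractal, the graph-theoretical characterization gives $\deg_{out}^{G}(w) = \deg_{in}^{G}(w)$ for every $w \in V(G)$. Subtracting $n$ from both sides of the desired equality, the two displayed quantities are equal, so $\deg_{out}^{G \#^v O_n}(w) = \deg_{in}^{G \#^v O_n}(w)$ for every vertex $w$. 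Invoking the characterization in the opposite direction concludes that $G \#^v O_n$ is fractal, and hence the graph groupoid of $G \#^v O_n$ is a graph fractaloid.

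The only mild subtlety, and the step I would be most careful about, is the bookkeeping when forming the union $\bigcup_{w \in V(G)} (G\, {}^w\#^v O_n)$: one must verify that the individual gluings at different vertices do not inadvertently identify vertices of $G$ with one another or create multiplicities beyond the intended $n$ loops at each $w$. This is clear once one notices that $O_n$ has only one vertex and each gluing identifies that single vertex with a distinct $w \in V(G)$, so the loops contributed by distinct gluings are attached at distinct vertices. Once this is verified, the degree computation above is immediate and the theorem follows directly from the graph-theoretical characterization.
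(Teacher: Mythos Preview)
Your proof is correct and follows essentially the same approach as the paper: both arguments observe that $V(G\,\#^{v}O_{n})=V(G)$ because $O_{n}$ has a single vertex, compute that each vertex gains exactly $n$ additional out-edges and $n$ additional in-edges from the attached loops, and then invoke the graph-theoretical characterization of fractal graphs. Your version is somewhat more careful in unpacking the union defining the iterated gluing, but the underlying argument is identical to the paper's.
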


\begin{proof}
Roughly speaking the iterated glued graph $G$ $\#^{v}$ $O_{n}$ is the graph
gotten by gluing the unique vertex $v$ of $O_{n}$ to every vertex of $G,$
recursively. Assume that $G$ is a fractal graph, and assume that

\begin{center}
$\deg _{out}(v)$ $=$ $N$ $=$ $\deg _{in}(v),$ in $G,$
\end{center}

for all $v$ $\in $ $V(G),$ and for some $N$ $\in $ $\Bbb{N}.$ Notice that,
by the construction of the iterated glued graph $G$ $\#^{v}$ $O_{n},$

\begin{center}
$V(G)$ $=$ $V(G$ $\#^{v}$ $O_{n}),$
\end{center}

since $O_{n}$ has only one vertex $v,$ for all $n$ $\in $ $\Bbb{N}.$ So, we
can check that

\begin{center}
$\deg _{out}(v)$ $=$ $N$ $+$ $n$ $=$ $\deg _{in}(v),$ in $G$ $\#^{v}$ $O_{n},
$
\end{center}

for all $v$ $\in $ $V(G$ $\#^{v}$ $O_{n})$ $=$ $V(G),$ for all $n$ $\in $ $%
\Bbb{N}.$ Thus, by the graph-theoretical characterization of graph
fractaloids, the graph $G$ $\#^{v}$ $O_{n}$ is a fractal graph, too.
\end{proof}

\strut \strut Again, by the previous example, we can conclude that $O_{n}$ $%
\#^{v}$ $O_{n}$ $\overset{\text{Graph}}{=}$ $O_{2n},$ $K_{n}$ $\#^{v}$ $%
O_{n},$ $C_{n}$ $\#^{v}$ $O_{n},$ and $L$ $\#^{v}$ $O_{n}$ are fractal
graphs, too. For example, $L$ $\#^{v}$ $O_{1}$ is a graph,

\begin{center}
$
\begin{array}{lllllll}
\cdot \cdot \cdot \longrightarrow  & \bullet  & \longrightarrow  & \bullet 
& \longrightarrow  & \bullet  & \longrightarrow \cdot \cdot \cdot  \\ 
& \circlearrowright  &  & \circlearrowright  &  & \circlearrowright  & 
\end{array}
$,
\end{center}

and hence

\begin{center}
$\deg _{out}(v)$ $=$ $2$ $=$ $\deg _{in}(v),$ in $L$ $\#^{v}$ $O_{1},$
\end{center}

for all $v$ $\in $ $V(L$ $\#^{v}$ $O_{1})$ $=$ $V(L).$ Thus $L$ $\#^{v}$ $%
O_{1}$ is a fractal graph.

\section{Radial Operators of Graph Fractaloids\strut \strut}

In this section, we provide a tool to study the property of graph
fractaloids, in operator theory. Let $\Bbb{G}$ be a graph fractaloids,
induced by a connected locally finite directed graph $G.$ Then, as in [19],
we define a suitable Hilbert space operator $T_{G}$, induced by $\Bbb{G},$
and we observe the spectral property of $T_{G},$ by considering the
operator-valued free distributional data of $T_{G}.$ Then such spectral
information of $T_{G}$ explains how the fractality of $\Bbb{G}$ acts on
Hilbert spaces. In [16] and [19], we define $T_{G}$, in terms of the
labelings on the given fractaloid $\Bbb{G},$ on the graph Hilbert space $%
H_{G}.$ Recall that the labelings are determined by the graph automaton $%
\mathcal{A}_{G},$ induced by the graph $G,$ and we called $T_{G},$ the \emph{%
labeling operator of }$\Bbb{G}$\emph{\ on the graph Hilbert space} $H_{G}.$
In this paper, we define graph fractaloids without using automata theory.
Hence, we need to define $T_{G},$ differently.

Let $\Gamma $ be an arbitrary group and let $L(\Gamma )$ $=$ $\overline{\Bbb{%
C}[\theta (\Gamma )]}^{w}$ be the \emph{group von Neumann algebra},
generated by the group $\Gamma ,$ acting on the \emph{group Hilbert space} $%
H_{\Gamma }$ $=$ $l^{2}(\Gamma ),$ where $(H_{\Gamma },$ $\theta )$ is the (%
\emph{left}) \emph{unitary regular representation of }$\Gamma .$ i.e., $%
\theta $ $:$ $\Gamma $ $\rightarrow $ $B(H_{\Gamma })$ is a \emph{group
action}, acting on $H_{\Gamma },$

\begin{center}
$\theta (g)$ $\xi _{g^{\prime }}$ $\overset{def}{=}$ $u_{g}$ $\xi
_{g^{\prime }}$ $=$ $\xi _{gg^{\prime }},$ for all $\xi _{g^{\prime }}$ $\in 
$ $H_{\Gamma },$
\end{center}

where $u_{g}$ is the unitary with its adjoint $u_{g}^{*}$ $=$ $u_{g^{-1}},$
where $g^{-1}$ is the group inverse of $g$ in $\Gamma .$ Clearly, we can
define the ``right'' group action $\pi $ $:$ $\Gamma $ $\rightarrow $ $%
B(H_{\Gamma })$ by

\begin{center}
$\pi (g)$ $\xi _{g^{\prime }}$ $=$ $u_{g}^{op}$ $\xi _{g^{\prime }}$ $%
\overset{def}{=}$ $\xi _{g^{\prime }g},$ for all $g^{\prime }$ $\in $ $%
\Gamma .$
\end{center}

Then the\emph{\ right group von Neumann algebra }$R(\Gamma )$ $=$ $\overline{%
\Bbb{C}[\pi (\Gamma )]}^{w}$ is well-defined in $B(H_{\Gamma }),$ and it is
the opposite $W^{*}$-algebra $L(\Gamma )^{op}$ of $L(\Gamma ),$ and hence
they are anti-$*$-isomorphic from each other. Now, fix a group $\Gamma $ and
its right (or left) group von Neumann algebra $R(\Gamma )$ (resp., $L(\Gamma
)$). Then the \emph{radial operator }$T_{\Gamma }$\emph{\ of} $\Gamma $ in $%
R(\Gamma )$ (resp., $L(\Gamma )$) is defined by

\begin{center}
$T_{\Gamma }$ $\overset{def}{=}$ $\underset{g\in X}{\sum }$ $\pi (g)$ $=$ $%
\underset{g\in \Gamma }{\sum }$ $u_{g}^{op}$
\end{center}

(resp., $T_{\Gamma }$ $\overset{def}{=}$ $\underset{g\in \Gamma }{\sum }$ $%
\theta (g)$ $=$ $\underset{g\in \Gamma }{\sum }$ $u_{g}$) (See [9]), where $%
X $ $\subset $ $\Gamma $ is the set of all generators of $\Gamma .$

Instead of using automata theory, like the radial operators of groups, we
will re-define the labeling operators of graph fractaloids (in the sense of
[20] and [19]) without the labeling process, and we call them, the radial
operators of graph fractaloids.

\subsection{Radial Operators\strut}

Let $G$ be a connected locally finite directed graph with its graph groupoid 
$\Bbb{G},$ and let\strut

\begin{center}
$N$ $\overset{def}{=}$ $\max \{\deg _{out}(v)$ $:$ $v$ $\in $ $V(G)\}.$\strut
\end{center}

Let $M_{G}$ be the right graph von Neumann algebra of $G,$ which is the
groupoid $W^{*}$-algebra $\overline{\Bbb{C}[R(\Bbb{G})]}^{w}$ in $B(H_{G}),$
where $(H_{G},$ $R)$ is the canonical right representation of $\Bbb{G}$ (See
Section 2.4)\strut .

\begin{definition}
Let $M_{G}$ be the right graph von Neumann algebra of $G.$ The radial
operator $T_{G}$ of the graph groupoid $\Bbb{G}$ of $G$ is defined by an
element

\begin{center}
$T_{G}$ $=$ $\underset{e\in E(\widehat{G})}{\sum }$ $R_{e}$ $=$ $\underset{%
e\in E(G)}{\sum }(R_{e}$ $+$ $R_{e^{-1}}),$
\end{center}

in $M_{G},$ where $\widehat{G}$ is the shadowed graph of $G.$
\end{definition}

\strut The radial operator $T_{G}$ of $\Bbb{G}$ is similarly defined like
the Hecke-type operators or the Ruelle operators or the radial operators of
groups. It is easy to check that:

\begin{lemma}
The radial operator $T_{G}$ of the graph groupoid $\Bbb{G}$ of $G$ is
self-adjoint. $\square $
\end{lemma}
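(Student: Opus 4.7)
The plan is to show $T_G^* = T_G$ by identifying the adjoint of each summand $R_e$ and then observing that summation over the shadowed edge set is invariant under the shadow involution.

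First I would verify the basic identity $R_e^* = R_{e^{-1}}$ for every $e \in E(\widehat{G})$. This is a direct Hilbert space computation using the definition $R_e \xi_{w'} = \xi_{w'e}$ and the orthonormal basis $\{\xi_w : w \in FP_r(\widehat{G})\}$ of $H_G$. For basis vectors $\xi_{w_1}, \xi_{w_2}$, I would show
\[
\langle R_e \xi_{w_1}, \xi_{w_2} \rangle \;=\; \langle \xi_{w_1 e}, \xi_{w_2}\rangle \;=\; \langle \xi_{w_1}, \xi_{w_2 e^{-1}}\rangle \;=\; \langle \xi_{w_1}, R_{e^{-1}} \xi_{w_2}\rangle,
\]
using the reduction rule (RR): $w_1 e = w_2$ (as reduced words) if and only if $w_1 = w_2 e^{-1}$, with both sides vanishing in exactly the same non-admissible cases (which includes the convention $\xi_\emptyset = 0_{H_G}$). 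Hence $R_e^* = R_{e^{-1}}$ in $M_G$.

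Next, taking adjoints termwise and using boundedness of the finite (locally finite) sum, I would write
\[
T_G^* \;=\; \sum_{e \in E(\widehat{G})} R_e^* \;=\; \sum_{e \in E(\widehat{G})} R_{e^{-1}}.
\]
Since the shadow map $e \mapsto e^{-1}$ is an involutive bijection on $E(\widehat{G}) = E(G) \sqcup E(G^{-1})$, re-indexing the last sum by $f = e^{-1}$ gives $T_G^* = \sum_{f \in E(\widehat{G})} R_f = T_G$, which is the desired self-adjointness.

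The only non-routine point is the adjoint identity $R_e^* = R_{e^{-1}}$, since one must carefully check that the reduction rule (RR) makes the inner-product matching work on the nose, including the degenerate cases where $w_1 e$ or $w_2 e^{-1}$ collapses to a vertex or to $\emptyset$. Everything else reduces to the observation that shadowing is an involution on $E(\widehat{G})$, so the computation is short.
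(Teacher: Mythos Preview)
Your proof is correct and follows essentially the same approach as the paper: both arguments rest on the identity $R_e^* = R_{e^{-1}}$ together with the fact that the shadow map $e \mapsto e^{-1}$ is an involution on $E(\widehat{G})$. The paper phrases the conclusion by grouping the sum as $T_G = \sum_{e \in E(G)} (R_e + R_e^*)$, whereas you take the adjoint termwise and re-index; these are two ways of saying the same thing, and your added verification of $R_e^* = R_{e^{-1}}$ via inner products is more explicit than the paper, which simply asserts it.
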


Notice that the radial operator $T_{G}$ of $\Bbb{G}$ is defined by the edges
in the ``shadowed'' graph $\widehat{G}$ of $G.$ Therefore, for any summand $%
R_{e},$ we can find its adjoint $R_{e}^{*}$ $=$ $R_{e^{-1}},$ as a summand
of $T_{G}.$ i.e., we can re-define $T_{G}$ by

\begin{center}
$T_{G}$ $=$ $\underset{e\in E(G)}{\sum }\left( R_{e}+R_{e}^{*}\right) .$
\end{center}

Therefore, indeed, the operator $T_{G}$ is self-adjoint in $M_{G}$. By the
self-adjointness of $T_{G},$ the free distributional data of $T_{G},$ in $%
M_{G},$ represents the spectral property of $T_{G}$ on $H_{G}.$

\subsection{Spectral Property of Graph Fractaloids\strut}

By Voiculescu, the spectral property of a ``self-adjoint'' operator $x$ in a
von Neumann algebra $\mathcal{M}$ is represented by the free distributional
data of $x$ over a $W^{*}$-subalgebra $\mathcal{N}$ of $\mathcal{M},$ if
there is a suitable conditional expectation $E$ $:$ $\mathcal{M}$ $%
\rightarrow $ $\mathcal{N}$ (See [5] and [28]). So, to consider the spectral
property of our radial operator $T_{G}$ of $\Bbb{G}$ (on $H_{G}$), we can
compute the free moments $\{E(T_{G}^{n})\}_{n=1}^{\infty },$ where $E$ $:$ $%
M_{G}$ $\rightarrow $ $D_{G}$ is the canonical conditional expectation in
the sense of [10] and [11], where

\begin{center}
$D_{G}$ $\overset{def}{=}$ $\underset{v\in V(\widehat{G})}{\oplus }$ $\left( 
\Bbb{C}\cdot R_{v}\right) $
\end{center}

is the diagonal subalgebra of $M_{G}.$ Since every element $a$ $\in $ $M_{G}$
has its expression

\begin{center}
$a$ $=$ $\underset{w\in \Bbb{G}}{\sum }$ $t_{w}$ $R_{w},$ with $t_{w}$ $\in $
$\Bbb{C},$
\end{center}

we can define $E$ by

\begin{center}
$E(a)$ $\overset{def}{=}$ $\underset{v\in V(\widehat{G})}{\sum }$ $t_{v}$ $%
R_{v},$
\end{center}

for all $a$ $\in $ $M_{G}.$ Then the pair $(M_{G},$ $E)$ is a $D_{G}$-valued 
$W^{*}$-probability space in the sense of Voiculescu. The $D_{G}$-valued
moments of $a$ is defined by the sequence $\{E(a^{n})\}_{n=1}^{\infty },$
and the $D_{G}$-valued $*$-moments of $a$ is defined by the collection,

\begin{center}
$\{E(a^{i_{1}}$ $a^{i_{2}}$ ... $a^{i_{n}})$ $:$ $(i_{1},$ ..., $i_{n})$ $%
\in $ $\{1,$ $*\}^{n},$ $\forall n$ $\in $ $\Bbb{N}\}.$
\end{center}

Then these $D_{G}$-valued $*$-moments of $a$ contain the free distributional
data of $a.$ Clearly, if $a$ is self-adjoint, then the sequence $%
\{E(a^{n})\}_{n=1}^{\infty }$ contains the same data (See [5] and [28]).

\strut Let $\Bbb{R}^{2}$ be the $2$-dimensional $\Bbb{R}$-vector space.
Without loss of generality, we assume $\Bbb{R}^{2}$ is the $\Bbb{R}$-plane,
generated by the horizontal axis and the vertical axis. When we denote the
point $P$ in $\Bbb{R}^{2},$ we will use the pair notation $(a,$ $b),$ as
usual. When we regard the point $P$ in $\Bbb{R}^{2}$ as a vector connecting
the origin $(0,$ $0)$ to $P,$ we use the vector notation $\overrightarrow{(a,%
\text{ }b)}.$ Fix $N$ $\in $ $\Bbb{N}$, and define the \emph{lattices} $%
l_{1},$ ..., $l_{N},$ embedded in $\Bbb{R}^{2},$ by the vectors,

\begin{center}
$l_{k}$ $\overset{def}{=}$ $\overrightarrow{(1,\text{ }e^{k})},$ for all $k$ 
$=$ $1,$ ..., $N,$
\end{center}

where $e$ is the natural exponential number in $\Bbb{R}.$ Define the
corresponding downward lattices $l_{-1},$ $l_{-2},$ ..., $l_{-N}$ by

\begin{center}
$l_{-k}$ $\overset{def}{=}$ $\overrightarrow{(1,\text{ }-e^{k})},$ for all $%
k $ $=$ $1,$ ..., $N.$
\end{center}

Now, construct \emph{lattice paths}, by attaching $l_{\pm 1},$ ..., $l_{\pm
N}$ as follows; $l_{i}$ $l_{j}$ is a lattice path, by transforming the
starting point of the lattice $l_{j}$ to the end point of $l_{i}.$

i.e., we identify the starting point $(0,$ $0)$ of $l_{j}$ to the ending
point $(1,$ $\varepsilon _{i}e^{i})$ of $l_{i},$ where

\begin{center}
$\varepsilon _{i}$ $=$ $\left\{ 
\begin{array}{lll}
1 &  & \text{if }i>0 \\ 
-1 &  & \text{if }i<0,
\end{array}
\right. $
\end{center}

for all $i$ $\in $ $\{\pm 1,$ ..., $\pm N\}.$ Inductively, we can determine
the lattice paths

\begin{center}
$l_{i_{1}}$ $l_{i_{2}}$ ... $l_{i_{n}},$
\end{center}

for all $(i_{1},$ ..., $i_{n})$ $\in $ $\{\pm 1,$ ..., $\pm N\}^{n},$ for
all $n$ $\in $ $\Bbb{N}.$

By $\mathcal{L}_{N},$ we denote the collection of all lattice paths induced
by the lattices $l_{\pm 1},$ ..., $l_{\pm N}.$ Now, let $l_{i_{1}}$ ... $%
l_{i_{n}}$ $\in $ $\mathcal{L}_{N}.$ Then we define the length $\left|
l_{i_{1}}\text{ ... }l_{i_{n}}\right| $ of the given lattice path by the
cardinality $n$ of the lattices, generating the lattice path. i.e.,

\begin{center}
$\left| l_{i_{1}}\text{ ... }l_{i_{n}}\right| $ $=$ $n.$
\end{center}

Define the subset $\mathcal{L}_{N}(n)$ of $\mathcal{L}_{N}$ by the
collection of all length-$n$ lattice paths,

\begin{center}
$\mathcal{L}_{N}(n)$ $\overset{def}{=}$ $\{l$ $\in $ $\mathcal{L}_{N}$ $:$ $%
\left| l\right| $ $=$ $n\},$ for all $n$ $\in $ $\Bbb{N}.$
\end{center}

Then $\mathcal{L}_{N}$ is decomposed by $\mathcal{L}_{N}(n)$'s:

\begin{center}
$\mathcal{L}_{N}$ $=$ $\sqcup _{n=1}^{\infty }$ $\left( \mathcal{L}%
_{N}(n)\right) .$
\end{center}

Let $l$ $\in $ $\mathcal{L}_{N}(n),$ for some $n$ $\in $ $\Bbb{N}.$ And
assume that $l$ starts at $(0,$ $0),$ and $l$ end on the horizontal axis.
Such a lattice path $l$ is said to be a \emph{lattice path} \emph{satisfying
the axis property}. Define the subset $\mathcal{L}_{N}^{o}(n)$ of $\mathcal{L%
}_{N}(n)$ by

\begin{center}
$\mathcal{L}_{N}^{o}(n)$ $\overset{def}{=}$ $\{l$ $\in $ $\mathcal{L}_{N}(n)$
$:$ $l$ satisfies the axis property$\},$
\end{center}

for all $n$ $\in $ $\Bbb{N}.$ By definition, $\mathcal{L}_{N}^{o}(n)$ is
empty, whenever $n$ is odd.

In [19], we found the following $D_{G}$-valued moments of the radial
operator $T_{G}$ of $\Bbb{G},$ where $\Bbb{G}$ is a graph fractaloid.

\begin{theorem}
Let $T_{G}$ be the radial operator of a graph fractaloid $\Bbb{G}$, induced
by a connected locally finite directed graph $G,$ and let

\begin{center}
$N$ $\overset{def}{=}$ $\max \{\deg _{out}(v)$ $:$ $v$ $\in $ $V(G)\}.$
\end{center}

Then

\begin{center}
$E\left( T_{G}^{n}\right) $ $=$ $\left| \mathcal{L}_{N}^{o}(n)\right| \cdot
1_{D_{G}},$ for all $n$ $\in $ $\Bbb{N}.$
\end{center}

More precisely,

\begin{center}
$E(T_{G}^{n})$ $=$ $\left\{ 
\begin{array}{ll}
\left| \mathcal{L}_{N}^{o}(n)\right| \cdot 1_{D_{G}} & \text{if }n\text{ is
even} \\ 
0_{D_{G}} & \text{if }n\text{ is odd.}
\end{array}
\right. $
\end{center}

$\square $\strut 
\end{theorem}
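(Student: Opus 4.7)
The plan is to expand $T_G^n$ directly and then evaluate $E$ summand by summand. Since $R:\Bbb{G}\to B(H_G)$ is an anti-homomorphism (one checks $R_{w_1}R_{w_2}\xi_{w'}=\xi_{w'w_2w_1}=R_{w_2w_1}\xi_{w'}$), we have $T_G^n=\sum_{(e_1,\ldots,e_n)\in E(\widehat{G})^n}R_{e_n\cdots e_1}$, and after relabeling the tuple this is a sum over length-$n$ admissible edge words $f_1f_2\cdots f_n$ in $E(\widehat{G})$ of operators $R_{f_1\cdots f_n}$. Since $E$ retains only the summands whose index in $\Bbb{G}$ is a vertex, we obtain
\[
E(T_G^n)\;=\;\sum_{v\in V(\widehat{G})}c_v(n)\,R_v,
\]
where $c_v(n)$ counts the length-$n$ walks at $v$ in $\widehat{G}$ whose product in $\Bbb{G}$ reduces to $v$.

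The next step is to show that the coefficient $c_v(n)$ does not depend on $v$; this is where fractality enters decisively. By the graph-theoretic characterization of graph fractaloids, every vertex-tree $\mathcal{T}_v$ is graph-isomorphic to the $2N$-regular tree $\mathcal{T}_{2N}$. Walks of length $n$ based at $v$ in $\widehat{G}$ correspond canonically to length-$n$ downward paths from the root of $\mathcal{T}_v$, and a graph-isomorphism $\mathcal{T}_v\cong\mathcal{T}_{v'}$ transports such walks at $v$ to walks at $v'$ while preserving the local pattern of shadow-cancellations used in the reduction in $\Bbb{G}$. Hence $c_v(n)=c_{v'}(n)=:c(n)$ for all $v,v'$, and therefore $E(T_G^n)=c(n)\cdot\sum_{v}R_v=c(n)\cdot 1_{D_G}$.

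To evaluate $c(n)$, I would use the canonical labeling $\varpi:E(\widehat{G})\to\{l_{\pm 1},\ldots,l_{\pm N}\}$ of Section~3.1, which under the fractal hypothesis is bijective on the $2N$ out-edges at each vertex and sends an edge $e$ and its shadow $e^{-1}$ to opposite labels. Fixing a base vertex, length-$n$ walks are then in bijection with label sequences in $\{\pm 1,\ldots,\pm N\}^n$, i.e.\ with elements of $\mathcal{L}_N(n)$. The key local observation is that two successive edges $f_jf_{j+1}$ form a cancellable pair $ee^{-1}$ in $\Bbb{G}$ iff their labels are $(+k,-k)$ or $(-k,+k)$, since the unique out-edge with label $-k$ at the terminal vertex of a $+k$-edge is precisely its shadow; iterating this local cancellation shows that the walk reduces to $v$ exactly when the associated lattice path in $\mathcal{L}_N(n)$ satisfies the axis property. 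Parity handles the odd case: each cancellation removes a pair of letters, so $c(n)=0$ when $n$ is odd, in agreement with $\mathcal{L}_N^o(n)=\varnothing$ for odd $n$.

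The main obstacle is the last bijective identification --- verifying that the geometric condition of ending on the horizontal axis (which, by transcendence of $e$, enforces a balance of $+k$ and $-k$ labels for each $k$) is exactly the combinatorial condition of free reducibility of the labeling word along the walk. This matching of a global axis-closure statement in $\Bbb{R}^2$ with an iterated local cancellation procedure in $\Bbb{G}$ is the technical heart of the argument, and it is precisely the lattice-path enumeration carried out via the automata-theoretic framework in~[19] and~[20].
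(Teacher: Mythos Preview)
The paper does not prove this theorem here --- it is imported from [19], with the trailing $\square$ signalling that no argument is supplied in the present text --- so there is no in-paper proof to compare yours against directly. I can only assess your argument on its own terms.

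The first two thirds of your outline are fine: expanding $T_G^{\,n}$, reading off $E(T_G^{\,n})=\sum_v c_v(n)\,R_v$, and using the fractal hypothesis (all vertex-trees $\mathcal{T}_v\cong\mathcal{T}_{2N}$) to conclude that $c_v(n)$ is independent of $v$ all track the intended strategy. The genuine gap is in your final identification. You assert that a walk $f_1\cdots f_n$ reduces to a vertex in $\Bbb{G}$ exactly when its label sequence, read as a lattice path, satisfies the axis property. But the axis property is an \emph{abelian} condition --- by transcendence of $e$ it says only that, for each $k$, the labels $+k$ and $-k$ occur equally often --- whereas reducibility to a vertex in $\Bbb{G}$ is the \emph{free} condition that the label word cancels to the empty word under your own iterated adjacent $(+k,-k)$ removals. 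These agree for $N=1$ but not for $N\ge 2$. Concretely, take $G=O_2$ with loops $a,b$ labelled $l_1,l_2$ (here your bijectivity-of-labels claim is certainly correct): the walk $a\,b\,a^{-1}b^{-1}$ has label sequence $(+1,+2,-1,-2)$, which lands on the axis, yet as the commutator in $F_2\cong\Bbb{G}$ it does not reduce to the vertex. At the level of counts, the closed length-$4$ walks in the $4$-regular tree number $28$, whereas the axis-ending count is $|\mathcal{L}_2^{\,o}(4)|=\binom{4}{2}^{2}=36$. So the step you flag as the ``technical heart'' is not merely hard --- as you have formulated it, the claimed equivalence is false, and the bijection you set up cannot carry ``reduces to a vertex'' to ``ends on the axis''. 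Whatever mechanism in [19]/[20] produces the numbers $|\mathcal{L}_N^{\,o}(n)|$, it is not this one; you will need to revisit either the operator you are taking moments of (the labeling operator of [19] versus the radial sum $\sum_e R_e$ defined here) or the combinatorial meaning you are assigning to $\mathcal{L}_N^{\,o}(n)$ before this line of argument can close.
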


In [40], the first author and his undergraduate students computed the
cardinalities $\left| \mathcal{L}_{N}^{o}(n)\right| $ of $\mathcal{L}%
_{N}^{o}(n)$. In particular, they could find:

\begin{example}
(See [40]) For any $n$ $\in $ $\Bbb{N},$

\begin{center}
$\left| \mathcal{L}_{1}(2n)\right| $ $=$ $_{2n}C_{n},$
\end{center}

and

\begin{center}
$\left| \mathcal{L}_{2}(2n)\right| $ $=$ $_{2n}C_{n}\left( \sum_{j=0}^{n}%
\text{ }\left( _{n}C_{j}\right) ^{2}\right) $,
\end{center}

where $_{m}C_{k}$ $\overset{def}{=}$ $\frac{m!}{n!(m-n)!},$ for all $n$ $%
\leq $ $m$ in $\Bbb{N}.$
\end{example}

\strut More precisely, we have the following computation.

\begin{proposition}
(See [19] and [40]) Let $N$ $\in $ $\Bbb{N},$ and $\mathcal{L}_{N}(n)$, the
collection of all length-$n$ lattice paths induced by the lattices $l_{\pm
1},$ ..., $l_{\pm N}.$ If $\mathcal{L}_{N}^{o}(2n)$ is the subset of $%
\mathcal{L}_{N}(2n),$ consisting of all length-$2n$ lattice paths satisfying
the axis property, for all $n$ $\in $ $\Bbb{N},$ then

\begin{center}
$\left| \mathcal{L}_{N}^{o}(2n)\right| $ $=$ $\underset{(j_{1},...,j_{2n})%
\in \mathcal{C}_{2n}}{\sum }$ $c_{j_{1},...,j_{2n}}$,
\end{center}

where

\begin{center}
$\mathcal{C}_{2n}$ $\overset{def}{=}$ $\left\{ (j_{1},\text{ ..., }%
j_{2n})\left| 
\begin{array}{c}
(j_{1},...,j_{2n})\in \{\pm 1,...,\pm N\}^{2n}, \\ 
j_{1}\leq j_{2}\leq ...\leq j_{2n} \\ 
\sum_{k=1}^{2n}j_{k}=0
\end{array}
\right. \right\} ,$
\end{center}

and where the numbers $c_{j_{1}...j_{2n}}$ satisfies the recurrence relation:

\begin{center}
$c_{j_{1}...j_{2n}}$ $=$ $c_{j_{1},...,j_{2n-m},\text{ }\underset{m\text{%
-times}}{\underbrace{j_{0},\text{ }j_{0},.......,,j_{0},\text{ }j_{0}}}}$ $=$
$c_{j_{1},...,j_{2n-m}}$ $\cdot $ $_{2n}C_{m},$
\end{center}

with

\begin{center}
$c_{j,\text{ }j,\text{ ..., }j}$ $=$ $1,$ for all $j$ $\in $ $\{\pm 1,$ ..., 
$\pm N\},$
\end{center}

for all $1$ $\leq $ $m$ $\leq $ $2n,$ in $\Bbb{N}$. Here, $_{n}C_{k}$ means $%
\frac{n!}{k!(n-k)!},$ for all $k$ $\leq $ $n$ $\in $ $\Bbb{N}.$ $\square $
\end{proposition}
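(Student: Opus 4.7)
The plan is to partition $\mathcal{L}_N^o(2n)$ according to the multiset of lattice indices used by each path, and to recognize the count per multiset as a multinomial coefficient that matches $c_{\mathbf{j}}$.

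A length-$2n$ lattice path is encoded by a sequence $(i_1,\ldots,i_{2n})\in\{\pm 1,\ldots,\pm N\}^{2n}$; sorting its entries non-decreasingly yields a tuple $\mathbf{j}=(j_1\le j_2\le\cdots\le j_{2n})$. The axis property---ending on the horizontal axis---is the vanishing of the total vertical displacement $\sum_{k=1}^{2n}\mathrm{sgn}(i_k)\,e^{|i_k|}$. Because $e,e^2,\ldots,e^N$ are $\mathbb{Q}$-linearly independent (Lindemann--Weierstrass), this is equivalent to requiring that, for every $k\in\{1,\ldots,N\}$, the multiplicity of $+k$ in the sequence equals the multiplicity of $-k$. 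In particular $\sum_k j_k=0$, so $\mathbf{j}\in\mathcal{C}_{2n}$.

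For each sorted tuple $\mathbf{j}\in\mathcal{C}_{2n}$, suppose its distinct entries have multiplicities $m_1,\ldots,m_r$, where $m_1+\cdots+m_r=2n$. The sequences $(i_1,\ldots,i_{2n})$ sorting to $\mathbf{j}$ are in bijection with the ordered arrangements of this multiset, giving
\[
\left|\mathcal{L}_N^o(2n)\right|=\sum_{\mathbf{j}\in\mathcal{C}_{2n}}\binom{2n}{m_1,\ldots,m_r}.
\]
It then suffices to identify the recursively defined quantity $c_{\mathbf{j}}$ with this multinomial coefficient. I would read the stated recurrence so that ``$2n$'' in the factor $\binom{2n}{m}$ denotes the current length $L$ of the tuple (i.e., peeling off the terminal block of $m$ equal entries contributes $\binom{L}{m}$), and then induct on the number $r$ of distinct values. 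The base case $r=1$ is the stated $c_{y,\ldots,y}=1=\binom{L}{L}$; the inductive step follows immediately from the classical identity
\[
\binom{L}{m_1,\ldots,m_{r-1},m_r}=\binom{L}{m_r}\binom{L-m_r}{m_1,\ldots,m_{r-1}}.
\]

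The principal obstacle is the first step: translating the geometric axis condition into the combinatorial multiplicity-balance condition encoded in $\mathcal{C}_{2n}$. The Lindemann--Weierstrass input is essential, because the bare integer equation $\sum_k j_k=0$ is strictly weaker than vertical-displacement vanishing and would lead to over-counting for larger $N$. Once $\mathcal{C}_{2n}$ is identified with the set of sorted multiplicity-balanced tuples, the bijective enumeration of Step~2 and the inductive identification of $c_{\mathbf{j}}$ with the multinomial coefficient in Step~3 are routine.
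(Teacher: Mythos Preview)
The paper supplies no proof here (the proposition is imported from references [19] and [40] and closed with $\square$), so there is nothing to compare your argument against on the paper's side.

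Your three steps are the natural ones and each is sound on its own: the Lindemann--Weierstrass argument correctly converts the axis property into the multiplicity-balance condition (for every $j\in\{1,\dots,N\}$ the number of occurrences of $+j$ equals that of $-j$); the count of sequences with a given sorted multiset is indeed the multinomial coefficient; and your reading of the recurrence, with ``$2n$'' replaced by the current tuple length at each peel, does reproduce that multinomial coefficient.

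The genuine gap is the sentence ``once $\mathcal{C}_{2n}$ is identified with the set of sorted multiplicity-balanced tuples.'' You yourself flag that the defining condition $\sum_k j_k=0$ is strictly weaker than multiplicity balance, and you only prove one inclusion (every multiplicity-balanced tuple lies in $\mathcal{C}_{2n}$). The reverse inclusion fails already for $N=2$: with $2n=6$, the tuple $(-2,-2,1,1,1,1)$ is in $\mathcal{C}_6$ but is not multiplicity-balanced, contributing $c=\binom{6}{4}=15$; together with its mirror $(-1,-1,-1,-1,2,2)$ this pushes the proposition's sum to $430$, while the paper's own closed form $\binom{2n}{n}\sum_{j=0}^{n}\binom{n}{j}^{2}=\binom{6}{3}^{2}=400$ (stated in the example just before the proposition) and your multiplicity-balanced count both give $400$. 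So the statement as printed over-indexes; what you have actually proved is the corrected version in which the summation ranges over sorted multiplicity-balanced tuples rather than all of $\mathcal{C}_{2n}$. You should say this explicitly rather than silently conflating the two index sets.
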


\strut For instance, assume that we have $c_{-3,\,-2,-2,-1,\text{ }1,\text{ }%
2,\text{ }2,\,3}$, as a summand of $\left| \mathcal{L}_{N}^{o}(6)\right| ,$
for $N$ $\geq $ $3.$ Indeed, it is a summand of $\left| \mathcal{L}%
_{k}^{o}(6)\right| ,$ since

\begin{center}
$(-3,$ $-2,$ $-2,$ $-1,$ $1,$ $2,$ $2,$ $3)$ $\in $ $\mathcal{C}_{6}.$
\end{center}

Then, by the recurrence relation, we can compute it by

\begin{center}
$
\begin{array}{ll}
c_{-3,-2,-2,-1,1,2,2,3} & =c_{-3,-2,-2,-1,1,2,2}\cdot _{6}C_{1} \\ 
& =c_{-3,-2,-2,-1}\text{ }\cdot \text{ }_{5}C_{2}\cdot \text{ }_{6}C_{1} \\ 
& =c_{-3,-2,-2\text{ }}\cdot \text{ }_{4}C_{1}\text{ }\cdot \text{ }_{5}C_{2}%
\text{ }\cdot \text{ }_{6}C_{1} \\ 
& =c_{-3}\text{ }\cdot \text{ }_{3}C_{2}\text{ }\cdot \text{ }_{4}C_{1}\text{
}\cdot \text{ }_{5}C_{2}\text{ }\cdot \text{ }_{6}C_{1} \\ 
& =1\cdot \text{ }_{3}C_{2}\text{ }\cdot \text{ }_{4}C_{1}\text{ }\cdot 
\text{ }_{5}C_{2}\text{ }\cdot \text{ }_{6}C_{1}.
\end{array}
$
\end{center}

\section{\strut Classification of Graph Fractaloids}

Let $A_{1}$ and $A_{2}$ be von Neumann algebras and let $a_{k}$ $\in $ $%
A_{k} $ be arbitrary fixed ``self-adjoint'' operators, for $k$ $=$ $1,$ $2.$
Assume that $A_{1}$ and $A_{2}$ contains their $W^{*}$-subalgebras $B_{1}$
and $B_{2},$ which are $*$-isomorphic to a von Neumann algebra $B.$ Without
loss of generality, assume that $A_{1}$ and $A_{2}$ contain their common $%
W^{*}$-subalgebra $B.$ Let $E_{k}$ $:$ $A_{k}$ $\rightarrow $ $B$ be a
conditional expectation, for $k$ $=$ $1,$ $2.$ We say that the \emph{%
self-adjoint operators }$a_{1}$\emph{\ and }$a_{2}$\emph{\ are} \emph{%
identically free distributed over} $B,$ if

\begin{center}
$E(a_{1}^{n})$ $=$ $E(a_{2}^{n})$ in $B,$ for all $n$ $\in $ $\Bbb{N}.$
\end{center}

If two connected locally finite directed graphs $G_{1}$ and $G_{2}$ are
fractal graphs, and if

\begin{center}
$\deg _{out}(v_{1})$ $=$ $N$ $=$ $\deg _{out}(v_{2}),$
\end{center}

for $v_{1}$ $\in $ $V(G_{1})$ and $v_{2}$ $\in $ $V(G_{2}),$ then the
corresponding graph fractaloids $\Bbb{G}_{1}$ and $\Bbb{G}_{2}$ have the
identically free distributed spectral properties ``up to identity
elements.'' Indeed, if $N$ $\in $ $\Bbb{N}$ is given as above, then

\begin{center}
\strut $E(T_{G_{1}}^{n})$ $=$ $\left| \mathcal{L}_{N}^{o}(n)\right| \cdot $ $%
1_{D_{G_{1}}}$
\end{center}

and

\begin{center}
$E(T_{G_{2}}^{n})$ $=$ $\left| \mathcal{L}_{N}^{o}(n)\right| $ $\cdot $ $%
1_{D_{G_{2}}},$
\end{center}

where $\mathcal{L}_{N}^{o}(n)$ is the collection of all length-$n$ lattice
paths, induced by the lattices $l_{\pm 1},$ ..., $l_{\pm N},$ satisfying the
axis property, for $n$ $\in $ $\Bbb{N}.$ For instance, let $G_{1}$ $=$ $%
K_{n} $ be the one-flow circulant graph with $n$-vertices, for $n$ $\in $ $%
\Bbb{N}, $ and $G_{2}$ $=$ $L,$ where $L$ is the infinite linear graph with

\begin{center}
$V(L)$ $=$ $\{$...$,$ $v_{-2},$ $v_{-1},$ $v_{0},$ $v_{1},$ $v_{2},$ $v_{3},$
...$\}$
\end{center}

and

\begin{center}
$E(L)$ $=$ $\{e_{j}$ $=$ $v_{j}$ $e_{j}$ $v_{j+1}$ $:$ $j$ $\in $ $\Bbb{Z}%
\}. $
\end{center}

Then the corresponding graph groupoids $\Bbb{G}_{1}$ and $\Bbb{G}_{2}$ are
graph fractaloids, since

\begin{center}
$\deg _{out}^{(G_{1})}(v)$ $=$ $1$ $=$ $\deg _{in}^{(G_{1})}(v),$ in $G_{1},$
\end{center}

and

\begin{center}
$\deg _{out}^{(G_{2})}(x)$ $=$ $1$ $=$ $\deg _{in}^{(G_{2})}(x)$, in $G_{2},$
\end{center}

for all $v$ $\in $ $V(G_{1}),$ and $x$ $\in $ $V(G_{2}).$ So, the radial
operators $T_{G_{1}}$ and $T_{G_{2}}$ have their amalgamated free moments,

\begin{center}
$E\left( T_{G_{1}}^{k}\right) $ $=$ $\left| \mathcal{L}_{1}^{o}(k)\right| $ $%
\cdot $ $1_{\Bbb{C}^{\oplus \,n}},$
\end{center}

and

\begin{center}
$E\left( T_{G_{2}}^{k}\right) $ $=$ $\left| \mathcal{L}_{1}^{o}(k)\right| $ $%
\cdot $ $1_{\Bbb{C}^{\oplus \,\infty }},$
\end{center}

because

\begin{center}
$D_{G_{1}}$ $\overset{*\text{-isomorphic}}{=}$ $\Bbb{C}^{\oplus \,n},$ in $%
M_{G_{1}},$
\end{center}

and

\begin{center}
$D_{G_{1}}$ $\overset{*\text{-isomorphic}}{=}$ $\Bbb{C}^{\oplus \,\infty },$
in $M_{G_{2}}.$
\end{center}

\subsection{Fractal Pairs of Graph Fractaloids}

Motivated by the example, in the previous paragraph, we can obtain the
following proposition.

\begin{proposition}
Let $G_{1}$ and $G_{2}$ be connected locally finite directed graphs with

\begin{center}
$
\begin{array}{ll}
\max \{\deg _{out}(v_{1}):v_{1}\in V(G_{1})\} & =N_{0} \\ 
& =\max \{\deg _{out}(v_{2}):v_{2}\in V(G_{2})\},
\end{array}
$
\end{center}

in $\Bbb{N},$ and

\begin{center}
$\left| V(G_{1})\right| $ $=$ $N^{0}$ $=$ $\left| V(G_{2})\right| $,
\end{center}

where $N_{0}$ $\in $ $\Bbb{N}$ $\cup $ $\{\infty \}.$ If $G_{1}$ and $G_{2}$
are fractal graphs, then the radial operators $T_{G_{1}}$ and $T_{G_{2}}$
are identically free distributed over $\Bbb{C}^{\oplus \,N^{0}}.$
\end{proposition}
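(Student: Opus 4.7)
The plan is to reduce the statement to a direct application of the free-moment formula for radial operators of graph fractaloids (the $E(T_G^n) = |\mathcal{L}_N^o(n)|\cdot 1_{D_G}$ theorem) together with the observation that the diagonal subalgebras of the two right graph von Neumann algebras are canonically $*$-isomorphic as unital algebras.

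First I would record the graph-theoretic input. Since $G_1$ and $G_2$ are fractal graphs with common maximum out-degree $N_0$, the graph-theoretical characterization of graph fractaloids recalled earlier forces $\deg_{out}(v) = N_0 = \deg_{in}(v)$ for every $v \in V(G_1) \cup V(G_2)$. In particular, the number $N$ appearing in the moment formula is the same value $N_0$ for both graphs. Second, I would identify the two diagonal subalgebras. By definition,
\[
D_{G_k} \;=\; \bigoplus_{v \in V(\widehat{G_k})} \mathbb{C}\cdot R_v, \qquad k=1,2,
\]
and since $V(\widehat{G_k}) = V(G_k)$ has cardinality $N^0$, the map sending the orthogonal projection $R_v$ onto a fixed enumeration of coordinates furnishes a unital $*$-isomorphism $\iota_k : D_{G_k} \to \mathbb{C}^{\oplus N^0}$. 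In particular $\iota_k(1_{D_{G_k}}) = 1_{\mathbb{C}^{\oplus N^0}}$ for $k=1,2$, so after identification the units agree.

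Third, I would conclude by comparing moments. The previously established theorem gives
\[
E_k(T_{G_k}^n) \;=\; |\mathcal{L}_{N_0}^{o}(n)|\cdot 1_{D_{G_k}} \qquad (k=1,2,\ n\in\mathbb{N}),
\]
where $E_k : M_{G_k} \to D_{G_k}$ is the canonical conditional expectation. Transporting along $\iota_k$, this becomes $(\iota_k \circ E_k)(T_{G_k}^n) = |\mathcal{L}_{N_0}^{o}(n)|\cdot 1_{\mathbb{C}^{\oplus N^0}}$, a scalar multiple of the common identity that does not depend on $k$. Since $T_{G_k}$ is self-adjoint by the lemma in Section~4.1, its $\mathbb{C}^{\oplus N^0}$-valued moments exhaust its free distributional data over $\mathbb{C}^{\oplus N^0}$; thus $T_{G_1}$ and $T_{G_2}$ are identically free distributed over $\mathbb{C}^{\oplus N^0}$.

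There is no genuinely hard step: the content has been prepared by the moment computation cited from [19] and by the graph-theoretic characterization from [16]. The only place requiring any care is the identification of the two diagonals; one has to make sure the identifications $\iota_k$ are chosen so that the units on both sides are sent to the common unit of $\mathbb{C}^{\oplus N^0}$, which is automatic since the $R_v$ are orthogonal projections summing to $1_{D_{G_k}}$. Beyond that, the proof is a one-line substitution into the moment formula.
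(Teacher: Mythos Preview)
Your proof is correct and follows essentially the same approach as the paper: identify both diagonal subalgebras with $\mathbb{C}^{\oplus N^{0}}$ via the equal vertex count, then invoke the moment formula $E(T_{G_k}^n)=|\mathcal{L}_{N_0}^{o}(n)|\cdot 1_{D_{G_k}}$ to conclude that the moments agree. You are a bit more explicit than the paper about the identification maps $\iota_k$ and about why self-adjointness makes the moment sequence suffice, but the logical structure is the same.
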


\begin{proof}
\strut Since $\left| V(G_{1})\right| $ $=$ $N^{0}$ $=$ $\left|
V(G_{2})\right| $ in $\Bbb{N}$ $\cup $ $\{\infty \},$ the diagonal
subalgebras $D_{G_{1}}$ and $D_{G_{2}}$ are $*$-isomorphic to $\Bbb{C}%
^{\oplus \,N^{0}}.$ Let's denote $\Bbb{C}^{\oplus N^{0}}$ by $B.$ By the
fractality of the graphs $G_{1}$ and $G_{2}$, we have

\begin{center}
$E(T_{G_{1}}^{n})$ $=$ $\left| \mathcal{L}_{N_{0}}^{o}(n)\right| $ $\cdot $ $%
1_{B}$ $=$ $E(T_{G_{2}}^{n}),$
\end{center}

for all $n$ $\in $ $\Bbb{N}$, where $1_{B}$ is the identity $(N^{0}$ $\times 
$ $N^{0})$-matrix in $B.$ Therefore, the radial operators $T_{G_{1}}$ and $%
T_{G_{2}}$ are identically free distributed over $B.$
\end{proof}

\strut The above proposition shows that the pair $(N_{0},$ $N^{0})$ of the
numbers $N_{0}$ $\in $ $\Bbb{N},$ and $N^{0}$ $\in $ $\Bbb{N}$ $\cup $ $%
\{\infty \},$ can explain the fractality (characterized by the spectral
information of the radial operator $T_{G}$) of a graph fractaloid $\Bbb{G},$
via the sequence

\begin{center}
$\left( \left| \mathcal{L}_{N_{0}}^{o}(n)\right| \right) _{n=1}^{\infty }$
over $\Bbb{C}^{\oplus \,N^{0}},$
\end{center}

where

\begin{center}
$N_{0}$ $=$ $\deg _{out}(v)$ $=$ $\deg _{in}(v),$ in $G,$
\end{center}

for all $v$ $\in $ $V(G),$ and

\begin{center}
$N^{0}$ $=$ $\left| V(G)\right| .$
\end{center}

\begin{definition}
\strut Let $G$ be a connected locally finite directed graph with its graph
groupoid $\Bbb{G},$ and assume that $\Bbb{G}$ is a graph fractaloid. Then
the pair $(N_{0},$ $N^{0})$, where

\begin{center}
$N_{0}$ $=$ $\max \{\deg _{out}(v)$ $:$ $v$ $\in $ $V(G)\}$ $\in $ $\Bbb{N}$
\end{center}

and

\begin{center}
$N^{0}$ $=$ $\left| V(G)\right| $ $\in $ $\Bbb{N}$ $\cup $ $\{\infty \}$ $%
\overset{denote}{=}$ $\Bbb{N}_{\infty },$
\end{center}

is called the fractal pair of $\Bbb{G}.$ Denote the fractal pair $(N_{0},$ $%
N^{0})$ of a graph fractaloid $\Bbb{G}$ by $fp(\Bbb{G}).$
\end{definition}

By the previous proposition, we can get the following theorem.

\begin{theorem}
Let $G_{k}$ be connected locally finite fractal graphs with their graph
fractaloids $\Bbb{G}_{k},$ for $k$ $=$ $1,$ $2.$ If the fractal pairs of $%
\Bbb{G}_{k}$ are identical to $(N_{0},$ $N^{0}),$ for $N_{0}$ $\in $ $\Bbb{N}%
,$ and $N^{0}$ $\in $ $\Bbb{N}_{\infty },$ then the radial operators $%
T_{G_{k}}$ of $\Bbb{G}_{k}$ are identically free distributed over $\Bbb{C}%
^{\oplus \,N^{0}}.$ Moreover,

\begin{center}
$E(T_{G_{k}}^{n})$ $=$ $\left| \mathcal{L}_{N_{0}}^{o}(n)\right| \cdot 1_{%
\Bbb{C}^{\oplus N^{0}}},$
\end{center}

for all $n$ $\in $ $\Bbb{N}.$ $\square $
\end{theorem}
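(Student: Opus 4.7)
The plan is to reduce the claim to the $D_G$-valued moment formula for radial operators of graph fractaloids already established in Section 4.2, followed by an identification of the two diagonal subalgebras.

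First, since each $\Bbb{G}_k$ is a graph fractaloid, the graph-theoretical characterization forces $\deg_{out}(v) = N_0 = \deg_{in}(v)$ for every $v \in V(G_k)$, so $N_0$ coincides with the quantity $N = \max\{\deg_{out}(v) : v \in V(G_k)\}$ appearing in the hypothesis of the moment theorem from Section 4.2. Applying that theorem separately to $\Bbb{G}_1$ and $\Bbb{G}_2$ gives
\begin{equation*}
E_k(T_{G_k}^n) \;=\; \left| \mathcal{L}_{N_0}^{o}(n)\right| \cdot 1_{D_{G_k}}, \qquad k=1,2,
\end{equation*}
for every $n \in \Bbb{N}$, where $E_k : M_{G_k} \to D_{G_k}$ is the canonical conditional expectation and the moment is $0$ in odd degrees.

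Next, I would exploit the structural identity
\begin{equation*}
D_{G_k} \;=\; \bigoplus_{v \in V(\widehat{G_k})} \Bbb{C}\cdot R_v \;=\; \bigoplus_{v \in V(G_k)} \Bbb{C}\cdot R_v,
\end{equation*}
together with the assumption $|V(G_k)| = N^0$, to produce a $*$-isomorphism $D_{G_k} \cong \Bbb{C}^{\oplus N^0}$ sending each $R_v$ to a minimal coordinate projection and sending $1_{D_{G_k}} = \sum_v R_v$ to $1_{\Bbb{C}^{\oplus N^0}}$. Identifying both $D_{G_1}$ and $D_{G_2}$ with the common abelian von Neumann algebra $B := \Bbb{C}^{\oplus N^0}$ via any such choice, the moment equation becomes
\begin{equation*}
E_1(T_{G_1}^n) \;=\; \left|\mathcal{L}_{N_0}^{o}(n)\right| \cdot 1_B \;=\; E_2(T_{G_2}^n), \qquad n \in \Bbb{N}.
\end{equation*}
Because $T_{G_k}$ is self-adjoint (Section 4.1), the sequence of $B$-valued moments is a complete invariant of its $B$-valued free distribution, so this equality is exactly the assertion that $T_{G_1}$ and $T_{G_2}$ are identically free distributed over $B$, with the stated common moment values.

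The only subtlety, rather than a genuine obstacle, is that the identification $D_{G_k} \cong \Bbb{C}^{\oplus N^0}$ depends on choosing bijections between the vertex sets $V(G_k)$ and a fixed index set of cardinality $N^0$; different choices differ by coordinate permutations on $B$. However, the moment on each side is a scalar multiple of $1_B$, which is invariant under all such permutations, so the comparison is unambiguous and the conclusion is independent of the choices made. Thus the theorem follows formally from the prior moment formula together with the elementary identification of diagonal subalgebras.
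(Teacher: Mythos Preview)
Your proposal is correct and follows essentially the same route as the paper: the theorem is stated there as an immediate consequence of the preceding proposition, whose proof is precisely to invoke the Section 4.2 moment formula $E(T_{G_k}^n)=|\mathcal{L}_{N_0}^{o}(n)|\cdot 1_{D_{G_k}}$ and then identify each $D_{G_k}$ with $\Bbb{C}^{\oplus N^0}$ via $|V(G_k)|=N^0$. Your additional remarks on self-adjointness and on the permutation-invariance of the scalar-multiple-of-identity moments are correct refinements that the paper leaves implicit.
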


The fractal pairs of graph fractaloids give the classification of graph
fractaloids in terms of the spectral information of the corresponding radial
operators.

\subsection{Equivalence Classes of Graph Fractaloids}

Now, let's collect all graph fractaloids induced by ``connected'' fractal
graphs, and denote this collection by $\mathcal{F}_{ractal},$ i.e.,

\begin{center}
$\mathcal{F}_{ractal}$ $\overset{def}{=}$ $\left\{ \Bbb{G}\left| 
\begin{array}{c}
\Bbb{G}\text{ is a graph fractaloid of }G, \\ 
\text{where }G\text{ is a connected} \\ 
\text{locally finite fractal graph.}
\end{array}
\right. \right\} .$
\end{center}

\strut Define an equivalence relation $\mathcal{R}$ on the set $\mathcal{F}%
_{ractal}$ by

\begin{center}
$\Bbb{G}_{1}$ $\mathcal{R}$ $\Bbb{G}_{2}$ $\overset{def}{\Longleftrightarrow 
}$ $fp(\Bbb{G}_{1})$ $=$ $fp(\Bbb{G}_{2})$ in $\Bbb{N}$ $\times $ $\Bbb{N}%
_{\infty }.$
\end{center}

Then the relation $\mathcal{R}$ on $\mathcal{F}_{ractal}$ is indeed an
equivalence relation:

(5.1)\ \quad $\Bbb{G}\mathcal{R}\Bbb{G},$ for all $\Bbb{G}$ $\in $ $\mathcal{%
F}_{ractal},$

(5.2) $\quad \Bbb{G}_{1}\mathcal{R}\Bbb{G}_{2}$ $\Longrightarrow $ $\Bbb{G}%
_{2}\mathcal{R}\Bbb{G}_{1},$ and

(5.3) $\quad \Bbb{G}_{1}\mathcal{R}\Bbb{G}_{2}$ and $\Bbb{G}_{2}\mathcal{R}%
\Bbb{G}_{3}$ $\Longrightarrow $ $\Bbb{G}_{1}\mathcal{R}\Bbb{G}_{3},$

for all $\Bbb{G},$ $\Bbb{G}_{1},$ $\Bbb{G}_{2},$ $\Bbb{G}_{3}$ $\in $ $%
\mathcal{F}_{ractal}.$ By (5.1), (5.2), and (5.3), the relation $\mathcal{R}$
is an equivalence relation on the set $\mathcal{F}_{ractal}.$

\begin{definition}
Let $\mathcal{F}_{ractal}$ be the collection of all graph fractaloids,
induced by connected locally finite directed graphs, and let $\mathcal{R}$
be an equivalence relation on $\mathcal{F}_{ractal},$ defined as above. Then
we call $\mathcal{R}$, the spectral (equivalence) relation on $\mathcal{F}%
_{ractal}.$ And the equivalence classes of $\mathcal{R}$ are said to be
spectral classes of $\mathcal{F}_{ractal.}$ Denote each spectral class by $%
[(N_{0},$ $N^{0})],$ for all $(N_{0},$ $N^{0})$ $\in $ $\Bbb{N}$ $\times $ $%
\Bbb{N}_{\infty }.$ i.e.,

\begin{center}
$[(N_{0},$ $N^{0})]$ $\overset{def}{=}$ $\{\Bbb{G}$ $\in $ $\mathcal{F}%
_{ractal}$ $:$ $fp(\Bbb{G})$ $=$ $(N_{0},$ $N^{0})\}.$
\end{center}
\end{definition}

\strut Now, fix $(N_{0},$ $N^{0})$ $\in $ $\Bbb{N}$ $\times $ $\Bbb{N}%
_{\infty }.$ Then we can always choose at least one element $\Bbb{G}$ in $%
\mathcal{F}_{ractal},$ equivalently, for any $(N_{0},$ $N^{0})$ $\in $ $\Bbb{%
N}$ $\times $ $\Bbb{N}_{\infty },$ we can always have a ``nonempty'' graph
fractaloid $\Bbb{G}$ $\in $ $\mathcal{F}_{ractal}$, such that $fp(\Bbb{G})$ $%
=$ $(N_{0},$ $N^{0}).$ The following lemma is proven by construction.

\begin{lemma}
Let $(N_{0},$ $N^{0})$ $\in $ $\Bbb{N}$ $\times $ $\Bbb{N}_{\infty }$. Then
there exists at least one $\Bbb{G}$ $\in $ $\mathcal{F}_{ractal},$ such that 
$\Bbb{G}$ $\in $ $[(N_{0},$ $N^{0})],$ where $[(N_{0},$ $N^{0})]$ is the
spectral class. In other words, each spectral class is nonempty.
\end{lemma}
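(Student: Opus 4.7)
The plan is to prove the lemma by explicit construction, splitting into three cases according to the value of $N^{0}$. In each case I will exhibit a connected locally finite directed graph $G$ whose graph fractaloid has fractal pair exactly $(N_{0}, N^{0})$, invoking the graph-theoretical characterization (fractality is equivalent to $\deg_{out}(v) = \deg_{in}(v)$ for every vertex) and the earlier theorem that regularization preserves fractality.

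First, for $N^{0} = 1$, I would take $G = O_{N_{0}}$, the one-vertex-$N_{0}$-loop-edge graph. Its unique vertex has out-degree and in-degree both equal to $N_{0}$, so by the graph-theoretical characterization it is fractal, and its fractal pair is $(N_{0}, 1)$. Next, for $2 \leq N^{0} < \infty$, I would take $G = R_{N_{0}}(K_{N^{0}})$, the $N_{0}$-regularized one-flow circulant graph on $N^{0}$ vertices. Since $K_{N^{0}}$ is a fractal graph with all in- and out-degrees equal to $1$, Theorem on the fractality of $R_{k}(G)$ gives that $R_{N_{0}}(K_{N^{0}})$ is again fractal with all in- and out-degrees equal to $N_{0}$; as regularization preserves the vertex set, $|V(G)| = N^{0}$, and thus $fp(\mathbb{G}) = (N_{0}, N^{0})$.

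Finally, for $N^{0} = \infty$, I would take $G = R_{N_{0}}(L)$, the $N_{0}$-regularization of the infinite linear graph $L$. Since $L$ is fractal with all in- and out-degrees equal to $1$, the same regularization theorem produces a connected locally finite fractal graph with all in- and out-degrees equal to $N_{0}$ and countably infinitely many vertices, so its fractal pair is $(N_{0}, \infty)$. In all three cases the graph is connected and locally finite, its graph groupoid therefore lies in $\mathcal{F}_{ractal}$, and the constructed $\mathbb{G}$ belongs to the spectral class $[(N_{0}, N^{0})]$.

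There is no serious obstacle in this lemma; it is a realization statement rather than a structural one. The only thing one must verify carefully is that the three families $\{O_{N_{0}}\}$, $\{R_{N_{0}}(K_{N^{0}})\}_{N^{0} \geq 2}$, and $\{R_{N_{0}}(L)\}$ together exhaust every pair in $\mathbb{N} \times \mathbb{N}_{\infty}$, which is clear by case analysis on $N^{0}$. The fact that regularization fixes the vertex set (controlling the second coordinate of the fractal pair) and multiplies every in- and out-degree by $N_{0}$ (controlling the first) is exactly what matches the construction to the target pair.
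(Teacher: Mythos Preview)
Your proof is correct and follows essentially the same approach as the paper: both split according to $N^{0}\in\{1\}$, $N^{0}\in\Bbb{N}\setminus\{1\}$, and $N^{0}=\infty$, and in each case use the same explicit graphs $O_{N_{0}}$, $R_{N_{0}}(K_{N^{0}})$, and $R_{N_{0}}(L)$, respectively. Your write-up is slightly more careful in explicitly invoking the regularization theorem and noting that regularization fixes the vertex set, but the argument is the same.
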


\begin{proof}
\strut Fix $(N_{0},$ $1)$ $\in $ $\Bbb{N}$ $\times $ $\{1\}.$ Then we can
construct the one-vertex-$N_{0}$-loop-edge graph $O_{N_{0}},$ generating the
graph fractaloid $\Bbb{O}_{N_{0}},$ which is the graph groupoid of $%
O_{N_{0}}.$ Now, take $(N_{0},$ $N^{0})$ $\in $ $\Bbb{N}$ $\times $ $\Bbb{N}.
$ Then we can construct the $N_{0}$-regularized graph $R_{N_{0}}(K_{N^{0}}),$
in the sense of Section 3.2, of the one-flow circulant graph $K_{N^{0}}$
with $N^{0}$-vertices, generating the graph fractaloid $\Bbb{G}\left(
R_{N_{0}}(K_{N^{0}})\right) $ in $\mathcal{F}_{ractal}.$ Assume that $(N_{0},
$ $\infty )$ $\in $ $\Bbb{N}$ $\times $ $\{\infty \}.$ Construct the
infinite linear graph $L,$ graph-isomorphic to

\begin{center}
$\cdot \cdot \cdot \longrightarrow \bullet \longrightarrow \bullet
\longrightarrow \bullet \longrightarrow \cdot \cdot \cdot $ .
\end{center}

Then the graph groupoid $\Bbb{L}$ of $L$ is in $\mathcal{F}_{ractal},$ with
its fractal pair $fp(\Bbb{L})$ $=$ $(1,$ $\infty ).$ If we construct the $%
N_{0}$-regularized graph $R_{N_{0}}(L),$ then the fractal pair of the graph
fractaloid $\Bbb{G}\left( R_{N_{0}}(L)\right) $ is identical to $(N_{0},$ $%
\infty ).$

Therefore, for any $(N_{0},$ $N^{0})$ $\in $ $\Bbb{N}$ $\times $ $\Bbb{N}%
_{\infty },$ there always exists at least one graph fractaloid $\Bbb{G}$ in $%
\mathcal{F}_{ractal},$ such that $fp(\Bbb{G})$ $=$ $(N_{0},$ $N^{0}).$
\end{proof}

\strut By the previous lemma, we can get the following theorem of this
paper. Actually the following theorem is the summary of all our main results
of this paper.

\begin{theorem}
Let $(N_{0},$ $N^{0})$ $\in $ $\Bbb{N}$ $\times $ $\Bbb{N}_{\infty }.$ Then
the corresponding spectral class $[(N_{0},$ $N^{0})]$ of $\mathcal{F}%
_{ractal}$ is nonempty. Moreover, if $\Bbb{G}$ $\in $ $[(N_{0},$ $N^{0})],$
and if $T_{G}$ is the radial operator of $\Bbb{G}$ in the right graph von
Neumann algebra $M_{G},$ then the $D_{G}$-valued moments $E(T_{G}^{n})$
satisfy

\begin{center}
$E(T_{G}^{n})$ $=$ $\left| \mathcal{L}_{N_{0}}^{o}(n)\right| \cdot 1_{\Bbb{C}%
^{\oplus \,N^{0}}},$ for all $n$ $\in $ $\Bbb{N}.$
\end{center}

$\square $
\end{theorem}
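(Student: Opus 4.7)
The theorem to be proved is a summary statement, so the plan is essentially to assemble the two pieces already developed in the paper and then glue them together. First, I would address the non-emptiness claim: for any fixed pair $(N_{0}, N^{0}) \in \mathbb{N} \times \mathbb{N}_{\infty}$, I would invoke the preceding lemma, which exhibits an explicit representative of the class $[(N_{0}, N^{0})]$ via one of three case-by-case constructions --- the one-vertex-$N_{0}$-loop-edge graph $O_{N_{0}}$ when $N^{0} = 1$, the $N_{0}$-regularized one-flow circulant graph $R_{N_{0}}(K_{N^{0}})$ when $N^{0} \in \mathbb{N}$, and the $N_{0}$-regularized infinite linear graph $R_{N_{0}}(L)$ when $N^{0} = \infty$. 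By the graph-theoretical characterization and the theorem on regularization, each of these yields a fractal graph with the prescribed fractal pair, so $[(N_{0}, N^{0})]$ is nonempty.

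Next, I would tackle the moment formula. Fix $\mathbb{G} \in [(N_{0}, N^{0})]$ induced by some connected locally finite fractal graph $G$. Since $G$ is a fractal graph with $\max\{\deg_{out}(v) : v \in V(G)\} = N_{0}$, the earlier moment theorem for radial operators of graph fractaloids applies directly and gives
\[
E(T_{G}^{n}) = |\mathcal{L}_{N_{0}}^{o}(n)| \cdot 1_{D_{G}}, \qquad n \in \mathbb{N}.
\]
To convert $1_{D_{G}}$ into $1_{\mathbb{C}^{\oplus N^{0}}}$, I would use the identification
\[
D_{G} = \bigoplus_{v \in V(\widehat{G})} \mathbb{C} \cdot R_{v} \;\;\overset{\ast\text{-iso}}{=}\;\; \mathbb{C}^{\oplus |V(G)|} = \mathbb{C}^{\oplus N^{0}},
\]
which sends the unit $1_{D_{G}} = \sum_{v \in V(\widehat{G})} R_{v}$ to the unit of $\mathbb{C}^{\oplus N^{0}}$. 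Substituting yields the stated formula.

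Finally, I would observe that the moment formula depends on $G$ only through $(N_{0}, N^{0})$; combined with the non-emptiness step, this shows that the spectral relation $\mathcal{R}$ partitions $\mathcal{F}_{ractal}$ into nonempty classes each indexed faithfully by a fractal pair, which justifies calling $[(N_{0}, N^{0})]$ the spectral class and completes the proof.

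There is no real obstacle here; the statement is a corollary-type repackaging of (i) the lemma giving representatives of every spectral class and (ii) the $D_{G}$-valued moment theorem from Section 4.2. The only bookkeeping point worth care is the identification $1_{D_{G}} \leftrightarrow 1_{\mathbb{C}^{\oplus N^{0}}}$ in both the finite and $N^{0} = \infty$ cases, where in the latter situation $\mathbb{C}^{\oplus \infty}$ must be understood as the $W^{\ast}$-direct sum so that its unit (the infinite diagonal identity) makes sense.
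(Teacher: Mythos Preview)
Your proposal is correct and matches the paper's own approach exactly: the paper treats this theorem as a summary statement (with no written proof, just a $\square$), explicitly saying ``By the previous lemma, we can get the following theorem\dots\ Actually the following theorem is the summary of all our main results of this paper,'' and your two-step assembly of the non-emptiness lemma plus the moment theorem from Section 4.2 (together with the identification $D_G \cong \Bbb{C}^{\oplus N^0}$) is precisely what is intended.
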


The following corollary is the direct consequnece of the above theorem.

\begin{corollary}
Let $\mathcal{F}_{ractal}$ be the set of all graph fractaloids induced by
connected locally finite directed graphs. Then $\mathcal{F}_{ractal}$ is
classified by the spectral classes $[(n,$ $m)],$ for all $(n,$ $m)$ $\in $ $%
\Bbb{N}$ $\times $ $\Bbb{N}_{\infty }.$ i.e.,

\begin{center}
$\mathcal{F}_{ractal}$ $=$ $\underset{(n,\text{ }m)\in \Bbb{N}\text{ }\times 
\text{ }\Bbb{N}_{\infty }}{\sqcup }$ $\left( [(n,\text{ }m)]\right) ,$
\end{center}

where $\sqcup $ means the disjoint union. $\square $
\end{corollary}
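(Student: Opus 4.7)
The plan is to package the corollary as an immediate consequence of the preceding equivalence-relation machinery plus the nonemptiness lemma. The content to verify is: (i) every $\mathbb{G} \in \mathcal{F}_{ractal}$ lies in exactly one class $[(n,m)]$, and (ii) every index $(n,m) \in \mathbb{N} \times \mathbb{N}_\infty$ corresponds to a nonempty class. Both pieces are essentially already done, so the proof should just assemble them.

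First I would observe that for any graph fractaloid $\mathbb{G} \in \mathcal{F}_{ractal}$, generated by a connected locally finite fractal graph $G$, the fractal pair
\begin{equation*}
fp(\mathbb{G}) = \bigl(\max\{\deg_{out}(v) : v \in V(G)\},\ |V(G)|\bigr)
\end{equation*}
is a well-defined element of $\mathbb{N} \times \mathbb{N}_\infty$: the first coordinate is finite because $G$ is locally finite (and by the graph-theoretical characterization all vertices share this out-degree), and the second coordinate takes values in $\mathbb{N} \cup \{\infty\}$ by countability. Thus $fp$ is a total function $\mathcal{F}_{ractal} \to \mathbb{N} \times \mathbb{N}_\infty$, which forces $\mathbb{G}$ to sit in the single class $[fp(\mathbb{G})]$.

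Next I would invoke that $\mathcal{R}$ has been verified to be an equivalence relation via conditions (5.1), (5.2), (5.3), so the spectral classes form a partition of $\mathcal{F}_{ractal}$. In particular distinct classes $[(n_1,m_1)]$ and $[(n_2,m_2)]$ with $(n_1,m_1) \neq (n_2,m_2)$ are disjoint, which justifies writing the union as a disjoint union $\sqcup$. Combining this with the surjective nature of $fp$ onto its image, we obtain
\begin{equation*}
\mathcal{F}_{ractal} = \bigsqcup_{(n,m) \in \mathbb{N} \times \mathbb{N}_\infty} [(n,m)],
\end{equation*}
where only those $(n,m)$ in the image contribute nonempty blocks a priori.

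Finally, to promote this to the claim that \emph{every} $(n,m) \in \mathbb{N} \times \mathbb{N}_\infty$ indexes a nonempty class, I would cite the preceding lemma directly: for each such pair one has an explicit witness (e.g., $O_{N_0}$ when $N^0 = 1$, the $N_0$-regularized graph $R_{N_0}(K_{N^0})$ when $N^0 \in \mathbb{N}$, and $R_{N_0}(L)$ when $N^0 = \infty$), so $[(n,m)] \neq \emptyset$ for all indices. There is essentially no obstacle here; the only item that required substantive work is the nonemptiness of each class, which was already handled by the previous lemma via explicit constructions. The corollary itself is then a purely formal consequence, and I would present it as such.
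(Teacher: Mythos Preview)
Your proposal is correct and matches the paper's approach: the paper gives no separate proof (the corollary is marked with $\square$ immediately after the statement), treating it as a direct consequence of the equivalence relation $\mathcal{R}$ established in (5.1)--(5.3) together with the nonemptiness lemma and the preceding theorem. Your write-up simply makes explicit the two ingredients (well-definedness of $fp$ giving a partition, and the lemma giving nonemptiness of every block) that the paper leaves implicit.
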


\strut Notice that our spectral classes in $\mathcal{F}_{ractal}$ are
determined by the spectral information of graph fractaloids, not by the
graph-theoretical (and algebraic) information of fractal graphs (resp.,
graph fractaloids).

\begin{proposition}
Let $G$ be a connected locally finite directed graph with its graph groupoid 
$\Bbb{G},$ and assume that $\Bbb{G}$ is a graph fractaloid in $\mathcal{F}%
_{ractal}.$ Let $G^{\prime }$ be a directed graph. If $G$ and $G^{\prime }$
are graph-isomorphic, then the graph groupoid $\Bbb{G}^{\prime }$ of $%
G^{\prime }$ is a graph fractaloid in $\mathcal{F}_{ractal},$ too. Moreover,
the graph fractaloids $\Bbb{G}$ and $\Bbb{G}^{\prime }$ are contained in the
same spectral class $[(n,$ $m)],$ for some $n$ $\in $ $\Bbb{N},$ and $m$ $%
\in $ $\Bbb{N}_{\infty }.$
\end{proposition}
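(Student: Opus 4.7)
The plan is to reduce the statement to the graph-theoretical characterization of graph fractaloids (the theorem cited from [16] at the end of Section 3.2) together with the observation that a graph-isomorphism preserves in-degrees, out-degrees, and the cardinality of the vertex set.

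First, I would unpack the definition of graph-isomorphism $g : G \to G'$ recalled in Section 2.2: $g$ restricts to bijections $V(G) \to V(G')$ and $E(G) \to E(G')$ satisfying $g(v_1\,e\,v_2) = g(v_1)\,g(e)\,g(v_2)$. From this it is immediate that, for each $v \in V(G)$, $g$ restricts to bijections between the sets $\{e \in E(G) : e = v\,e\,\cdot\,\}$ and $\{e' \in E(G') : e' = g(v)\,e'\,\cdot\,\}$, and likewise on the incoming side. Hence
\[
\deg_{out}^{(G)}(v) = \deg_{out}^{(G')}(g(v)), \qquad \deg_{in}^{(G)}(v) = \deg_{in}^{(G')}(g(v)),
\]
for every $v \in V(G)$. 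In particular, $G'$ is again locally finite and connected (connectedness passes through $g$ and $g^{-1}$).

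Next, since $\Bbb{G}$ is a graph fractaloid, the graph-theoretical characterization gives a number $N \in \Bbb{N}$ such that $\deg_{out}^{(G)}(v) = N = \deg_{in}^{(G)}(v)$ for every $v \in V(G)$. Because $g$ is bijective on vertices, every $v' \in V(G')$ has the form $g(v)$ for some $v \in V(G)$, so by the degree identities above, $\deg_{out}^{(G')}(v') = N = \deg_{in}^{(G')}(v')$. Applying the same characterization in the reverse direction, $G'$ is a fractal graph and $\Bbb{G}' \in \mathcal{F}_{ractal}$.

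Finally, for the spectral class: set $N_0 = \max\{\deg_{out}(v) : v \in V(G)\}$. By the fractality established above, $N_0 = N = \max\{\deg_{out}(v') : v' \in V(G')\}$. Moreover $N^0 := |V(G)| = |V(G')|$ in $\Bbb{N}_\infty$, because $g$ is a bijection on vertex sets. Thus $fp(\Bbb{G}) = (N_0, N^0) = fp(\Bbb{G}')$, so $\Bbb{G}\,\mathcal{R}\,\Bbb{G}'$ and both lie in the spectral class $[(N_0, N^0)]$, as desired.

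There is essentially no obstacle here: the statement is a functoriality/invariance check, and both ingredients of the fractal pair (the maximum out-degree and the vertex-set cardinality) are visibly preserved by graph-isomorphisms. The only point worth stating cleanly is that the graph-theoretical characterization of graph fractaloids lets us transfer the defining ``balanced-degree'' condition from $G$ to $G'$ through the isomorphism, rather than having to re-examine the vertex-trees $\mathcal{T}_v$ directly.
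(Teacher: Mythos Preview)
Your proof is correct and follows essentially the same line as the paper's: both verify that a graph-isomorphism preserves the two quantities in the fractal pair. The only cosmetic difference is that the paper first passes through the induced groupoid-isomorphism $\Bbb{G}\cong\Bbb{G}'$ to transfer membership in $\mathcal{F}_{ractal}$, whereas you invoke the degree characterization from [16] directly; your route is slightly more explicit but not genuinely different.
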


\begin{proof}
\strut Since the graphs $G$ and $G^{\prime }$ are graph-isomorphic, the
graph groupoids $\Bbb{G}$ of $G$ and $\Bbb{G}^{\prime }$ of $G^{\prime }$
are groupoid-isomorphic. And since $\Bbb{G}$ $\in $ $\mathcal{F}_{ractal},$
the graph groupoid $\Bbb{G}^{\prime }$ $\in $ $\mathcal{F}_{ractal},$ too.
Assume now that $fp(\Bbb{G})$ $=$ $(n,$ $m)$ $\in $ $\Bbb{N}$ $\times $ $%
\Bbb{N}_{\infty }.$ Then, since

\begin{center}
$V(G)$ $=$ $V(\widehat{G})$ $=$ $V(\widehat{G^{\prime }})$ $=$ $V(G^{\prime
}),$
\end{center}

we have

\begin{center}
$\left| V(G)\right| $ $=$ $m$ $=$ $\left| V(G^{\prime })\right| .$
\end{center}

Also, since $\widehat{G}$ and $\widehat{G^{\prime }}$ are graph-isomorphic,

\begin{center}
$\max \{\deg _{out}(v)$ $:$ $v$ $\in $ $V(G^{\prime })\}$ $=$ $n,$
\end{center}

too, because

\begin{center}
$n$ $=$ $\max \{\deg _{out}(v)$ $:$ $v$ $\in $ $V(G)\}.$
\end{center}

So, we obtain that

\begin{center}
$fp(\Bbb{G}^{\prime })$ $=$ $(n,$ $m),$
\end{center}

too. Therefore, the graph fractaloids $\Bbb{G}$ and $\Bbb{G}^{\prime }$ are
contained in the same spectral class $[(n,$ $m)]$ in $\mathcal{F}_{ractal}.$
\end{proof}

\strut How about the converse of the previous proposition? The following
example shows that the converse does not hold true.

\begin{example}
Let $K_{3}$ be the one-flow circulant graph with 3-vertices, and let $G_{1}$
be the $2$-regularized graph $R_{2}(K_{3})$ of $K_{3}.$ Then $G_{1}$ is a
fractal graph, and hence the graph groupoid $\Bbb{G}_{1}$ is a graph
fractaloid in $\mathcal{F}_{ractal},$ since

\begin{center}
$\deg _{out}^{(G_{1})}(v)$ $=$ $2$ $=$ $\deg _{in}^{(G_{1})}(v)$, in $G_{1}$ 
$=$ $R_{2}(K_{3}),$
\end{center}

for all $v$ $\in $ $V(G_{1}).$ Now, let $G_{2}$ be the complete graph $C_{3}$
with 3-vertices. Then

$\deg _{out}^{(G_{2})}(x)$ $=$ $2$ $=$ $\deg _{in}^{(G_{2})}(x),$ in $G_{2}$ 
$=$ $C_{3},$

for all $x$ $\in $ $V(G_{2}).$ Thus, the graph groupoid $\Bbb{G}_{2}$ of $%
G_{2}$ is a graph fractaloid in $\mathcal{F}_{ractal},$ too. Moreiver, both $%
\Bbb{G}_{1}$ and $\Bbb{G}_{2}$ have the same fractal pair $(2,$ $3),$ and
hence 

\begin{center}
$\Bbb{G}_{1},$ $\Bbb{G}_{2}$ $\in $ $[(2,$ $3)]$ $\in $ $\mathcal{F}%
_{ractal}.$
\end{center}

But the graphs $R_{2}(K_{3})$ and $C_{3}$ are not graph-isomorphic. This
shows that the converse of the previous proposition does not hold true, in
general.
\end{example}

\strut In the previous example, we showed that even though two graphs are
not graph-isomorphic, the corresponding graph groupoids can be fractaloids
contained in the same spectral class in the set $\mathcal{F}_{ractal}.$ More
general to the previous proposition, we can obtain the following generalized
result.

\begin{theorem}
Let $G_{1}$ and $G_{2}$ be connected locally finite graphs, and assume that
the corresponding shadowed graphs $\widehat{G_{1}}$ and $\widehat{G_{2}}$
are graph-isomorphic. If $G_{1}$ is a fractal graph, then $G_{2}$ is a
fractal graph, too. Moreover, the graph fractaloids $\Bbb{G}_{k}$ of $G_{k}$
are contained in the same spectral class in $\mathcal{F}_{ractal}.$
\end{theorem}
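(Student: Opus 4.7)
The plan is to establish the theorem in two stages: first show that $G_{2}$ is itself a fractal graph, then verify that $\Bbb{G}_{1}$ and $\Bbb{G}_{2}$ lie in the same spectral class. Throughout I would lean heavily on the graph-theoretical characterization of fractality ($G$ is fractal iff $\deg_{out}(v) = \deg_{in}(v)$ for every $v \in V(G)$) rather than the vertex-tree definition, since the degree criterion is the right-sized tool to pair with an isomorphism of shadowed graphs.

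First I would extract the combinatorial data preserved by the shadow isomorphism. Let $f : \widehat{G_{1}} \to \widehat{G_{2}}$ be the given graph-isomorphism. Then $V(G_{k}) = V(\widehat{G_{k}})$ for $k=1,2$, so $f$ restricts to a bijection of vertex sets, giving $|V(G_{1})| = |V(G_{2})|$. Moreover, since $G_{1}$ is fractal we have $\deg_{out}^{G_{1}}(v) = \deg_{in}^{G_{1}}(v) = N$ for every $v \in V(G_{1})$, where $N = \max\{\deg_{out}(v) : v \in V(G_{1})\}$. This makes every vertex of $\widehat{G_{1}}$ have total degree $2N$, and by transport across $f$, every vertex of $\widehat{G_{2}}$ also has total degree $2N$. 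In particular $\deg_{out}^{G_{2}}(w) + \deg_{in}^{G_{2}}(w) = 2N$ for every $w \in V(G_{2})$.

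The main obstacle — and the step that requires the most care — is upgrading this sum to the individual equality $\deg_{out}^{G_{2}}(w) = \deg_{in}^{G_{2}}(w) = N$ demanded by the graph-theoretical characterization of fractality. Here I would invoke the fact that the shadow involution $e \leftrightarrow e^{-1}$ is intrinsic to the data of $\widehat{G}$, since it corresponds to the inversion map on the free semigroupoid $\Bbb{F}^{+}(\widehat{G})$, which is what the graph-isomorphism $f$ must respect (this is precisely what drives the previously cited fact that shadow-isomorphic graphs give groupoid-isomorphic graph groupoids). Consequently $f$ restricts to a bijection $E(G_{1}) \to E(G_{2})$ and separately $E(G_{1}^{-1}) \to E(G_{2}^{-1})$, preserving orientations on each half. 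From this the out-degrees (respectively in-degrees) counted in $G_{2}$ agree with those counted in $G_{1}$ under the vertex bijection, so $\deg_{out}^{G_{2}}(w) = N = \deg_{in}^{G_{2}}(w)$ for every $w$. By the characterization theorem, $G_{2}$ is fractal.

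Once $G_{2}$ is known to be a fractal graph, the final claim is a direct unwinding of definitions. The fractal pair of $\Bbb{G}_{k}$ is $fp(\Bbb{G}_{k}) = (N_{0}, N^{0})$ with $N_{0} = \max\{\deg_{out}(v) : v \in V(G_{k})\}$ and $N^{0} = |V(G_{k})|$. Both components coincide for $k=1,2$ by the two observations above (uniform total degree $2N$ in $\widehat{G_{k}}$ forces $N_{0} = N$ in each $G_{k}$, and $f$ gives the vertex-count equality). Therefore $fp(\Bbb{G}_{1}) = fp(\Bbb{G}_{2}) = (N, |V(G_{1})|)$, so $\Bbb{G}_{1}$ and $\Bbb{G}_{2}$ both belong to the spectral class $[(N, |V(G_{1})|)]$ of $\mathcal{F}_{ractal}$, and by the classification theorem the corresponding radial operators $T_{G_{1}}$ and $T_{G_{2}}$ are identically free-distributed over $\Bbb{C}^{\oplus N^{0}}$.
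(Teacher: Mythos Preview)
Your strategy differs from the paper's. The paper argues at the groupoid level: from the hypothesis that $\widehat{G_1}$ and $\widehat{G_2}$ are graph-isomorphic it invokes the earlier fact that $\Bbb{G}_1$ and $\Bbb{G}_2$ are then groupoid-isomorphic, deduces that the radial operators $T_{G_1}$ and $T_{G_2}$ are identically distributed over $\Bbb{C}^{\oplus m}$, and reads off $fp(\Bbb{G}_2)=(n,m)$ from that. It never computes degrees in $G_2$ at all. You instead work combinatorially, trying to push the degree characterization of fractality across the shadow-isomorphism; this is more elementary and makes the ``$G_2$ is fractal'' claim explicit, which the paper's argument leaves implicit.

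There is, however, a genuine gap at precisely the step you flag as delicate. A graph-isomorphism $f:\widehat{G_1}\to\widehat{G_2}$ in the paper's sense is an isomorphism of directed graphs: it sees only $V(\widehat{G_k})$, $E(\widehat{G_k})$, and incidences. Nothing in that data records the partition $E(\widehat{G_k})=E(G_k)\sqcup E(G_k^{-1})$, so there is no a priori reason for $f$ to respect it or to intertwine the two shadow involutions. Your appeal to ``the inversion map on $\Bbb{F}^{+}(\widehat{G})$'' does not rescue this: $\Bbb{F}^{+}(\widehat{G})$ is only a semigroupoid, and the inversion on the quotient $\Bbb{G}$ is defined \emph{via} the pairing $e\leftrightarrow e^{-1}$ you are trying to recover. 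Concretely, take $G_1=K_2$ (two vertices, one edge each way) and $G_2$ the graph on $\{w_1,w_2\}$ with two parallel edges $w_1\to w_2$. Both shadowed graphs have two vertices with exactly two edges in each direction and are isomorphic as directed graphs, yet $\deg_{out}^{G_2}(w_1)=2\neq 0=\deg_{in}^{G_2}(w_1)$, so no directed-graph isomorphism can carry the $E(G)$/$E(G^{-1})$ split. Your transport of out- and in-degrees therefore cannot go through as written; you need either a stronger hypothesis (an isomorphism of shadowed graphs \emph{with} involution) or to follow the paper's route through the groupoid isomorphism and the distribution of $T_{G_k}$, bypassing degrees in $G_2$ entirely.
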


\begin{proof}
\strut Let $G_{1}$ be a fractal graph with its graph fractaloid $\Bbb{G}_{1},
$ and let $\Bbb{G}_{1}$ $\in $ $[(n,$ $m)]$ $\subset $ $\mathcal{F}_{ractal},
$ for some $n$ $\in $ $\Bbb{N},$ and $m$ $\in $ $\Bbb{N}_{\infty }.$ And
assume that the shadowed graphs $\widehat{G_{1}}$ and $\widehat{G_{2}}$ are
graph-isomorphic. Then the graph groupoids $\Bbb{G}_{1}$ and $\Bbb{G}_{2}$
are groupoid-isomorphic. So, the radial operators $T_{G_{k}}$ of $\Bbb{G}_{k}
$ are identically distributed over $\Bbb{C}^{\oplus \,m},$ and hence $fp(%
\Bbb{G}_{2})$ $=$ $(n,$ $m),$ too. Therefore, $\Bbb{G}_{2}$ $\in $ $[(n,$ $%
m)]$ in $\mathcal{F}_{ractal}.$
\end{proof}

\strut How about the converse of the above theorem? Unfortunately, the
converse does not hold, in general, either.

\begin{example}
Let $G_{1}$ be the $2$-regularized graph $R_{2}(K_{3})$, and let $G_{2}$ be
the iterated glued graph $K_{3}$ $\#^{v}$ $O_{1},$ where $v$ is the unique
vertex of the one-vertex-1-loop-edge graph $O_{1}.$ As we observed in
Section 3.2, since the one-flow circulant graph $K_{3}$ is a fractal graph,
its $2$-regularized graph $R_{2}(K_{3}),$ and the iterated glued graph $K_{3}
$ $\#^{v}$ $O_{1}$ are fractal graphs, too. i.e., the graph groupoids $\Bbb{G%
}_{k}$ of $G_{k}$ are graph fractaloids, for $k$ $=$ $1,$ $2.$ Moreover, we
have that

\begin{center}
$\left| V(G_{1})\right| $ $=$ $3$ $=$ $\left| V(K_{3})\right| $ $=$ $\left|
V(G_{2})\right| ,$
\end{center}

and

\begin{center}
$\deg _{out}^{(G_{1})}(v)$ $=$ $2$ $=$ $\deg _{in}^{(G_{1})}(v),$ in $G_{1},$
\end{center}

for all $v$ $\in $ $V(G_{1}),$ and

\begin{center}
$\deg _{in}^{(G_{2})}(x)$ $=$ $2$ $=$ $\deg _{in}^{(G_{2})}(x),$ in $G_{2},$
\end{center}

for all $x$ $\in $ $V(G_{2}).$ i.e., the fractal pairs $fp(\Bbb{G}_{k})$ of
the graph fractaloids $\Bbb{G}_{k}$ are

\begin{center}
$fp(\Bbb{G}_{1})$ $=$ $(2,$ $3)$ $=$ $fp(\Bbb{G}_{2}).$
\end{center}

Therefore, the graph fractaloids $\Bbb{G}_{1}$ and $\Bbb{G}_{2}$ are
contained in the spectral class $[(2,$ $3)]$ in $\mathcal{F}_{ractal}.$
However, it is easy to check that the shadowed graphs $\widehat{G_{1}}$ and $%
\widehat{G_{2}}$ are not graph-isomorphic. Indeed, the shadowed graphs $%
\widehat{G_{2}}$ contain loop edges, but $\widehat{G_{1}}$ does not contain
any loop edges. This example shows that even though the shadowed graphs $%
\widehat{G_{1}}$ and $\widehat{G_{2}}$ are not graph-isomorphic, it is
possible that the graph fractaloids $\Bbb{G}_{1}$ and $\Bbb{G}_{2}$ are
contained in the same spectral class in $\mathcal{F}_{ractal}.$ 
\end{example}

By the previous example, the converse of the previous theorem does not hold
true, in general.\strut 

\strut 

\strut \textbf{References}\strut

\strut

\begin{quote}
{\small [1] \ \ A. G. Myasnikov and V. Shapilrain (editors), Group Theory,
Statistics and Cryptography, Contemporary Math, 360, (2003) AMS.}

{\small [2] \ \ A. Gibbons and L. Novak, Hybrid Graph Theory and Network
Analysis, ISBN: 0-521-46117-0, (1999) Cambridge Univ. Press.}

{\small [3]\strut \ \ \ \strut B. Solel, You can see the arrows in a Quiver
Operator Algebras, (2000), preprint.}

{\small [4] \ \ C. W. Marshall, Applied Graph Theory, ISBN: 0-471-57300-0
(1971) John Wiley \& Sons}

{\small [5] \ \ \strut D.Voiculescu, K. Dykemma and A. Nica, Free Random
Variables, CRM Monograph Series Vol 1 (1992).\strut }

{\small [6] \ \ D.W. Kribs and M.T. Jury, Ideal Structure in Free
Semigroupoid Algebras from Directed Graphs, preprint.}

{\small [7] \ \ D.W. Kribs, Quantum Causal Histories and the Directed Graph
Operator Framework, arXiv:math.OA/0501087v1 (2005), Preprint.}

{\small [8] \ \ I. Cho, Group Freeness and Certain Amalgamated Freeness, J.
of KMS, 45, no. 3, (2008) 597 - 609.}

{\small [9] \ \ I. Cho, The Moments of Certain Perturbed Operators of the
Radial Operator of the Free Group Factor }$L(F_{N})${\small , JAA, 5, no. 3,
(2007) 137 - 165.}

{\small [10]\ I. Cho, Graph von Neumann algebras, ACTA. Appl. Math, 95,
(2007) 95 - 135.}

{\small [11]\ I. Cho, Characterization of Free Blocks of a right graph von
Neumann algebra, CAOT, 1, (2007) 367 - 398.}

{\small [12]\ I. Cho, Operator Algebraic Structures Induced by Graphs, CAOT,
(2009), To Appear.}

{\small [13] I. Cho, Vertex-Compressed Algebras of a Graph von Neumann
Algebra, ACTA Appl. Math., (2009) To Appear.}

{\small [14] I. Cho, Graph Groupoids and Corresponding Representations,
Group Theory: Classes, Representations and Connections, and Applications,
(2009) NOVA Publisher.}

{\small [15] I. Cho, Measures on Graphs and Groupoid Measures, CAOT, 2,
(2008) 1 - 28. }

{\small [16] I. Cho, and P. E. T. Jorgensen, Graph Fractaloids: Graph
Groupoids with Fractal Property, (2008) Submitted to J. of Phy. A.}

{\small [17]\ I. Cho, and P. E. T. Jorgensen, }$C^{*}${\small -Algebras
Generated by Partial Isometries, JAMC, (2009) To Appear.}

{\small [18] I. Cho, and P. E. T. Jorgensen, }$C^{*}${\small -Subalgebras
Generated by Partial Isometries, JMP, (2009) To Appear.}

{\small [19] I. Cho, and P. E. T. Jorgensen, Applications of Automata and
Graphs: Labeling-Operators in Hilbert Space I, ACTA Appl. Math.: Special
Issues (2009) To Appear. }

{\small [20] I. Cho, and P. E. T. Jorgensen, Applications of Automata and
Graphs: Labeling-Operators in Hilbert Space II, (2008) Submitted to JMP.}

{\small [21] I. Cho, and P. E. T. Jorgensen, }$C^{*}${\small -Subalgebras
Generated by Single Operator in }$B(H)${\small , ACTA Appl. Math: Special
Issues, (2009) To Appear.}

{\small [22] I. Cho, and P. E. T. Jorgensen, Measure Framing on Graphs, and
Framed Graph von Neumann Algebras, (2009) Preprint.}

{\small [23]\ I. Raeburn, Graph Algebras, CBMS no 3, AMS (2005).}

{\small [24]\ P. D. Mitchener, }$C^{*}${\small -Categories, Groupoid
Actions, Equivalent KK-Theory, and the Baum-Connes Conjecture,
arXiv:math.KT/0204291v1, (2005), Preprint.}

{\small [25] R. Scapellato and J. Lauri, Topics in Graph Automorphisms and
Reconstruction, London Math. Soc., Student Text 54, (2003) Cambridge Univ.
Press.}

{\small [26] R. Exel, A new Look at the Crossed-Product of a }$C^{*}${\small %
-algebra by a Semigroup of Endomorphisms, (2005) Preprint.}

{\small [27] R. Gliman, V. Shpilrain and A. G. Myasnikov (editors),
Computational and Statistical Group Theory, Contemporary Math, 298, (2001)
AMS.}

{\small [28] R. Speicher, Combinatorial Theory of the Free Product with
Amalgamation and Operator-Valued Free Probability Theory, AMS Mem, Vol 132 ,
Num 627 , (1998).}

{\small [29] S. H. Weintraub, Representation Theory of Finite Groups:
Algebra and Arithmetic, Grad. Studies in Math, vo. 59, (2003) AMS.}

{\small [30] V. Vega, Finite Directed Graphs and }$W^{*}${\small %
-Correspondences, (2007) Ph. D thesis, Univ. of Iowa.}

{\small [31] W. Dicks and E. Ventura, The Group Fixed by a Family of
Injective Endomorphisms of a Free Group, Contemp. Math 195, AMS.}

{\small [32] D. A. Lind, Entropies of Automorphisms of a Topological Markov
Shift, Proc. AMS, vo 99, no 3, (1987) 589 - 595.}

{\small [33] D. A. Lind and B. Marcus, An Introduction to Symbolic Dynamics
and Coding, (1995) Cambridge Univ. Press.}

{\small [34] D. E. Dutkay and P. E. T. Jorgensen, Iterated Function Systems,
Ruelle Operators and Invariant Projective Measures,
arXiv:math.DS/0501077/v3, (2005) Preprint.}

{\small [35] P. E. T. Jorgensen, Use of Operator Algebras in the Analysis of
Measures from Wavelets and Iterated Function Systems, (2005) Preprint.}

{\small [36] D. Guido, T. Isola and M. L. Lapidus, A Trace on Fractal Graphs
and the Ihara Zeta Function, arXiv:math.OA/0608060v1, (2006) Preprint.}

{\small [37] P. Potgieter, Nonstandard Analysis, Fractal Properties and
Brownian Motion, arXiv:math.FA/0701649v1, (2007) Preprint.}

{\small [38] L. Bartholdi, R. Grigorchuk, and V. Nekrashevych, From Fractal
Groups to Fractal Sets, arXiv:math.GR/0202001v4, (2002) Preprint.}

{\small [39] S. Thompson and I. Cho, Powers of Mutinomials in Commutative
Algebras, Undergrad. Research, (2008) St. Ambrose Univ., Dep. of Math.}

{\small [40] S. Thompson, C. M. Mendoza, and A. J. Kwiatkowski, and I. Cho,
Lattice Paths Satisfying the Axis Property, Undergrad. Research, (2008) St.
Ambrose Univ., Dep. of Math.}

{\small [41] T. Shirai, The Spectrum of Infinite Regular Line Graphs, Trans.
AMS., 352, no 1., (2000) 115 - 132.}

{\small [42] J. Kigami, R. S. Strichartz, and K. C. Walker, Constructing a
Laplacian on the Diamond Fractal, Experiment. Math., 10, no. 3, (2001) 437 -
448.}

{\small [43] I. V. Kucherenko, On the Structurization of a Class of
Reversible Cellular Automata, Diskret. Mat., 19, no. 3, (2007) 102 - 121.}

{\small [44] J. L. Schiff, Cellular Automata, Discrete View of the World,
Wiley-Interscience Series in Disc. Math .\& Optimazation, ISBN:
978-0-470-16879-0, (2008) John Wiley \& Sons Press.}

{\small [45] P. E. T. Jorgensen, and M. Song, Entropy Encoding, Hilbert
Spaces, and Kahunen-Loeve Transforms, JMP, 48, no. 10, (2007)}

{\small [46] P. E. T. Jorgensen, L. M. Schmitt, and R. F. Werner, }$q$%
{\small -Canonical Commutation Relations and Stability of the Cuntz Algebra,
Pac. J. of Math., 165, no. 1, (1994) 131 - 151.}

{\small [47] A. Gill, Introduction to the Theory of Finite-State Machines,
MR0209083 (34\TEXTsymbol{\backslash}\#8891), (1962) McGraw-Hill Book Co.}

{\small [49] F. Radulescu, Random Matrices, Amalgamated Free Products and
Subfactors of the $C^{*}$- Algebra of a Free Group, of Noninteger Index,
Invent. Math., 115, (1994) 347 - 389.}

{\small [50] J. E. Hutchison, Fractals and Self-Similarity, Indiana Univ.
Math. J., vol. 30, (1981) 713 - 747.}

{\small [51] E. Schrodinger, What is Life? The Physical Aspect of the Living
Cell, Cambridge Univ. Press, (1944) Cambridge.}
\end{quote}

\end{document}